\documentclass[11pt]{article}
\usepackage{amsthm}
\usepackage{amsfonts}
\usepackage{amsmath}
\usepackage{amssymb}
\usepackage{colonequals}
\usepackage{epsfig}
\usepackage[utf8]{inputenc}

\newtheorem{theorem}{Theorem}

\newtheorem{lemma}[theorem]{Lemma}

\newcommand{\tw}{\text{tw}}
\newcommand{\ins}{\text{ins}}
\newcommand{\dom}{\text{dom}}
\newcommand{\TT}{{\cal T}}
\newcommand{\OO}{{\cal O}}

\newcommand{\R}{\text{Rad}}
\newcommand{\rad}{\text{rad}}
\newcommand{\atom}{\text{Atom}}
\newcommand{\id}{\text{id}}
\newcommand{\repr}{\text{repr}}

\title{List-coloring apex-minor-free graphs}
\author{Zdeněk Dvořák\thanks{Computer Science Institute, Charles University, Prague, Czech Republic. E-mail: {\tt rakdver@iuuk.mff.cuni.cz}.
Supported by the Center of Excellence -- Inst. for Theor. Comp. Sci., Prague (project P202/12/G061 of Czech Science Foundation), and
by project LH12095 (New combinatorial algorithms - decompositions, parameterization, efficient solutions) of Czech Ministry of Education.}\and
Robin Thomas\thanks{Georgia Institute of Technology, Atlanta, GA, USA. E-mail: {\tt thomas@math.gatech.edu}. Partially supported by NSF under Grant No.~DMS-1202640.}}
\date{}

\begin{document}
\maketitle

\begin{abstract}
A graph $H$ is \emph{$t$-apex} if $H-X$ is planar for some set $X\subset V(H)$ of size $t$.
For any integer $t\ge 0$ and a fixed $t$-apex graph $H$, we give a polynomial-time algorithm to decide whether
a $(t+3)$-connected $H$-minor-free graph is colorable from a given assignment of lists
of size $t+4$.  The connectivity requirement is the best possible in the sense that
for every $t\ge 1$, there exists a $t$-apex graph $H$ such that
testing $(t+4)$-colorability of $(t+2)$-connected $H$-minor-free graphs is NP-complete.
Similarly, the size of the lists cannot be decreased (unless $\text{P}=\text{NP}$), since for every $t\ge 1$, testing
$(t+3)$-list-colorability of $(t+3)$-connected $K_{t+4}$-minor-free graphs is NP-complete.
\end{abstract}

All graphs considered in this paper are finite and simple.  Let $G$ be a graph.  A function $L$ which assigns a set of colors to each vertex of $G$ is called a \emph{list assignment}.
An \emph{$L$-coloring} $\phi$ of $G$ is a function such that $\phi(v)\in L(v)$ for each $v\in V(G)$ and such that
$\phi(u)\neq\phi(v)$ for each edge $uv\in E(G)$.  For an integer $k$, we say that $L$ is a \emph{$k$-list assignment}
if $|L(v)|=k$ for every $v\in V(G)$, and $L$ is a \emph{$(\ge\!k)$-list assignment} if $|L(v)|\ge k$ for every $v\in V(G)$.

The concept of list coloring was introduced by Vizing~\cite{vizing1976} and Erd\H{o}s et al.~\cite{erdosrubintaylor1979}.
Clearly, list coloring generalizes ordinary proper coloring; a graph has chromatic number at most $k$ if and only if
it can be $L$-colored for the $k$-list assignment which assigns the same list to each vertex.  Consequently, the computational problem of
deciding whether a graph can be colored from a given $k$-list assignment is NP-complete for every $k\ge 3$~\cite{garey1979computers}
(while for $k\le 2$, it is polynomial-time decidable~\cite{erdosrubintaylor1979}).
Let this problem be denoted by $k$-LC.

The complexity of the $k$-LC problem motivates a study of restrictions which ensure that it becomes polynomial-time decidable.
Thomassen~\cite{thomassen1994} proved that every planar graph can be colored from any $(\ge\!5)$-list assignment,
thus showing that $k$-LC is polynomial-time decidable for planar graphs for any $k\ge 5$.  On the other hand,
$k$-LC is NP-complete even for planar graphs for $k\in\{3,4\}$, see~\cite{chooscomplex}.
More generally, for any fixed surface $\Sigma$, the problem $k$-LC with $k\ge 5$ is polynomial-time decidable for
graphs embedded in $\Sigma$~\cite{dvokawalg,lukethe}.  Let us remark that for ordinary coloring,
it is not known whether there exists a polynomial-time algorithm deciding whether a graph embedded in a fixed
surface other than the sphere is $4$-colorable (while all graphs embedded in the sphere are $4$-colorable~\cite{AppHak1,AppHakKoc,rsst}).

In this paper, we study a further generalization of this problem---deciding $k$-LC for graphs from a fixed proper minor-closed family.
Each such family is determined by a finite list of forbidden minors~\cite{rs20}, and thus we consider the complexity
of $k$-LC for graphs avoiding a fixed graph $H$ as a minor.  
A graph $H$ is \emph{$t$-apex} if $H-X$ is planar for some set $X\subset V(H)$ of size $t$.
Our main result is the following.

\begin{theorem}\label{thm-mainalg}
Let $H$ be a $t$-apex graph and let $b\ge t+4$ be an integer. There exists a polynomial-time algorithm that,
given a $(t+3)$-connected $H$-minor-free graph $G$ and an assignment $L$ of lists of size at least $t+4$ and at most $b$ to vertices of $G$,
decides whether $G$ is $L$-colorable.
\end{theorem}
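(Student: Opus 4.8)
The plan is to use the structure theorem for graphs with no minor isomorphic to a fixed apex graph to reduce $L$-colorability, via dynamic programming over a tree decomposition, to $(\ge\!5)$-list-coloring of graphs drawn on a fixed surface, which is polynomial-time decidable~\cite{dvokawalg,lukethe}. First dispose of $t=0$: then $H$ is planar, every $H$-minor-free graph $G$ has $\tw(G)\le w(H)$ for a constant $w(H)$, and $L$-colorability is decided by the usual dynamic program over a tree decomposition of bounded width, where the table at a bag records its at most $b^{\,w(H)+1}$ $L$-colorings. Assume henceforth $t\ge 1$.

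Step 1 (structure). I would invoke the algorithmic form of the structure theorem to produce, in polynomial time, a tree decomposition $(T,\beta)$ of $G$ of bounded adhesion with near-embedded torsos, and then, using that $H$ is $t$-apex, reduce the number of apices in each torso to at most $t-1$: if $t$ vertices were each adjacent to all of a sufficiently large grid subgraph of a torso, then that torso --- hence $G$ --- would contain $H$ as a minor. Thus each torso $G_x$ acquires a set $A_x$ of at most $t-1$ \emph{apex} vertices such that $G_x-A_x$ is drawn in a surface of bounded genus with boundedly many vortices of bounded width. The hypothesis that $G$ is $(t+3)$-connected is then used to arrange the decomposition so that, within each torso, the adhesion sets toward the rest of the decomposition are controlled by the near-embedding (they are carried by $A_x$ together with a bounded part of the drawing), which is what keeps the subproblems below tractable; the $(t+2)$-connected NP-hardness examples show that some such control is genuinely needed.

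Step 2 (dynamic programming). Root $T$ and process bags bottom-up. At a bag $x$ with adhesion set $S_x$ to its parent, for each of the at most $b^{\,|S_x|}=O(1)$ $L$-colorings $\psi$ of $S_x$ I decide, using the already computed answers for the children of $x$, whether the subgraph induced by $\beta(x)$ and all its descendants admits an $L$-coloring agreeing with $\psi$. Each such decision reduces to the following core problem: list-color a near-embedded graph (bounded genus, at most $t-1$ apices, boundedly many bounded-width vortices) in which only a bounded number of vertices --- the apices together with the vertices of the relevant adhesion sets --- carry reduced lists. Branch over the at most $b^{\,t-1}$ colorings of the $\le t-1$ apices; after deleting a colored apex set, every remaining vertex keeps a list of size at least $(t+4)-(t-1)=5$, while the bounded set of adhesion vertices may keep small lists located on a bounded part of the embedding. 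Absorbing the bounded-width vortices by a dynamic program along their spines, what remains is $(\ge\!5)$-list-coloring of a graph on a fixed surface with boundedly many precolored or small-list vertices, which is polynomial by~\cite{dvokawalg,lukethe}. All enumerations above are of constant size, so the whole algorithm runs in polynomial time.

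I expect Step 1 to be the main obstacle --- establishing the structural statement in the precise form needed (few apices, and adhesion sets positioned compatibly with the near-embeddings, both exploiting $(t+3)$-connectivity) and making it algorithmic --- while the reduction of the per-bag problem to surface list-coloring, in particular the treatment of the vortices and of the precolored adhesion vertices, is the main secondary difficulty. The upper bound $b$ on list sizes is needed only so that the enumerations of colorings of adhesion sets and of apex sets stay of polynomial (indeed constant) size.
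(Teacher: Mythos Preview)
Your plan---structure theorem with at most $t-1$ apices touching the embedded part, use $(t+3)$-connectivity to confine adhesion cliques, then dynamic programming over the decomposition---matches the paper's architecture, but the way you propose to solve the per-torso subproblem has a genuine gap. After branching on the apices you are left with an almost-embedded graph: a surface graph $G_0$ together with boundedly many vortices of bounded depth whose (possibly long) boundaries lie in $G_0$. You propose to ``absorb the bounded-width vortices by a dynamic program along their spines'' and then call the surface $(\ge\!5)$-list-coloring algorithms of~\cite{dvokawalg,lukethe} as a black box. But those results apply only to graphs \emph{embedded} in a fixed surface; they do not handle vortices, and the vortex constraints cannot be passed to such an oracle. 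A vortex imposes on its boundary cycle a correlated set of feasible colorings of unbounded length; it cannot be summarized as ``boundedly many precolored or small-list vertices,'' and a DP along the spine cannot be decoupled from coloring $G_0$, since every bag of the vortex contains a boundary vertex whose color is determined only once $G_0$ is colored.

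The paper handles this differently and never invokes a surface-coloring oracle. It uses Postle's bound (Theorem~\ref{thm-lukebnd}, via Lemma~\ref{lemma-logdist}) to justify a \emph{$k$-nest reduction} of the embedded part relative to the set $X$ of vortex-boundary vertices, with $k=\lceil 58\log_2|V(G)|\rceil$: any egg of a $k$-nest can be deleted without changing which colorings of $X$ extend, for every coloring of the $\le t-1$ relevant apices (Lemmas~\ref{lemma-rednest} and~\ref{lemma-rwg}). After this reduction the embedded part has no deep nests and hence tree-width $O(\log|V(G)|)$ (Lemma~\ref{lemma-redwidth}); reattaching the vortices multiplies tree-width by at most their depth (Lemma~\ref{lemma-vortdepth}); adding apices and taking clique-sums preserves the bound. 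A single DP on this $O(\log|V(G)|)$-tree-width graph then decides $L$-colorability, and it is here---not in enumerating apex or adhesion colorings---that the bound $b$ on list sizes is actually needed to keep the running time polynomial. Your structural Step~1 is close in spirit; note, however, that the precise conclusion required is that for non-degenerate pieces the adhesion cliques lie entirely in $A_i\cup(\text{vortices})$ (using that triangles of the embedding bound faces, so a clique of size $\ge 4$ in the embedded part is impossible), not merely in ``a bounded part of the drawing.''
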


Consequently, $k$-LC is polynomial-time decidable for $(t+3)$-connected $t$-apex-minor-free graphs for every $k\ge t+4$.
Note that for every surface $\Sigma$, there exists a $1$-apex graph that cannot be embedded in $\Sigma$.
Hence, Theorem~\ref{thm-mainalg} implies that $5$-LC is polynomial-time decidable for $4$-connected graphs embedded in $\Sigma$,
which is somewhat weaker than the previously mentioned results~\cite{dvokawalg,lukethe}.
However, the constraints on the number of colors and the connectivity cannot be relaxed in general, unless $\text{P}=\text{NP}$.
For the number of colors, we have the following.

\begin{theorem}\label{thm-hard1}
For every integer $t\ge 1$, the problem $(t+3)$-LC is NP-complete for $(t+3)$-connected $K_{t+4}$-minor-free graphs.
\end{theorem}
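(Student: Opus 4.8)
Membership in NP is immediate: a proper $L$-coloring is a certificate of polynomial size, verifiable in polynomial time. So the content is NP-hardness, which I would obtain by a chain of reductions starting from an NP-complete coloring problem on planar graphs; the structural side-conditions ($(t+3)$-connectivity and $K_{t+4}$-minor-freeness) will be handled by a join/apex construction, and the real work will be the list-coloring bookkeeping.

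For the base case $t=1$ one must show that $4$-LC is NP-complete for $4$-connected $K_5$-minor-free graphs. By Wagner's theorem a $4$-connected $K_5$-minor-free graph is either planar or the Wagner graph $V_8$, so it suffices to treat $4$-connected planar graphs. Here I would reduce from $4$-LC for planar graphs, which is NP-complete by~\cite{chooscomplex} (alternatively from $3$-colorability of planar graphs, at the expense of also attaching ``list-reduction'' gadgets built from a vertex-minimal non-$4$-choosable planar graph, whose effect is to make a designated original vertex see a fourth, now forbidden, color). Starting from a hard planar instance $(G_0,L_0)$ with $|L_0(v)|=4$, I would insert, inside the faces of a fixed planar embedding, a sparse planar network of new vertices, each joined to at most three already-present vertices and given a $4$-element list, chosen so that the augmented graph is $4$-connected while, by a greedy argument, every $L_0$-coloring of $G_0$ extends over the new vertices and conversely; hence the augmented graph is $L$-colorable iff $(G_0,L_0)$ is colorable. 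I would also attach inside a face a $K_4$ whose four vertices all carry a common $4$-element list of private colors: this ``permutation socket'' constrains nothing, but will be exploited by the induction.

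For the inductive step, assume the statement for $t-1$ with hard instances $(G',L')$ that additionally carry a permutation socket $Q'$, a clique on $t+2$ vertices all assigned the same $(t+2)$-element list $C'$ of colors private to $Q'$. Put $G:=G'+a$, with $a$ a new vertex joined to all of $V(G')$. Then $G$ is $(t+3)$-connected: deleting $a$ leaves the $(t+2)$-connected graph $G'$, and if $a$ survives it dominates $G-S$ for every $S$ with $|S|\le t+2$. And $G$ is $K_{t+4}$-minor-free: in any model of $K_{t+4}$ at most one branch set meets $a$, so deleting it leaves $t+3$ pairwise adjacent branch sets inside $G'$, a forbidden $K_{t+3}$-model. (Equivalently, the construction for parameter $t$ produces a $4$-connected planar core together with $t-1$ universal vertices; since planar graphs have no $K_5$-minor and adjoining a clique $K_s$ to a graph with no $K_r$-minor yields a graph with no $K_{r+s}$-minor, the core joined with $K_{t-1}$ has no $K_{t+4}$-minor, and the universal vertices together with the highly connected core give $(t+3)$-connectivity.) It remains to choose $L$: extend $L'$ by one fresh color $c$ on every vertex, set $L(a):=C'\cup\{c\}$ (with, if needed, a small extra gadget ensuring the socket vertices cannot use $c$), so that the socket forces $a$ onto $c$; then no vertex of $G'$ can use $c$, the $L$-colorings of $G$ are exactly the $L'$-colorings of $G'$, and $Q'\cup\{a\}$, with its common list $C'\cup\{c\}$, is a permutation socket for $G$, so the induction continues.

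The main obstacle is precisely this last piece of list-coloring bookkeeping around the apex vertices: one must design the list assignment and auxiliary permutation gadgets so that the apex vertices are provably forced to consume exactly the intended colors — leaving each core vertex with precisely its original list — while simultaneously respecting the $(t+3)$-connectivity and $K_{t+4}$-minor-freeness constraints, which tightly restrict how such gadgets may be wired in. Establishing the base case (genuine NP-hardness of $4$-LC for $4$-connected planar graphs, including the connectivity-boosting augmentation and, if one reduces from ordinary $3$-colorability, the non-$4$-choosable list-reduction gadget) is the other technical point; by contrast the minor-freeness and connectivity assertions themselves are routine.
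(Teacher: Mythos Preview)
Your overall strategy matches the paper's: settle the base case $t=1$ (that is, $4$-LC for $4$-connected planar graphs) by reducing from planar $3$-colorability via Gutner's $4$-list-critical planar gadget and a wheel-based triangulation that boosts connectivity, and then pass to general $t$ by adjoining $t-1$ universal vertices, invoking Lemma~\ref{lemma-adda} for $K_{t+4}$-minor-freeness. Your base-case outline (quasiedge replacement, Gutner-style list-reduction gadgets, then connectivity boosting inside faces) is essentially the paper's argument, only phrased more loosely.

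The gap is exactly where you say it is, and it is not closed. In your inductive step the socket $Q'\cup\{a\}$ is a $K_{t+3}$ all of whose vertices carry the common list $C'\cup\{c\}$; any proper coloring of this clique uses every color once, but nothing pins $c$ on $a$ specifically. If instead $\psi(a)=c'\in C'$, then the constraint $a$ imposes on $G'\setminus Q'$ is vacuous (since $c'$ is private to the socket), so those vertices may freely use $c$; worse, the planar core now effectively has $5$-lists and is therefore always colorable by Thomassen's theorem, so the reduction collapses for every $t\ge 2$. The ``small extra gadget ensuring the socket vertices cannot use $c$'' is thus not optional, and you do not construct it. A single forcing vertex adjacent to all of $Q'$ would have degree only $t+2$, destroying $(t+3)$-connectivity; a purely planar gadget cannot force anything from $(\ge 5)$-lists; and wiring in a larger non-planar gadget while staying $K_{t+4}$-minor-free is precisely the difficulty you are trying to finesse. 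The paper is equally terse here---it writes only ``By Lemma~\ref{lemma-adda}, it suffices\ldots'' and then proves the $4$-connected planar base case---so the list-coloring side of the lift from $t=1$ to general $t$ is left implicit in both your proposal and the paper; you have correctly identified the obstacle, but not removed it.
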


Let us remark that $K_{t+4}$ is a $t$-apex graph for every $t\ge 0$.  The following theorem deals with the case that
the connectivity restriction in Theorem~\ref{thm-mainalg} is relaxed.  We say that a graph $G$ is \emph{$(t,c)$-apex-free}
if every $(t+3)$-connected minor of $G$ with at least $t+c+7$ vertices is $(t-1)$-apex.

\begin{theorem}\label{thm-hard2}
For all integers $t\ge 1$ and $c\ge 2$, it is NP-complete to decide whether a $(t+2)$-connected $(t,c)$-apex-free
graph is $(t+c)$-colorable.
\end{theorem}

For instance, consider the case that $c=4$ and let $H$ be the complete join of a $3$-connected planar graph with at least
$11$ vertices with a clique on $t$ vertices (the \emph{complete join} of graphs $G_1$ and $G_2$
is the graph obtained from their disjoint union by adding all edges with one end in $V(G_1)$ and the other end in $V(G_2)$).
Then $H$ has at least $t+11$ vertices, it is $(t+3)$-connected, and it is $t$-apex but not $(t-1)$-apex.
Consequently, every $(t,4)$-apex-free graph is $H$-minor-free.
The previous theorem shows that it is NP-complete to decide whether a $(t+2)$-connected $H$-minor-free graph is $(t+4)$-colorable,
in contrast with Theorem~\ref{thm-mainalg}.

Furthermore, note that forbidding a $0$-apex (i.e., planar) graph
as a minor ensures bounded tree-width~\cite{RSey}, and thus $k$-LC is polynomial-time decidable for graphs avoiding a $0$-apex
graph as a minor, for every $k\ge 1$.

In the following section, we design the algorithm of Theorem~\ref{thm-mainalg}.
The algorithm uses a variant of the structure theorem for graphs avoiding a $t$-apex minor;
although it is well known among the graph minor research community, we are not aware of its published proof,
and give it in Appendix for completeness.
The hardness results (Theorems~\ref{thm-hard1} and \ref{thm-hard2}) are proved
in Section~\ref{sec-complex}.

\section{Algorithm}\label{sec-alg}

Let $G$ be a graph with a list assignment $L$ and let $X\subseteq V(G)$ be a set of vertices. We let $\Phi(G,L,X)$ denote the set of restrictions of $L$-colorings of $G$ to $X$.  In other words,
$\Phi(G,L,X)$ is the set of $L$-colorings of $X$ which extend to an $L$-coloring of $G$.
We say that $G$ is \emph{critical with respect to $L$} if $G$ is not $L$-colorable, but every proper subgraph of $G$ is $L$-colorable.
The graph $G$ is \emph{$X$-critical with respect to $L$} if $\Phi(G,L,X)\neq \Phi(G',L,X)$ for every proper subgraph $G'$ of $G$ such that $X\subseteq V(G')$.
Thus, removing any part of an $X$-critical graph affects which colorings of $X$ extend to the whole graph.
Postle~\cite{lukethe} gave the following bound on the size of critical graphs.  A \emph{closed disk} is a set homeomorphic to $\{(x,y):x^2+y^2\le 1\}$, and an \emph{open disk} is a set homeomorphic to $\{(x,y):x^2+y^2<1\}$.

\begin{theorem}[{Postle~\cite[Lemma 3.6.1]{lukethe}}]\label{thm-lukebnd}
Let $G$ be a graph embedded in a closed disk, let $L$ be a $(\ge\!5)$-list assignment for $G$, and
let $X$ be the set of vertices of $G$ drawn in the boundary of the disk.
If $G$ is $X$-critical with respect to $L$, then $|V(G)|\le 29|X|$.
\end{theorem}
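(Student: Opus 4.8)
The plan is to prove this linear bound by the discharging method that Thomassen developed for $5$-list-coloring, exploiting the fact that $X$-criticality forces the obstruction to an $L$-coloring to propagate inward from the boundary. By standard reductions one may assume that $G$ is a $2$-connected near-triangulation of the disk whose outer cycle $C$ satisfies $X=V(C)$. Two reducible configurations are then immediate from $|L(v)|\ge 5$ and $X$-criticality: if $v\in V(G)\setminus X$ has $\deg(v)\le 4$ then every $L$-coloring of $G-v$ extends to $v$, so $\Phi(G,L,X)=\Phi(G-v,L,X)$, a contradiction, whence every internal vertex has degree at least $5$; and if $T$ is a separating triangle then, by Thomassen's theorem (applied after deleting an interior vertex and removing its colour from its neighbours' lists), the near-triangulation formed by $T$ together with its interior is $L$-colorable extending \emph{every} precoloring of $T$, so deleting the interior of $T$ does not change $\Phi(G,L,X)$, whence $G$ has no separating triangle. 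Iterating this idea --- exhibiting a small subgraph that a Thomassen-type extension lemma (for a precolored path of length at most $2$, or for a precolored triangle, with all other relevant lists of size $\ge 5$) can always recolor --- one rules out a finite list of further configurations, notably configurations in which internal degree-$5$ vertices accumulate without being ``braced'' by vertices of larger degree or by $X$.

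For the count, assign to each vertex $v$ the charge $\deg(v)-6$; since all internal faces are triangles, Euler's formula for the disk gives $\sum_{v}\bigl(\deg(v)-6\bigr)=-2|X|-6$, so all the negative charge is forced to live near $C$. I would then design discharging rules that move charge from the vertices of $X$ and from internal vertices of degree greater than $5$ toward the internal vertices of degree exactly $5$, so that after discharging every internal vertex has charge at least a fixed $\varepsilon>0$ while every vertex of $X$ gives away at most a fixed amount $c$. Comparing the total charge before and after then yields $\varepsilon\bigl(|V(G)|-|X|\bigr)\le(c+2)|X|$, and arranging the rules so that $(c+2)/\varepsilon\le 28$ gives $|V(G)|\le 29|X|$.

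The crux --- and the reason this is a genuine theorem rather than a routine Euler-formula estimate --- is the discharging design: minimum degree $5$ is by itself worthless here, since a planar triangulation already has average degree close to $6$ and no large faces available to redistribute charge, so the argument must truly use that an $X$-critical graph cannot contain a wide region of internal degree-$5$ vertices. Producing a family of reducible configurations that is simultaneously correct (each one eliminable by an explicit extension lemma) and abundant enough to drive the discharging with the constant $29$ is precisely the content of Postle's Lemma~3.6.1. The cleanest packaging is through the linear-isoperimetric (``hyperbolic families'') framework of Postle and Thomas: one shows that the family of $X$-critical near-triangulations of the disk with $(\ge\!5)$-list assignments satisfies a linear isoperimetric inequality --- in essence, any portion of $G$ enclosed by a short cycle would, by a Thomassen-type theorem, be replaceable by something smaller, contradicting criticality --- from which a bound $|V(G)|\le\alpha|X|$ with a universal constant $\alpha$ follows, and a careful accounting of that constant brings $\alpha$ down to $29$.
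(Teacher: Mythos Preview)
The paper does not prove this statement at all: Theorem~\ref{thm-lukebnd} is quoted verbatim from Postle's thesis and used as a black box (note the attribution ``Lemma~3.6.1 in Postle'' and the absence of any \texttt{proof} environment). So there is no ``paper's own proof'' to compare your attempt against; the authors simply import the result.

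As for your sketch on its own merits: the overall architecture you describe---reduce to a $2$-connected near-triangulation, observe $\deg(v)\ge 5$ for internal $v$, kill separating triangles via a Thomassen-type extension, then discharge---is indeed the shape of Postle's argument, and your Euler computation $\sum_v(\deg(v)-6)=-2|X|-6$ is correct. But what you have written is a plan, not a proof. The entire difficulty sits in the two places you wave past: (i) the explicit list of reducible configurations beyond ``internal degree $\le 4$'' and ``separating triangle'', together with the extension lemmas that kill each one, and (ii) the actual discharging rules and the verification that they yield $\varepsilon>0$ on every internal vertex with $(c+2)/\varepsilon\le 28$. You say as much yourself when you write that producing this family ``is precisely the content of Postle's Lemma~3.6.1''---that sentence makes the argument circular, since Lemma~3.6.1 \emph{is} the theorem you are meant to be proving. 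Similarly, invoking the Postle--Thomas hyperbolic-families machinery is not an alternative shortcut: establishing the requisite linear isoperimetric inequality for this family with the right constant is again exactly the work done in Postle's thesis. So your proposal correctly identifies the method but does not supply any of the content that makes it go through.
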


Theorem~\ref{thm-lukebnd} has several surprising corollaries; in particular, it makes it possible to test whether a precoloring
of an arbitrary connected subgraph (of unbounded size) extends to a coloring of a graph embedded in a fixed surface from lists of size five~\cite{dvokawalg}.
Based on somewhat similar ideas, we use it here to deal with a restricted case of list-coloring graphs from a proper minor-closed class.
The algorithm uses the structure theorem of Robertson and Seymour~\cite{robertson2003graph}, which asserts that each graph from a proper
minor-closed class can be obtained by clique-sums from graphs which are ``almost'' embedded in a surface of bounded genus, up to vortices and apices.

\subsection{Embedded graphs}
As the first step, let us consider graphs that can be drawn in a fixed surface.
A fundamental consequence of Theorem~\ref{thm-lukebnd} is that actually all the vertices of $G$ are at distance $O(\log |X|)$ from $X$.

\begin{lemma}[{Postle~\cite[Theorem 3.6.3]{lukethe}}]\label{lemma-logdist}
Let $G$ be a graph embedded in a closed disk, let $L$ be a $(\ge\!5)$-list assignment for $G$, and let $X$ be the set of vertices of $G$ drawn in the boundary of the disk.
If $G$ is $X$-critical with respect to $L$, then every vertex of $G$ is at distance at most $58\log_2|X|$ from $X$.
\end{lemma}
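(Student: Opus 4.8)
The plan is to iterate Theorem~\ref{thm-lukebnd} along the breadth-first-search layering of $G$ from $X$. The heart of the matter is that if we cut $G$ along the first BFS layer on the side of a far vertex, the piece cut off is again an $X$-critical disk-embedded graph whose boundary lies in that layer; so $G$ ``reproduces itself'' one layer in, and since (by Theorem~\ref{thm-lukebnd}) each such piece is a constant fraction smaller than its predecessor, only $O(\log|X|)$ layers can exist.

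I would first reduce to the case $X\neq\emptyset$: a component of $G$ disjoint from $X$ is planar with lists of size at least $5$, hence $L$-colorable by Thomassen's theorem, hence could be deleted without changing $\Phi(G,L,X)$, contradicting $X$-criticality; if $X=\emptyset$ this forces $G$ to have no vertices and the statement is vacuous. The same observation shows every component of $G$ meets $X$, so distances from $X$ are finite. Writing $L_i$ for the set of vertices at distance exactly $i$ from $X$ (so $L_0=X$), it suffices to prove by induction on $d$ that every disk-embedded graph with a $(\ge\!5)$-list assignment, with $X$ its set of boundary vertices, that is $X$-critical and has a vertex at distance at least $d$ from $X$, has at least $(29/28)^{d}$ vertices. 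Granting this, Theorem~\ref{thm-lukebnd} gives $(29/28)^{d}\le|V(G)|\le 29|X|$, so $d\le\log_{29/28}(29|X|)$, which is at most $58\log_2|X|$ once $|X|$ exceeds a small absolute constant; the remaining finitely many values of $|X|$ are dispatched directly from $|V(G)|\le 29|X|$.

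For the inductive step I may assume $d\ge 2$, the cases $d\le 1$ being immediate (then already $|V(G)|\ge 2>(29/28)^{d}$, resp.\ $|V(G)|\ge 1=(29/28)^{0}$). Fix a vertex $u$ at distance $d$ from $X$. Using the standard topology of BFS layers in a disk-embedded graph, I would enclose $u$ in the interior of a closed sub-disk $\Delta'$ of the disk carrying $G$, chosen so that no edge of $G$ crosses the boundary of $\Delta'$ and every vertex of $G$ on the boundary of $\Delta'$ lies in $L_1$; informally $\Delta'$ is the region cut off by the first BFS layer on the side containing $u$. Let $G'$ be the part of $G$ drawn in $\Delta'$ and let $X'$ be its set of boundary vertices, so $X'\subseteq L_1$, the set $X$ is disjoint from $V(G')$, and $u\in V(G')$ lies at distance at least $d-1$ from $X'$ in $G'$ (each vertex of $X'$ being at distance $1$ from $X$ in $G$). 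The key point is that $G'$ inherits criticality: since $X'$ separates $V(G')\setminus X'$ from the rest of $G$, which colorings of $X$ extend to $G$ depends on $G'$ only through the set $\Phi(G',L,X')$, so an interior vertex or edge of $G'$ whose deletion leaves $\Phi(G',L,X')$ unchanged could be deleted from $G$ as well without changing $\Phi(G,L,X)$, contrary to $X$-criticality of $G$. Deleting from $G'$ any redundant edges with both ends in $X'$ — which changes neither $V(G')$ nor the set of boundary vertices — we obtain a genuinely $X'$-critical disk-embedded graph on the vertex set of $G'$, and the induction hypothesis applies to it.

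To conclude, since $X$ lies outside $\Delta'$ we have $|V(G)|\ge |V(G')|+|X|$; combining this with $|V(G)|\le 29|X|$ (Theorem~\ref{thm-lukebnd}) yields $|X|\ge|V(G')|/28$ and hence $|V(G)|\ge\tfrac{29}{28}|V(G')|$. The induction hypothesis, applied to (the tidied) $G'$ with its vertex $u$ at distance at least $d-1$ from the boundary, gives $|V(G')|\ge(29/28)^{d-1}$, so $|V(G)|\ge(29/28)^{d}$, completing the induction. I expect the one genuinely delicate step to be the topological construction of $\Delta'$: one has to verify that cutting along the first BFS layer around $u$ really leaves a disk-embedded graph, which needs care when $G[L_0\cup L_1]$ is disconnected — there one uses that $X$ lies on the boundary circle, so that $G[L_0\cup L_1]$ together with the boundary circle is connected and its complement in the disk is a union of open disks. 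The remaining ingredients — the criticality transfer above and the numerical optimization of the constant (including the small values of $|X|$) — are routine.
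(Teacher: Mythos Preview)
The paper does not supply its own proof of this lemma; it is quoted as Theorem~3.6.3 of Postle's thesis and used as a black box. Your approach---iterating Theorem~\ref{thm-lukebnd} down the BFS layers so that the ``remaining'' critical piece shrinks by a factor $28/29$ at each step---is the standard way to pass from a linear bound $|V(G)|\le c|X|$ to a logarithmic distance bound, and is almost certainly what Postle does. The criticality-transfer step is fine (indeed no tidying is needed: an edge of $G'$ with both ends in $X'$ is still an edge of $G$, and deleting it would change $\Phi(G,L,X)$, hence $\Phi(G',L,X')$).

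Two points deserve a closer look. First, your topological hint does not quite finish the job. It is true that $K:=G[L_0\cup L_1]\cup\partial D$ is connected and $D\setminus K$ is a union of open disks, but the \emph{closure} of the disk $U$ containing $u$ may pick up vertices of $X$: any $x\in X$ all of whose incident edges lie in $K$ can sit on $\partial U$ (e.g.\ the path $x\text{--}y\text{--}u$, where $\overline U=D$). So you cannot take $\Delta'=\overline U$. What is needed is the noose lemma: since $L_1$ separates $u$ from $X$ in the plane graph, there is a simple closed $G$-normal curve meeting $G$ only in a subset of $L_1$ and separating $u$ from $\partial D$; the disk it bounds is your $\Delta'$.

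Second, the constant does not come out to $58$. Your inequality $(29/28)^d\le 29|X|$ gives $d\le\log_{29/28}(29|X|)\approx 96+19.8\log_2|X|$, which drops below $58\log_2|X|$ only for $|X|\ge 6$; the crude bound $d\le 29|X|-1$ covers $|X|\le 4$; but $|X|=5$ lies in neither range (you get $d\lesssim 140$ versus the required $134$). So as written you prove the lemma with some constant larger than $58$, which is entirely adequate for every use in this paper but does not establish the statement exactly as quoted.
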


Hence, removing vertices sufficiently distant from $X$ cannot affect which precolorings of $X$ extend.
A \emph{$k$-nest} in a graph $G$ embedded in a surface, with respect to a set $X\subseteq V(G)$,
is a set $\Delta_0\supset\Delta_1\supset\ldots\supset \Delta_k$ of closed disks in $\Sigma$ bounded by pairwise vertex-disjoint cycles of $G$
such that no vertex of $X$ is drawn in the interior of $\Delta_0$ and at least one vertex of $G$ is drawn in the interior of $\Delta_k$.
The vertices drawn in the interior of $\Delta_k$ are called \emph{eggs}.

\begin{lemma}\label{lemma-rednest}
Let $G$ be a graph embedded in a surface $\Sigma$, let $L$ be a $(\ge\!5)$-list assignment for $G$, let $X$ be a subset of $V(G)$ and let $\Delta_0\supset\Delta_1\supset\ldots\supset \Delta_k$ be 
a $k$-nest in $G$ with respect to $X$.  If $k\ge 58\log_2|V(G)|$ and $v$ is an egg, then $\Phi(G,L,X)=\Phi(G-v,L,X)$.
\end{lemma}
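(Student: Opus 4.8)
The plan is to argue by contradiction: suppose $\Phi(G,L,X)\neq\Phi(G-v,L,X)$. Since deleting $v$ can only add colorings of $X$ that extend, there is an $L$-coloring $\psi$ of $X$ that extends to $G-v$ but not to $G$. Take a minimal subgraph witnessing this: more precisely, pass to a subgraph $G'\subseteq G$ with $X\cup\{v\}\subseteq V(G')$ that is $(X\cup\{v\})$-critical with respect to $L$ and still has $\psi\notin\Phi(G',L,X)$ while $\psi\in\Phi(G'-v,L,X)$. The point of this reduction is that $G'$ is now a critical graph in Postle's sense with respect to the enlarged boundary set $X\cup\{v\}$, so the distance bound of Lemma~\ref{lemma-logdist} applies to it once we are in a disk.

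Next I would use the nest to carve out a disk. The cycle $C_k=\bd\Delta_k$ is vertex-disjoint from $X$ and from all the other bounding cycles; in particular, after the reduction above, we may assume $v$ lies outside $\Delta_k$ (if $v$ is itself an egg we re-choose, using that $|V(G)|$ is large enough that some egg other than a single distinguished one exists — or more carefully, we keep $v$ as an egg but then note $v$ is in the interior of $\Delta_k$ and separated from $X\cup(\{v\}\setminus\{v\})$, handled below). Let $G_k$ be the subgraph of $G'$ drawn in the closed disk $\Delta_k$; it is embedded in a disk with $C_k$ on the boundary, and no vertex of $X$ (nor $v$, in the generic case) lies in $\Delta_k$. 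Now $G_k$ is $V(C_k)$-critical with respect to $L$: any edge or vertex of $G_k$ not forced would, together with the rest of $G'$, contradict the $(X\cup\{v\})$-criticality of $G'$, since colorings of $X\cup\{v\}$ that extend to $G'$ are exactly those whose restriction to $V(C_k)$ lies in $\Phi(G_k,L,V(C_k))$ and which extend over $G'-(\ins\Delta_k)$. Applying Lemma~\ref{lemma-logdist} to $G_k$, every vertex of $G_k$ — in particular every egg — is within distance $58\log_2|V(C_k)|\le 58\log_2|V(G)|$ of $C_k$.

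The main obstacle is converting "distance $\le 58\log_2|V(G)|$ from $C_k=\bd\Delta_k$" into a contradiction with the existence of the full nest of depth $k\ge 58\log_2|V(G)|$. Here I would use the nested cycles $C_0\supset C_1\supset\cdots\supset C_{k-1}$ lying outside $\Delta_k$: an egg $u$ in the interior of $\Delta_k$ is separated from $C_k$'s exterior, and more importantly any path in $G$ from $u$ to $C_k$ must be short, but a path from $u$ that reaches $X$ (which is what criticality of the original large graph ultimately needs to certify that changing $u$ matters) must cross each of $C_0,\dots,C_{k-1}$, hence has length at least $k$; combining, the relevant critical structure around $u$ cannot reach past the nest, so $u$ is irrelevant. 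Concretely: in $G'$, the vertex $v$ (playing the role of the extra boundary vertex) together with $X$ lies outside $\Delta_{k-1}\supset\cdots$, so every path from an egg to $X\cup\{v\}$ has length $\ge k>58\log_2|V(G)|$, whereas criticality via Lemma~\ref{lemma-logdist} applied to the disk $\Delta_0$ (with boundary $C_0$ and $X\cup\{v\}$ treated as interior obstructions — this requires the standard trick of adding an apex vertex adjacent to $X\cup\{v\}$ and to $C_0$, or cutting along $C_0$) forces every vertex, including the egg, within logarithmic distance of the boundary-plus-precolored set. This is the contradiction. The delicate points are (i) handling the case where $v$ is itself an egg, which I expect to dispose of by noting the $k$-nest guarantees at least one egg and re-running the argument with that egg in place of $v$ if necessary, or by observing directly that an egg is then separated from $X$ by $k$ disjoint cycles; and (ii) justifying the reduction to a planar/disk situation so that Lemma~\ref{lemma-logdist} is literally applicable, which is the usual "cut the surface open along $C_0$" maneuver.
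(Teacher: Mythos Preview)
Your proposal has a genuine gap, and it stems from a single wrong choice: you make $G'$ be $(X\cup\{v\})$-critical rather than $X$-critical. Once $v$ belongs to the boundary set, the distance bound of Lemma~\ref{lemma-logdist} is vacuous for $v$ itself (it sits at distance $0$ from $X\cup\{v\}$), and you never recover. Your parenthetical fixes---``we may assume $v$ lies outside $\Delta_k$'', ``if $v$ is itself an egg we re-choose'', ``every path from an egg to $X\cup\{v\}$ has length $\ge k$''---all contradict the hypothesis: $v$ \emph{is} an egg, so $v$ lies in the interior of $\Delta_k$, and the distance from $v$ to $X\cup\{v\}$ is zero, not $\ge k$. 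There is no ``case where $v$ is itself an egg'' to dispose of; that is the only case.

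The paper's argument avoids this by never putting $v$ in the boundary set. Take $G'$ to be a minimal subgraph of $G$ with $X\subseteq V(G')$ and $\Phi(G',L,X)=\Phi(G,L,X)$; then $G'$ is $X$-critical. Any path in $G$ from $v$ to $X$ must cross the $k+1$ pairwise vertex-disjoint cycles bounding $\Delta_0,\ldots,\Delta_k$, so $v$ has distance greater than $k$ from $X$ in $G$, hence in any subgraph. Lemma~\ref{lemma-logdist} forces every vertex of the $X$-critical graph $G'$ to lie within distance $58\log_2|X|\le k$ of $X$. Therefore $v\notin V(G')$, so $G'\subseteq G-v$ and $\Phi(G-v,L,X)\subseteq\Phi(G',L,X)=\Phi(G,L,X)$, which is all that was needed. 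Your instinct in point~(ii)---that one must restrict to the disk $\Delta_0$ to invoke Lemma~\ref{lemma-logdist} literally---is reasonable and is the right way to make the step precise (apply it to the portion of $G'$ drawn in $\Delta_0$, which is $(V(G')\cap V(C_0))$-critical); but that is a routine detail, not the obstacle.
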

\begin{proof}
Clearly, $\Phi(G-v,L,X)\supseteq \Phi(G,L,X)$, hence it suffices to show that $\Phi(G-v,L,X)\subseteq \Phi(G,L,X)$.

Let $G_1$ be the subgraph of $G$ drawn in $\Delta_0$ and let $Y$ be the set of vertices of $G$ contained in the boundary of $\Delta_0$.
Let $G'_1$ be a minimal subgraph of $G_1$ such that $Y\subseteq V(G'_1)$ and $\Phi(G'_1,L,Y)=\Phi(G_1,L,Y)$.
Observe that $G'_1$ is $Y$-critical (with respect to $L$).  Since $v$ is an egg of the $k$-nest, its distance
from $X$ in $G_1$ is greater than $k$.  On the other hand, all vertices of $G'_1$ are at distance at most $k$ from $Y$
by Lemma~\ref{lemma-logdist}.  Therefore, $v\not\in V(G'_1)$.
It follows that $G'_1\subseteq G_1-v$, and thus $\Phi(G_1-v,L,Y)\subseteq \Phi(G'_1,L,Y)=\Phi(G_1,L,Y)$.

Let $\phi$ be an $L$-coloring of $G-v$ and let $\phi_0\in \Phi(G-v,L,X)$ be its restriction to $X$.
Let $\phi_1\in \Phi(G_1,L,Y)$ be the restriction of $\phi$ to $Y$.
Since $\Phi(G_1-v,L,Y)=\Phi(G_1,L,Y)$, it follows that $\phi_1$ belongs to $\Phi(G_1,L,Y)$, i.e., $\phi_1$ extends to an $L$-coloring $\psi_1$ of $G_1$.
Note that $G$ has no edges between the vertices in the interior of $\Delta_0$ and those in $\Sigma\setminus\Delta_0$.
Hence, we can obtain an $L$-coloring $\psi$ of $G$ by setting $\psi(v)=\psi_1(v)$ for $v\in V(G_1)$ and $\psi(v)=\phi(v)$ for $v\in V(G)\setminus V(G_1)$.
The restriction of $\psi$ to $X$ is equal to $\phi_0$, showing that $\phi_0\in \Phi(G,L,X)$.  As the choice of $\phi_0\in \Phi(G-v,L,X)$ was arbitrary,
it follows that $\Phi(G-v,L,X)\subseteq \Phi(G,L,X)$.
\end{proof}

Therefore, we can remove vertices within deeply nested cycles without affecting which colorings of $X$ extend.
This is sufficient to restrict tree-width, as we show below.
We use the result of Geelen at al.~\cite{GRS} regarding existence of planarly embedded subgrids in grids on surfaces.
For integers $a,b\ge 2$, an \emph{$a\times b$ grid} is the Cartesian product of a path with $a$ vertices with a path with $b$ vertices.
An embedding of a grid $G$ in a closed disk is \emph{canonical} if the outer cycle of the grid forms the boundary of the disk.

\begin{lemma}[Geelen at al.~\cite{GRS}]\label{lemma-plangrid}
Let $g\ge 0$ and $r,s\ge 2$ be integers satisfying $s\le r/\lceil\sqrt{g+1}\rceil-1$.
If $H$ is an $r\times r$ grid embedded in a surface $\Sigma$ of Euler genus $g$,
then an $s\times s$ subgrid $H'$ of $H$ is canonically embedded in a closed disk $\Delta\subseteq\Sigma$.
\end{lemma}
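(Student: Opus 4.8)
Set $k=\lceil\sqrt{g+1}\rceil$, so that $k^2\ge g+1$. The plan is to localise the non-planarity of the embedding of $H$ in $\Sigma$ to at most $g$ ``bad'' pieces and extract the desired subgrid from a ``good'' one. Since $s+1\le r/k$, the $r\times r$ grid $H$ contains $k^2$ pairwise vertex-disjoint $(s+1)\times(s+1)$ subgrids $B_{i,j}$, $1\le i,j\le k$, arranged in a $k\times k$ pattern (with $B_{i,j}$ occupying rows $[(i-1)(s+1)+1,\,i(s+1)]$ and the analogous columns); moreover, for each pair of neighbouring blocks $H$ contains $s+1$ pairwise vertex-disjoint paths joining them and internally disjoint from every block. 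In each $B_{i,j}$ fix a corner $s\times s$ subgrid $H_{i,j}$, one sharing two of its sides with the outer cycle $C_{i,j}$ of $B_{i,j}$. First I would reduce the statement to: \emph{for some $(i,j)$ the cycle $C_{i,j}$ bounds a closed disk $\Delta\subseteq\Sigma$ with $B_{i,j}\subseteq\Delta$}. Indeed, given such a $\Delta$, the plane graph $B_{i,j}$ is drawn inside $\Delta$ with $C_{i,j}=\bd\Delta$; the outer cycle of $H_{i,j}$ then runs along two sides of $C_{i,j}$ and two chords of $\Delta$, hence bounds a closed sub-disk $\Delta'\subseteq\Delta$, and since the remaining vertices of $H_{i,j}$ lie on the interior side of this cycle (by planarity of $B_{i,j}$), $H_{i,j}$ is canonically embedded in $\Delta'$. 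Call $B_{i,j}$ \emph{flat} if some closed disk bounded by $C_{i,j}$ contains $B_{i,j}$; it thus suffices to produce one flat block.

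The heart of the argument is the quantitative claim that \emph{at most $g$ of the $k^2$ blocks are non-flat}; since $k^2\ge g+1$, this yields a flat block and finishes the proof. To see why such a bound should hold, note first that if $B_{i,j}$ is not flat then it is not drawn inside any closed disk, and so, by the standard fact that a connected planar graph all of whose cycles are contractible in the host surface is drawn in a disk, $B_{i,j}$ carries a non-contractible cycle $Q_{i,j}\subseteq B_{i,j}$. The blocks being pairwise vertex-disjoint, the $Q_{i,j}$ over the non-flat blocks are pairwise vertex-disjoint, and the prescribed $s+1$ connecting paths (each internally avoiding the other blocks) link them in a genuinely two-dimensional way. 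I would then derive a contradiction by cutting: cutting $\Sigma$ successively along these curves, using the connecting-path structure to guarantee that each cut meets a curve still non-contractible in the current surface, drives the Euler genus below what is available; hence the number of non-flat blocks cannot exceed $g$.

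The obstacle is precisely this last step. A family of pairwise disjoint non-contractible cycles does \emph{not} automatically consume one unit of Euler genus apiece: parallel homotopic cycles stacked along a handle collectively cost only one unit, and a separating non-contractible cycle does not lower the total Euler genus at all. Ruling out these degeneracies is exactly where the $k\times k$ arrangement and the $s+1$ disjoint connecting paths between neighbouring blocks are needed --- they force the ``genus usage'' attached to distinct non-flat blocks to be genuinely independent. Making this rigorous, either by an induction that cuts along one $Q_{i,j}$ at a time while checking that the surviving blocks and connecting paths still realise a smaller grid-like pattern demanding one more unit of genus, or by a direct Euler-characteristic count on the subgraph formed by the $Q_{i,j}$ together with the connecting strips, is the substantial part of the proof; the block-packing of the first paragraph and the flat-block-to-canonical-embedding reduction are routine.
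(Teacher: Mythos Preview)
The paper does not prove this lemma at all: it is quoted verbatim as a result of Geelen, Richter and Salazar~\cite{GRS} and used as a black box. So there is no ``paper's own proof'' to compare against; the relevant comparison is with the original source.

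Your sketch follows the same overall architecture as the Geelen--Richter--Salazar argument: tile the $r\times r$ grid into $k^2$ vertex-disjoint sub-blocks with $k=\lceil\sqrt{g+1}\rceil$, argue that at most $g$ of them fail to sit in a disk, and read off a canonically embedded $s\times s$ subgrid from any flat block. The reduction from ``some block is flat'' to ``some $s\times s$ subgrid is canonically embedded'' is fine.

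The gap you yourself flag is real and is exactly the content of the GRS paper. Your proposed mechanism---each non-flat block contributes a non-contractible cycle, and the grid's connecting paths make these cycles ``independent''---is the right intuition, but as written it is not a proof: you correctly observe that pairwise-disjoint non-contractible cycles can be homotopic or separating and hence need not each cost a unit of genus, and you do not supply the argument that rules this out. In GRS this is handled not by a naive cut-and-reduce induction on the cycles $Q_{i,j}$ alone, but by exploiting the full $2$-dimensional structure of the grid (rows and columns simultaneously) to force an embedded subgraph whose Euler-characteristic computation directly bounds the genus. Your two suggested routes (iterated cutting, or an Euler count on the $Q_{i,j}$ plus connecting strips) are plausible starting points, but neither is carried out, and the first in particular needs care: after one cut the remaining blocks may no longer be arranged in a grid pattern on the new surface, so the induction hypothesis is not obviously available. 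Until that step is filled in, what you have is an outline rather than a proof.
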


We also need a bound on the size of grid minors in embedded graphs of large tree-width.
Recall that a \emph{tree decomposition} of a graph $G$ consists of a tree $T$ and a function $\beta:V(T)\to 2^{V(G)}$
such that
\begin{itemize}
\item for each edge $uv\in E(G)$, there exists $x\in V(T)$ such that $\{u,v\}\subseteq \beta(x)$, and
\item for each $v\in V(G)$, the set $\{x\in V(T):v\in \beta(x)\}$ induces a non-empty connected subtree of $T$.
\end{itemize}
The sets $\beta(x)$ for $x\in V(T)$ are the \emph{bags} of the decomposition.
The \emph{width} of the tree decomposition is the maximum of the sizes of its bags minus one.
The \emph{tree-width} $\tw(G)$ of a graph $G$ is the minimum width of its tree decomposition.

\begin{lemma}[Theorem 4.12 in Demaine et al.~\cite{demainesurf}]\label{lemma-dem}
Let $r\ge 2$ and $g\ge 0$ be integers.  If $G$ is a graph embedded in a surface of Euler genus $g$
and $\tw(G)>6(g+1)r$, then $G$ contains an $r\times r$ grid as a minor.
\end{lemma}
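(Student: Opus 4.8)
The plan is to prove the contrapositive: if $G$ is drawn in a surface $\Sigma$ of Euler genus $g$ and $G$ has no $r\times r$ grid minor, then $\tw(G)\le 6(g+1)r$. It is cleanest to argue by induction on $g$ working not with treewidth itself but with the associated branch-width/tangle obstruction, so that the affine dependence on $g$ is not eroded by a constant factor at each step, and to translate back to treewidth only at the end. The base case $g=0$ is the planar excluded grid theorem of Robertson, Seymour and Thomas, which I take as known. So let $g\ge 1$, assume the statement for all surfaces of smaller Euler genus, and suppose $G$ has no $r\times r$ grid minor; the goal is to bound its treewidth, equivalently the order of its tangles.

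If $\tw(G)$ were large, $G$ would carry a tangle $\TT$ of correspondingly large order. The heart of the inductive step is a dichotomy for tangles of graphs drawn on surfaces, due in essence to Robertson and Seymour (the theory of respectful tangles): either $\TT$ is respectful, in which case it is equivalent to the tangle induced by the part of $G$ drawn inside some closed disk $\Delta\subseteq\Sigma$, so that $G\cap\Delta$ is a plane graph carrying a tangle of the same order and the planar excluded grid theorem produces an $r\times r$ grid minor of $G$; or $\TT$ is disrespectful, meaning it ``sees around a handle'', and then $\TT$ induces a tangle of essentially the same order in a subgraph (or sub-drawing) of $G$ that embeds in a surface of Euler genus at most $g-1$. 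In the second case the inductive hypothesis, applied to that smaller-genus drawing, again yields an $r\times r$ grid minor of $G$. Thus no tangle of $G$ can have order exceeding the bound forced by the absence of an $r\times r$ grid minor, and translating this back into treewidth gives $\tw(G)\le 6(g+1)r$, completing the induction.

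Intuitively the two alternatives are governed by the edge-width of the drawing: large edge-width makes the drawing locally planar and puts us in the disk case, while a short non-contractible cycle lets us strip a handle. The main obstacle, and the real content of the theorem, is the second alternative: showing that a disrespectful tangle genuinely descends to a drawing of strictly smaller Euler genus while retaining its order, with the loss at each step controlled linearly in $g$, so that the induction closes with an affine bound. This is exactly what Robertson and Seymour's respectful-tangle analysis supplies, and it is what pins down the constant $6$ in $6(g+1)r$; by comparison the base case, the branch-width/treewidth translations, and transferring a grid minor from a subgraph back to $G$ are all routine.
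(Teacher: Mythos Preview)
The paper does not prove this lemma at all; it is quoted as Theorem~4.12 of Demaine et al.\ and used as a black box. So there is no proof in the paper to compare your proposal against.

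As for your sketch on its own merits: the high-level strategy (induction on $g$, with the planar Robertson--Seymour--Thomas excluded-grid theorem as base case and a face-width dichotomy for the inductive step) is indeed the standard route to results of this shape. However, your write-up is too loose to count as a proof, and in particular it does not establish the specific constant $6(g+1)r$. Two points. First, the dichotomy you invoke is not really ``respectful vs.\ disrespectful tangle'' in the Robertson--Seymour sense; the operative parameter is the representativity (face-width) of the embedding. If the representativity is large compared to $r$, one extracts a planarly embedded piece carrying a large grid; if it is small, a short $G$-normal non-contractible closed curve lets one cut the surface and drop the Euler genus by at least one. Second, the whole content of the theorem is the bookkeeping: one must show that cutting along a curve meeting $G$ in at most $c$ vertices reduces treewidth (or tangle order) by at most a function of $c$, choose the representativity threshold correctly, and verify that the losses telescope to $6(g+1)r$ rather than, say, $O(g^2 r)$ or $6(g+1)r+O(g)$. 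Your last paragraph asserts that this ``closes with an affine bound'' and ``pins down the constant $6$'', but nothing in the sketch actually does either; you would need to state the planar bound precisely (branch-width $\le 4r$, hence treewidth $\le 6r$), quantify the loss per surgery, and carry the induction through explicitly.
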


\begin{lemma}\label{lemma-redwidth}
Let $G$ be a graph embedded in a surface $\Sigma$ of Euler genus $g$, let $X$ be a subset of $V(G)$ and let $F$ be a set of faces of $G$ such that
every vertex of $X$ is incident with a face belonging to $F$.  If $G$ contains no $k$-nest with respect to $X$,
then $\tw(G)\le 12(g+1)\lceil\sqrt{g+|F|+1}\rceil(k+2)$.
\end{lemma}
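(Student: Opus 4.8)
The plan is to establish the contrapositive: assuming $\tw(G) > 12(g+1)\lceil\sqrt{g+|F|+1}\rceil(k+2)$, I will produce a $k$-nest in $G$ with respect to $X$. Set $r := 2\lceil\sqrt{g+|F|+1}\rceil(k+2)$, so that $\tw(G) > 6(g+1)r$, and apply Lemma~\ref{lemma-dem} to obtain an $r\times r$ grid as a minor of $G$. Since $G$ is embedded in $\Sigma$, a standard cleaning step --- choosing the branch sets to be trees and rerouting the grid edges through them --- converts this into a subgraph $\Gamma$ of $G$ that is a subdivision of an $r\times r$ grid (an $r\times r$ wall would serve equally well), drawn in $\Sigma$ as part of the given embedding of $G$. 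I expect this passage from a grid \emph{minor} to a subdivided grid realized \emph{inside} the embedding to be the step requiring the most care; the remainder of the argument is geometric bookkeeping.

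The key device is to modify the surface so that $X$ becomes easy to dodge. For each face $f\in F$, delete a small open disk contained in the interior of $f$ and disjoint from $G$, and glue a Möbius band along the resulting boundary circle; let $\Sigma'$ be the new closed surface. Then $G$ --- and hence $\Gamma$ --- is still embedded in $\Sigma'$, the tree-width of $G$ is unaffected, and the Euler genus $g'$ of $\Sigma'$ satisfies $g'\le g+|F|$. Now apply Lemma~\ref{lemma-plangrid} to $\Gamma$ embedded in $\Sigma'$ with $s:=2k+3$: because $r/\lceil\sqrt{g'+1}\rceil\ge r/\lceil\sqrt{g+|F|+1}\rceil=2(k+2)$, the hypothesis $s\le r/\lceil\sqrt{g'+1}\rceil-1$ holds, and we obtain a $(2k+3)\times(2k+3)$ subgrid $\Gamma'$ of $\Gamma$ that is canonically embedded in a closed disk $\Delta\subseteq\Sigma'$.

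The crux is that $\Delta$ must avoid every Möbius band we glued in. Fix $f\in F$ and let $M$ be the corresponding band and $f'$ the face of $G$ in $\Sigma'$ that contains $M$. The core of $M$ is a simple closed curve that is non-contractible in $\Sigma'$ and lies in $f'$; since $f'$ is connected and disjoint from the cycle $\partial\Delta$ of $G$, it lies entirely in the interior or entirely in the exterior of $\Delta$, and a closed disk contains no non-contractible curve, so $f'$ is exterior to $\Delta$. Hence $\Delta$ misses $f'$ as well as the deleted disk, so $\Delta$ is in fact a closed disk in $\Sigma$ whose interior meets no face of $F$. As every vertex of $X$ lies on the boundary of some face of $F$, and every point in the interior of $\Delta$ has a neighbourhood contained in $\Delta$, it follows that no vertex of $X$ lies in the interior of $\Delta$.

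To finish, take the concentric cycles $D_0=\partial\Delta\supset D_1\supset\cdots\supset D_k$ of the $(2k+3)\times(2k+3)$ grid $\Gamma'$. These are pairwise vertex-disjoint cycles of $G$ bounding nested closed disks $\Delta=\Delta_0\supset\Delta_1\supset\cdots\supset\Delta_k$ in $\Sigma$; the interior of $\Delta_k$ contains the central vertex of $\Gamma'$; and the interior of $\Delta_0$ contains no vertex of $X$ by the previous paragraph. Thus $(\Delta_0,\ldots,\Delta_k)$ is a $k$-nest in $G$ with respect to $X$, as desired. (If $\Gamma'$ is only a subdivided grid or a wall, its natural system of nested ``brick'' cycles is used in place of the concentric grid cycles; the value $s=2k+3$ is chosen so that the pattern has exactly $k+1$ such nested cycles together with a vertex enclosed by the innermost one.)
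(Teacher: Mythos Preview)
Your argument is essentially the paper's: add a crosscap in each face of $F$ to obtain $\Sigma'$ of Euler genus $g+|F|$, find a large grid via Lemma~\ref{lemma-dem}, apply Lemma~\ref{lemma-plangrid} in $\Sigma'$ to get a $(2k+3)\times(2k+3)$ subgrid canonically embedded in a disk, and read off the $k$-nest from its concentric cycles. Your explicit justification that the disk $\Delta$ must avoid every M\"obius band (via the non-contractible core curve) spells out what the paper compresses into the single clause ``observe that the drawing of $G$ in $\Sigma'$ contains no $k$-nest with respect to~$\emptyset$.''

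The one real issue is precisely the step you flagged: you cannot in general pass from an $r\times r$ grid \emph{minor} to a \emph{subdivision} of an $r\times r$ grid inside $G$, because the grid has vertices of degree~$4$ (take $G$ cubic). The paper sidesteps this by applying Lemma~\ref{lemma-plangrid} directly to the grid minor $H$ itself, using the embedding of $H$ in $\Sigma'$ induced from that of $G$ by contracting the (tree) branch sets; the concentric cycles of the canonically embedded subgrid of $H$ then lift to pairwise disjoint cycles of $G$ bounding nested disks, since a cycle in a minor with tree branch sets determines a unique cycle in the host graph. Your wall fallback is the right instinct, but Lemma~\ref{lemma-plangrid} is stated only for grids, and converting a wall back to a grid minor costs a factor of two that the constant $12(g+1)\lceil\sqrt{g+|F|+1}\rceil(k+2)$ does not absorb; working with the induced embedding of the minor, as the paper does, is the cleaner fix.
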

\begin{proof}
Suppose for a contradiction that $G$ contains no $k$-nest with respect to $X$ and that $\tw(G)>12(g+1)\lceil\sqrt{g+|F|+1}\rceil(k+2)$.
Let $\Sigma'$ be the surface obtained from $\Sigma$ by adding a crosscap in each of the faces of $F$.
Note that the Euler genus of $\Sigma'$ is $g'=g+|F|$.  Since $G$ contains no $k$-nest with respect to $X$,
observe that the drawing of $G$ in $\Sigma'$ contains no $k$-nest with respect to $\emptyset$.

Let $r=(2k+4)\lceil\sqrt{g'+1}\rceil$.  By Lemma~\ref{lemma-dem}, $G$ contains an $r\times r$ grid $H$ as a minor.
The embedding of $G$ in $\Sigma'$ specifies an embedding of $H$ in $\Sigma'$.  By Lemma~\ref{lemma-plangrid}, $H$ contains
a $(2k+3)\times (2k+3)$ subgrid embedded in a disk $\Delta\subseteq \Sigma'$.  However, such a subgrid contains
a $k$-nest with respect to $\emptyset$, and consequently the embedding of $G$ in $\Sigma'$ contains a $k$-nest with respect to $\emptyset$.
This is a contradiction.
\end{proof}

Let $G$ be a graph embedded in a surface $\Sigma$ and let $X$ be a subset of $V(G)$.  A graph $G'$ is a \emph{$k$-nest reduction of $G$ with respect to $X$}
if it is obtained from $G$ by repeatedly finding a $k$-nest with respect to $X$ and removing its egg, until there is no such $k$-nest.
To test whether a vertex $v$ is an egg of a $k$-nest with respect to $X$, we proceed as follows: take all faces incident with $v$.  If
their union contains a non-contractible curve, then $v$ is not an egg of a $k$-nest.  Otherwise, the union of their boundaries contains a cycle
$C_k$ bounding a disk $\Delta_k$ containing $v$.  Next, we similarly consider the union of $\Delta_k$ and all the faces incident with vertices
of $C_k$, and either conclude that $v$ is not an egg of a $k$-nest, or obtain a cycle $C_{k-1}$ bounding a disk $\Delta_{k-1}\supset \Delta_k$.  We proceed in the
same way until we obtain the disk $\Delta_0$.  Finally, we check whether the interior of $\Delta_0$ contains a vertex of $X$ or not.
This can be implemented in linear time.  By repeatedly applying this test and removing the eggs, we can obtain a $k$-nest reduction in quadratic time.

Thus, we have a simple polynomial-time algorithm for deciding colorability
of an embedded graph $G$ from lists of size $5$: find a $k$-nest reduction $G'$ of $G$, where $k$ is given by Lemma~\ref{lemma-rednest}.
By Lemma~\ref{lemma-redwidth}, the resulting graph has tree-width at most $O(\log |V(G)|)$, and thus we can test its colorability
using the standard dynamic programming approach in polynomial time (see~\cite{listcoltw} for details).

\subsection{Vortices}
Next, we deal with vortices.  A \emph{path decomposition} of a graph is its tree decomposition $(T,\beta)$
such that $T$ is a path.
A \emph{vortex} consists of a graph $F$ and a sequence $v_1$, \ldots, $v_t$ of distinct vertices of $F$,
together with a path decomposition of $F$ with bags $B_1$, \ldots, $B_t$ in order along the path such that
$v_i\in B_i$ for $i=1,\ldots, t$.  The \emph{depth} of the vortex is the order of the
largest of the bags of the decomposition.  The sequence $v_1$, \ldots, $v_t$ is called the \emph{boundary} of the vortex.
Let $\Sigma$ be a surface. A graph $G$ is \emph{almost embedded in $\Sigma$, with vortices $G_1, \ldots, G_m$},
if $G=G_0\cup G_1\cup \ldots\cup G_m$ for some graph $G_0$ such that
\begin{itemize}
\item $V(G_i)\cap V(G_j)=\emptyset$ for $1\le i<j\le m$,
\item $V(G_0)\cap V(G_i)$ is exactly the set of boundary vertices of the vortex $G_i$, for $1\le i\le m$, and
\item there exists an embedding of $G_0$ in $\Sigma$ and pairwise disjoint closed disks $\Delta_1, \ldots, \Delta_m\subset \Sigma$
such that for $1\le i\le m$, the embedding of $G_0$ intersects $\Delta_i$ exactly in the set of boundary vertices of $G_i$,
which are drawn in the boundary of $\Delta_i$ in order that matches the order prescribed by the vortex (up to reflection and circular shift).
\end{itemize}
We say that $G_0$ is the \emph{embedded part} of $G$, and the set of \emph{boundary vertices} of $G_0$ is the union of the boundary
sets of the vortices.

The tree-width of a graph with vortices depends on their depth as follows (see also~\cite[Lemma 5]{Demaine2008} for a similar bound).
\begin{lemma}\label{lemma-vortdepth}
Let $G=G_0\cup G_1\cup \ldots\cup G_m$ for some graph $G_0$ and vortices $G_1$, \ldots, $G_m$ of depth at most $d$.
Suppose that
\begin{itemize}
\item $V(G_i)\cap V(G_j)=\emptyset$ for $1\le i<j\le m$, and
\item $V(G_0)\cap V(G_i)$ is exactly the set of boundary vertices of the vortex $G_i$, for $1\le i\le m$, and
\item the boundary vertices of the vortex $G_i$ in order form a path in $G_0$, for $1\le i\le m$.
\end{itemize}
Then, $\tw(G)\le d(\tw(G_0)+1)-1$.
\end{lemma}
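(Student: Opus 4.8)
The plan is to start from a tree decomposition $(T,\beta)$ of $G_0$ of width $\tw(G_0)$, keep the tree $T$ unchanged, and merely enlarge the bags. For each vortex $G_i$ fix its path-decomposition with bags $B^i_1,\dots,B^i_{t_i}$ in order and boundary vertices $v^i_1,\dots,v^i_{t_i}$, so that $v^i_j\in B^i_j$ and $|B^i_j|\le d$. Define an expansion map $\sigma$ on $V(G_0)$ by $\sigma(v^i_j)=B^i_j$ when $v^i_j$ is a boundary vertex of some vortex $G_i$, and $\sigma(u)=\{u\}$ for every other $u\in V(G_0)$, and put $\beta'(x)=\bigcup_{u\in\beta(x)}\sigma(u)$. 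Since each set $\sigma(u)$ has at most $d$ elements, $|\beta'(x)|\le d\,|\beta(x)|\le d(\tw(G_0)+1)$, so once we check that $(T,\beta')$ is a tree decomposition of $G$ the bound $\tw(G)\le d(\tw(G_0)+1)-1$ follows.

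The edge condition is routine: every edge of $G_0$ lies in some bag $\beta(x)\subseteq\beta'(x)$ (because $u\in\sigma(u)$ always), and every edge of a vortex $G_i$ lies in a common bag $B^i_j$ of the vortex decomposition, which equals $\sigma(v^i_j)$ and hence is contained in $\beta'(x)$ for any $x$ with $v^i_j\in\beta(x)$; such an $x$ exists since $v^i_j\in V(G_0)$.

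The one point that genuinely uses the hypothesis that the boundary vertices of each vortex form a path in $G_0$ is the connectivity condition. Every vertex $w$ of $G$ lies either in $V(G_i)$ for a unique $i$, or in $V(G_0)$ and in no vortex. In the second case $w\in\sigma(u)$ only for $u=w$, so $\{x:w\in\beta'(x)\}=\{x:w\in\beta(x)\}$, a nonempty subtree of $T$. In the first case $w\in\sigma(v^i_j)=B^i_j$ precisely for $j$ ranging over some nonempty interval $I$ of $\{1,\dots,t_i\}$ (the indices of the bags of the vortex decomposition that contain $w$), and $w$ lies in no other $\sigma(u)$; thus $\{x:w\in\beta'(x)\}=\bigcup_{j\in I}T_j$, where $T_j:=\{x:v^i_j\in\beta(x)\}$. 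Each $T_j$ is a nonempty subtree of $T$, and for consecutive indices $j,j+1\in I$ the edge $v^i_jv^i_{j+1}$ of $G_0$ forces $T_j\cap T_{j+1}\neq\emptyset$; since a union of subtrees of a tree in which each one meets the next is again a subtree, $\bigcup_{j\in I}T_j$ is a nonempty subtree of $T$. This verifies the connectivity condition, completing the argument; I do not anticipate any obstacle beyond correctly bookkeeping the three cases above.
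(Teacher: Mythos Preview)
Your proof is correct and follows essentially the same approach as the paper: start from an optimal tree decomposition $(T,\beta)$ of $G_0$, replace each boundary vertex by its vortex bag to form $\beta'$, and verify the tree-decomposition axioms, with the path hypothesis used exactly where you use it, namely to glue the subtrees $T_j$ for consecutive $j$ in the connectivity check. The paper's write-up differs only cosmetically, splitting the set $\{x:w\in\beta'(x)\}$ into a $V(G_0)$-part $Z_0$ and a vortex-part $Z_1$ and arguing separately that each is a subtree and that they meet; your direct computation $\{x:w\in\beta'(x)\}=\bigcup_{j\in I}T_j$ absorbs $Z_0$ into $Z_1$ (since for a boundary vertex $w=v^i_{j_0}$ one has $Z_0=T_{j_0}$ with $j_0\in I$) and is arguably cleaner.
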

\begin{proof}
Consider a tree decomposition $(T,\beta)$ of $G_0$ such that each bag of this decomposition has order at most $\tw(G_0)+1$.
For each boundary vertex $v$ of a vortex, let $X_v$ be the corresponding bag in the path decomposition of the vortex.
For all other vertices, let $X_v=\{v\}$.
For each $x\in V(T)$, let $\beta'(x)=\bigcup_{v\in \beta(x)} X_v$.

Consider an edge $uv\in E(G)$.  If $uv\in E(G_0)$, then there exists $x\in V(T)$ with $\{u,v\}\subseteq \beta(x)\subseteq \beta'(x)$.
If $uv\not\in E(G_0)$, then $uv$ is an edge of one of the vortices, and thus there exists a vertex $w\in V(G_0)$ such that
$\{u,v\}\subseteq X_w$.  Since $(T,\beta)$ is a tree decomposition of $G_0$, there exists $x\in V(T)$ such that $w\in \beta(x)$, and
thus $\{u,v\}\subseteq \beta'(x)$.

Next, consider a vertex $v\in V(G)$.  If $v\in V(G_0)$, then let $Z_0=\{x\in V(T):v\in \beta(x)\}$, otherwise let $Z_0=\emptyset$.
Since $(T,\beta)$ is a tree decomposition of $G_0$, $Z_0$ induces a connected subtree of $T$.
If $v$ belongs to a vortex, say to $G_1$, then let $Y$ be the set of boundary vertices of $G_1$ whose bags in the path decomposition of $G_1$
contain $v$, and let $Z_1=\{x\in V(T):\beta(x)\cap Y\neq\emptyset\}$; otherwise, let $Z_1=\emptyset$.
The elements of $Y$ form a contiguous interval in the sequence of boundary vertices of $G_1$, and thus
they form a path in $G_0$.  Since this path is a connected subgraph of $G_0$ and $(T,\beta)$ is a tree decomposition of $G_0$,
we conclude that $Z_1$ induces a connected subtree of $T$.
Observe that at least one of $Z_0$ and $Z_1$ is non-empty, and if they are both non-empty, then they are not disjoint.
Consequently, $\{x\in V(T):v\in \beta'(x)\}=Z_0\cup Z_1$ induces a non-empty connected subtree of $T$.

It follows that $(T,\beta')$ is a tree decomposition of $G$.
Since every bag of $(T,\beta')$ has order at most $d(\tw(G_0)+1)$, the claim of the lemma follows.
\end{proof}

\subsection{Structure theorem}
A \emph{clique-sum} of two graphs $G_1$ and $G_2$ is a graph obtained from them by choosing cliques
of the same size in $G_1$ and $G_2$,
identifying the two cliques, and possibly removing some edges of the resulting clique.
The usual form of the structure theorem for graphs avoiding a fixed minor is as follows~\cite{robertson2003graph}.
\begin{theorem}\label{thm-strbasic}
For any graph $H$, there exist integers $m, d, a\ge 0$ with the following property.
If $G$ is $H$-minor-free, then
$G$ is a clique-sum of graphs $G_1$, \ldots, $G_s$ such that for $i=1,\ldots,s$,
there exists a surface $\Sigma_i$ and a set $A_i\subseteq V(G_i)$ satisfying the following:
\begin{itemize}
\item[\textrm{(a)}] $H$ cannot be drawn in $\Sigma_i$,
\item[\textrm{(b)}] $|A_i|\le a$, and
\item[\textrm{(c)}] $G_i-A_i$ can be almost embedded in $\Sigma_i$ with at most $m$ vortices of depth at most $d$.
\end{itemize}
\end{theorem}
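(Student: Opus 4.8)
The plan is to invoke the structure theorem of Robertson and Seymour~\cite{robertson2003graph} and then merely translate its usual tree-decomposition formulation into the clique-sum language used in the statement; essentially no new combinatorics is involved, and the genuinely deep content is imported as a black box. Recall the standard form of that theorem: for every graph $H$ there is an integer $k$ such that every $H$-minor-free graph $G$ has a tree-decomposition $(T,\beta)$ of adhesion at most $k$ in which, for every node $x$, the torso $G_x$ at $x$ --- the graph obtained from $G[\beta(x)]$ by adding, for every neighbour $y$ of $x$ in $T$, all edges between vertices of the adhesion set $\beta(x)\cap\beta(y)$ --- contains a set $A_x$ of at most $k$ vertices such that $G_x-A_x$ is $k$-nearly-embeddable in some surface $\Sigma_x$ in which $H$ cannot be drawn.

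First I would recall what ``$k$-nearly-embeddable in $\Sigma_x$'' unpacks to: it asserts exactly that $G_x-A_x=G_0\cup G_1\cup\cdots\cup G_m$ with $m\le k$, where $G_1,\dots,G_m$ are vortices of depth at most $k$ attached along the boundaries of pairwise disjoint closed disks of $\Sigma_x$, and $G_0$ is embedded in $\Sigma_x$ in the way demanded by our notion of being almost embedded. Comparing this line by line with the definition preceding Lemma~\ref{lemma-vortdepth}, the two notions agree, so, setting $G_i\colonequals G_x$, $A_i\colonequals A_x$ and $\Sigma_i\colonequals\Sigma_x$ for the corresponding node, the graph $G_i-A_i$ is almost embedded in $\Sigma_i$ with at most $k$ vortices of depth at most $k$.

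It remains to recognise $G$ as a clique-sum of $G_1,\dots,G_s$, which is the standard equivalence between bounded-adhesion tree-decompositions and clique-sums. Processing the edges of $T$ one by one and, for an edge $xy$, forming the clique-sum of the two sides along $\beta(x)\cap\beta(y)$ --- a set of the same size on both sides, and a clique in both torsos by construction --- one obtains a graph that contains $G$ and differs from it only in that some edges inside adhesion sets were created when the torsos were formed; those are precisely the edges that the definition of clique-sum allows one to delete. Taking $a=m=d=k$ then yields the three properties listed in the theorem.

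I expect no real obstacle beyond this bookkeeping. The only points that need attention are (i) that the technical definition of ``nearly embeddable'' used in~\cite{robertson2003graph} coincides with ``almost embedded plus a bounded apex set'' in our sense, which is a matter of unwinding definitions, and (ii) that the surface can be chosen so that $H$ does not embed in it, which is part of the cited statement rather than something to be established here. All the genuine difficulty lies inside the structure theorem itself and is not reproved.
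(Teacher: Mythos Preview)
Your proposal is correct and matches the paper's treatment: Theorem~\ref{thm-strbasic} is not proved in the paper at all but simply stated as the structure theorem of Robertson and Seymour~\cite{robertson2003graph}, so citing it as a black box is exactly what is expected. Your additional bookkeeping about translating the bounded-adhesion tree-decomposition formulation into the clique-sum language is standard and unobjectionable, though the paper does not even bother with that step.
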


The graphs $G_1$, \ldots, $G_s$ are called the \emph{pieces} of the decomposition, and for each piece $G_i$,
we say that the vertices of $A_i$ are its \emph{apices}.
If the embedded part of $G_i-A_i$ has at most four vertices, then we say that the piece $G_i$ is \emph{degenerate}.
Let us remark that it is possible that $\Sigma_i$ is null for some $i\in\{1,\ldots, s\}$, and thus $A_i=V(G_i)$,
and in particular if $H$ is planar then Theorem~\ref{thm-strbasic} claims that every $H$-minor-free graph $G$
is a clique-sum of graphs of bounded size.

We need a strengthening of Theorem~\ref{thm-strbasic} that restricts the apex vertices
as well as the properties of the embedding.  Given a graph $G$ and a subset $A$ of
its vertices such that $G-A$ is almost embedded in some surface with vortices $G_1$, \ldots, $G_m$,
we say that a vertex $v\in A$ is a \emph{major apex vertex} if $v$ has some neighbor in $G$ not belonging
to $A\cup V(G_1\cup \ldots G_m)$, i.e., if $v$ has a neighbor in the surface part of the almost-embedding.
For a graph $H$ and a surface $\Sigma$, let $a(H,\Sigma)$ denote the smallest size
of a subset $B$ of vertices of $H$ such that $H-B$ can be embedded in $\Sigma$.
A face is \emph{2-cell} if it is homeomorphic to an open disk.

\begin{theorem}\label{thm-strapex}
For any graph $H$, there exist integers $m$, $d$ and $a$ with the following property.
If $G$ is $H$-minor-free, then
$G$ is a clique-sum of graphs $G_1$, \ldots, $G_s$ such that for $i=1,\ldots, s$,
there exists a surface $\Sigma_i$ and a set $A_i\subseteq V(G_i)$ satisfying the following:
\begin{itemize}
\item[\textrm{(a)}] $H$ cannot be drawn in $\Sigma_i$,
\item[\textrm{(b)}] $|A_i|\le a$,
\item[\textrm{(c)}] $G_i-A_i$ can be almost embedded in $\Sigma_i$ with at most $m$ vortices of depth at most $d$,
\item[\textrm{(d)}] every triangle in the embedded part bounds a $2$-cell face, and
\item[\textrm{(e)}] at most $a(H,\Sigma_i)-1$ apices of $A_i$ are major.
\end{itemize}
\end{theorem}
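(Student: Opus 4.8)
The plan is to start from the decomposition provided by Theorem~\ref{thm-strbasic}, with its constants $m_0,d_0,a_0$ for $H$, and refine it by repeatedly applying three local operations to individual pieces; the output constants $m,d,a$ will be built from $m_0,d_0,a_0$ together with a few auxiliary constants depending only on $H$ --- a tree-width threshold $w$ and a face-width threshold $r$ needed for the key lemma below, and a size bound for pieces of bounded tree-width. Only surfaces of bounded Euler genus occur. Call a piece \emph{trivial} if $\Sigma_i$ is null, so that $A_i=V(G_i)$; trivial pieces will stay small, and both new conditions hold for them vacuously (there is no embedding, and each vertex of $A_i=V(G_i)$ is adjacent only to other vertices of $A_i$). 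For termination I would track the multiset $\Psi$ whose elements are the pairs $(g_i,|V(G_i)|)$, with $g_i$ the Euler genus of $\Sigma_i$, over the non-trivial pieces, ordered by the usual well-founded multiset order; every operation will strictly decrease $\Psi$.

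\textbf{The three operations.} (1) If a non-trivial piece has $G_0:=G_i-A_i$ of face-width at most $r$, then a standard surgery --- cut $\Sigma_i$ along a short non-contractible curve, which passes through only a bounded set $U$ of vertices of $G_0$ (chosen off the vortex boundaries), and move $U$ into the apex set --- re-embeds $G_i-A_i$, or after a clique-sum split its two parts, in surface(s) of strictly smaller Euler genus in which $H$ still cannot be drawn; since along any line of descent the genus can drop only boundedly often, the apex sets of non-trivial pieces stay bounded by a constant depending only on $H$. (2) If a non-trivial piece has a triangle $T$ of $G_0$ not bounding a $2$-cell face, then $V(T)$ together with the apices having neighbours on both sides of $T$, once turned into a clique by adding the missing edges inside it, is a clique-separator of $G_i$, exhibiting the piece as a clique-sum of an ``inside'' part almost-embedded in the sphere and an ``outside'' part almost-embedded in $\Sigma_i$ with $T$ now bounding a face; here $H$ must be non-planar (otherwise $G$ has bounded tree-width and no non-trivial pieces), so ``$H$ cannot be drawn in the sphere'' holds, and the side of $T$ that was not a face contains a vertex (there is no chord, as $G$ is simple), so the outside part has strictly fewer vertices. (3) If a non-trivial piece has $G_0$ of tree-width at most $w$, then by Lemma~\ref{lemma-vortdepth} applied to $G_0$ with its vortices --- whose boundary sequences may be assumed to be cycles of $G_0$ after a standard normalisation costing only a bounded increase in the vortex depth --- together with adding $A_i$ to every bag, the whole piece has tree-width bounded by a constant depending only on $H$, hence is a clique-sum of graphs on boundedly many vertices, which we take as trivial pieces. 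Each operation preserves the bounds on the number and depth of the vortices and on the apex-set sizes of non-trivial pieces, and strictly decreases $\Psi$ (in (1), (2) the Euler genus or the vertex count of a non-trivial piece drops; in (3) a non-trivial piece disappears).

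\textbf{Key lemma and conclusion.} Running these operations to exhaustion terminates, since $\Psi$ is well-founded, and leaves trivial pieces together with non-trivial pieces in which $G_0$ has tree-width greater than $w$, face-width greater than $r$, and every triangle bounding a $2$-cell face. The remaining point is the following lemma, which is the heart of the matter: \emph{if $G_0$ is embedded in a surface $\Sigma$ with tree-width greater than $w$ and face-width greater than $r$, if $B\subseteq V(H)$ satisfies $|B|=a(H,\Sigma)$ and $H-B$ embeds in $\Sigma$, and if $v_1,\ldots,v_{|B|}$ are distinct vertices outside $G_0$ each adjacent to a vertex of $G_0$, then the graph obtained from $G_0$ by adding $v_1,\ldots,v_{|B|}$ and the corresponding edges contains $H$ as a minor.} Applied with no extra vertices, it shows $H$ cannot be drawn in the surface of any surviving non-trivial piece (otherwise that piece, hence $G$, would contain an $H$-minor); applied to any $a(H,\Sigma_i)$ apices of such a piece adjacent to $G_0$, it shows there are at most $a(H,\Sigma_i)-1$ of them. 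Together with the trivial pieces this establishes all the required properties, and one sets $m:=m_0$, $d:=d_0$ (adjusted by the normalisation in (3)), and $a$ to the maximum of the apex-set bound from (1) and the size bound of the trivial pieces from (3).

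\textbf{The main obstacle.} Everything reduces to the key lemma, and its proof --- especially in the presence of the vortices --- is where the work lies. It would use the theorem of Robertson and Seymour that an embedding in $\Sigma$ of sufficiently large face-width contains as a minor every graph embeddable in $\Sigma$, giving a minor model of $H-B$ in $G_0$ that occupies only a bounded region of the embedding; the large tree-width then leaves room elsewhere to route, by the disjoint-paths-on-surfaces machinery, pairwise vertex-disjoint paths joining each $v_j$ --- through its neighbour in $G_0$ --- to the branch sets of the neighbours of $v_j$ in $H$, after which the singletons $\{v_j\}$ serve as the branch sets of $B$ and complete the $H$-minor. The delicate points are that the $G_0$-neighbours of the $v_j$ must be able to reach the routing region (which the face-width lower bound supplies, as it forces enough connectivity), that the routing must avoid the vortex disks, and that $w$ and $r$ must be chosen uniformly over the finitely many surfaces of bounded Euler genus in which $H$ does not embed; since this argument changes no surface, the measure $\Psi$ is untouched.
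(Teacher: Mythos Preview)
Your key lemma is false, and this gap is fatal to the approach. Take $H=K_5$ and $\Sigma$ the sphere, so that $a(H,\Sigma)=1$; let $G_0$ be an arbitrarily large planar grid and let $v_1$ be a single apex vertex adjacent to exactly one vertex of $G_0$. The resulting graph is still planar and hence has no $K_5$-minor, yet your lemma asserts the contrary. The difficulty is general: an apex vertex whose neighbours in $G_0$ all lie in a bounded region of the surface cannot stand in for a vertex $b\in B$ with $\deg_H(b)\ge 2$, because from one attachment point (or a small cluster of them) you cannot route vertex-disjoint paths to several distinct branch sets --- your own sketch has each $v_j$ reaching its targets ``through its neighbour in $G_0$'', which is a single vertex. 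Consequently the conclusion you draw from the lemma, that at most $a(H,\Sigma_i)-1$ apices can have any neighbour in the embedded part, does not follow from $H$-minor-freeness, and none of your three operations repairs this: a decomposition coming out of Theorem~\ref{thm-strbasic} and refined by (1)--(3) can still have many apex vertices each attached to $G_0$ at a single point.

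The paper's proof addresses exactly this obstruction, and the missing ingredient is a further operation on the decomposition rather than a stronger minor-finding lemma. Working with the tangle metric on $T(S)$, it splits the apex set into those vertices with \emph{spread-out} neighbours --- at least $|V(H)|$ neighbours pairwise far apart --- and the rest. Only for the spread-out apices does the rooted-minor argument (via Theorem~\ref{thm-gmembed}) succeed, and this bounds their number by $a(H,\Sigma)-1$. Each remaining apex has its embedded-part neighbours clustered near a bounded set of points; the paper then \emph{creates new $0$-vortices} at those points and \emph{extends} all vortices by a bounded radius (Lemma~\ref{lemma-vortext}) so that every such neighbour is swallowed into a vortex. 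This vortex-extension step, not the minor-finding step, is what secures the last bullet of the theorem, and it is why the output parameters $m$ and $d$ must genuinely exceed those of Theorem~\ref{thm-strbasic}; your plan to set $m:=m_0$ and $d:=d_0$ up to a fixed normalisation cannot survive it.
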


The condition (d) is just a simple technicality.
That it is possible to restrict the major apices along the lines of condition (e)
is well known among the graph minor research community, but as far as we are aware,
it has never been published in this form.  For this reason, we provide a proof in Appendix.  Let us also remark
that the decomposition of Theorem~\ref{thm-strapex} can be found in polynomial time in the same way as the decomposition of Theorem~\ref{thm-strbasic} (see~\cite{minalg1,minalg2} for details),
as all the steps of the proof outlined in Appendix can be carried out in polynomial time.

Suppose that $H$ is a $t$-apex graph and that $G$ is a $(t+3)$-connected $H$-minor-free graph $G$.
For $1\le i\le s$, let $G_i$, $A_i$ and $\Sigma_i$ be as in Theorem~\ref{thm-strapex},
and let $G'_i$ be the embedded part of $G_i-A_i$.
Let $A'_i\subseteq A_i$ be the set of the major apices; we have $|A'_i|\le a(H,\Sigma_i)-1\le t-1$.
Suppose that $G_i$ is not degenerate and that $K$ is a clique in $G_i$ through that $G_i$ is summed with other pieces of the decomposition,
and that $K$ contains a vertex that is neither in $A_i$ nor in the vortices.  It follows that $V(K)\subseteq V(G'_i)\cup A'_i$.
Note that $V(K)$ forms a cut in $G$, and thus $|V(K)|\ge t+3$.  Therefore, the subclique $K'=K-A'_i$
has size at least four.  Since $K'\subseteq G'_i$
and every triangle in $G'_i$ bounds a $2$-cell face, it follows that
$|V(K')|=4$ and that $G'_i=K'$.  However, this contradicts the assumption that $G_i$ is not degenerate.

Therefore, we conclude that for each non-degenerate piece $G_i$, all the clique-sums are over
cliques whose vertices are contained in the union of $A_i$ and the vortices.

\begin{lemma}\label{lemma-rwg}
Let $H$ be a $t$-apex graph, let $G$ be a $(t+3)$-connected $H$-minor-free graph and let $L$ be a $(\ge t+4)$-list assignment for $G$.
Let $G_1$, $\ldots, G_s$ be the pieces of a decomposition of $G$ as in Theorem~\ref{thm-strapex}.
For $1\le i\le s$, if $G_i$ is degenerate, then let $G'_i$ and $G''_i$ be null.  Otherwise, let $G'_i$ be the embedded part of $G_i-A_i$
and let $X_i$ be the set of its boundary vertices.  Let $G''_i$ be a $k$-nest reduction of $G'_i$ with respect to $X_i$, where
$k=\lceil 58\log_2 |V(G)|\rceil+1$.  Let $G'=G-\bigcup_{i=1}^s (V(G'_i)\setminus V(G''_i))$.
Then $G$ is $L$-colorable if and only if $G'$ is $L$-colorable.
\end{lemma}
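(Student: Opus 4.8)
The plan is to prove the two implications separately. The forward one is immediate: $G'$ arises from $G$ by deleting vertices, so it is an induced subgraph of $G$, and the restriction to $V(G')$ of any $L$-coloring of $G$ is an $L$-coloring of $G'$. For the converse I would fix an $L$-coloring $\phi'$ of $G'$ and assemble an $L$-coloring of $G$ that keeps $\phi'$ outside the embedded parts of the non-degenerate pieces while recoloring each embedded part from scratch.

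First some bookkeeping on the deleted vertices. By the discussion preceding the lemma, for every non-degenerate piece $G_i$ all clique-sums incident with $G_i$ run over cliques contained in $A_i$ together with the vortices of $G_i$; hence every vertex of $V(G'_i)\setminus X_i$ (embedded, outside $A_i$, not on a vortex boundary) is private to $G_i$, so the sets $V(G'_i)\setminus X_i$ taken over non-degenerate $i$ are pairwise disjoint and each neighbour in $G$ of such a vertex lies in $V(G'_i)\cup A'_i$. The deleted vertices are precisely the eggs removed in the nest reductions, and an egg lies in the interior of the innermost disk of a $k$-nest with respect to $X_i$, hence not in $X_i$; therefore $V(G)\setminus V(G')\subseteq\bigcup_i(V(G'_i)\setminus X_i)$, whereas $A_i$, the vortex interiors and all of $X_i$ survive into $G'$ and so are colored by $\phi'$.

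Next, for each non-degenerate $G_i$ I would pass to the reduced list assignment $L_i$ on $V(G'_i)$ given by $L_i(v)=L(v)\setminus\{\phi'(u):u\in A'_i,\ uv\in E(G)\}$; since $|A'_i|\le a(H,\Sigma_i)-1\le t-1$ and $|L(v)|\ge t+4$, every list of $L_i$ has size at least $5$. One checks that the restriction of $\phi'$ to $V(G''_i)$ is an $L_i$-coloring of $G''_i$: it uses colours from the reduced lists because $\phi'$ is a proper $L$-coloring of $G'$, which contains $G[V(G''_i)\cup A'_i]$. Hence $\phi'|_{X_i}\in\Phi(G''_i,L_i,X_i)$. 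Now $G''_i$ is obtained from $G'_i$ by repeatedly deleting an egg of a $k$-nest with respect to $X_i$, and $k=\lceil 58\log_2|V(G)|\rceil\ge 58\log_2|V(G'_i)|$, a bound which remains valid for every graph occurring in the reduction because vertices are only deleted; so Lemma~\ref{lemma-rednest} applies at each step and, chaining the equalities, $\Phi(G'_i,L_i,X_i)=\Phi(G''_i,L_i,X_i)$. Consequently $\phi'|_{X_i}$ extends to an $L_i$-coloring $\psi_i$ of $G'_i$.

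Finally I would glue: define $\phi$ on $V(G)$ by $\phi(v)=\psi_i(v)$ if $v\in V(G'_i)\setminus X_i$ for the (unique) non-degenerate $i$, and $\phi(v)=\phi'(v)$ otherwise, and verify that $\phi$ is an $L$-coloring of $G$ edge by edge. An edge of $G$ meeting no set $V(G'_i)\setminus X_i$ is an edge of $G'$ and is respected by $\phi'$; an edge of $G$ incident with some $v\in V(G'_i)\setminus X_i$ has its other end in $V(G'_i)\cup A'_i$, and is respected either because $\psi_i$ properly colors $G'_i$ and agrees with $\phi'$ on $X_i$, or because $\psi_i(v)\in L_i(v)$ excludes the colour that $\phi'$ assigns to the $A'_i$-neighbour. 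I expect the bulk of the work — and the only genuinely delicate point — to be this verification together with the supporting claims about privacy and surviving vertices; in particular one must keep in mind that clique-sums may have deleted some edges of the pieces, but among the embedded vertices $V(G'_i)$ any such deleted edge lies inside $X_i$, where $\phi$ coincides with $\phi'$ and therefore already respects every edge of $G$, so this does not obstruct the argument.
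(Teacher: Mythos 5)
Your proposal follows essentially the same route as the paper: delete from each list the colors that $\phi'$ (resp.\ an arbitrary precoloring $\psi$, in the paper) gives to the $A'_i$-neighbours, note the residual lists have size at least $(t+4)-(t-1)=5$, apply Lemma~\ref{lemma-rednest} once per removed egg, and use the fact that for a non-degenerate piece $X_i\cup A'_i$ separates the embedded interior from the rest of $G$. The only packaging difference is that the paper asserts $\Phi(G^\star_i,L,X_i\cup A'_i)$ is unchanged by removing an egg (quantifying over all colorings of $A'_i$) and concludes that $L$-colorability of $G$ is unaffected, whereas you fix one coloring $\phi'$ of $G'$ and re-extend it explicitly into each embedded part; these are the same argument.

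There is, however, one step that as written has a hole, and your closing caveat about clique-sum edge deletions is aimed at the wrong place. To get $\phi'|_{X_i}\in\Phi(G''_i,L_i,X_i)$ you argue that the restriction of $\phi'$ to $V(G''_i)$ properly colors $G''_i$ because $\phi'$ properly colors $G'$, which contains $G[V(G''_i)\cup A'_i]$. But $G''_i$ is a subgraph of the piece $G_i$, not automatically of $G$: a clique-sum over a clique contained in $A_i\cup(\mbox{vortices})$ may have deleted an edge of $G'_i$ with both ends in $X_i$, and on such an edge $\phi'$ may be constant, in which case $\phi'|_{X_i}$ need not extend to $G''_i$ at all and your $\psi_i$ need not exist. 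The observation that such edges lie inside $X_i$, ``where $\phi$ coincides with $\phi'$,'' rescues only your final edge-by-edge verification, not this extension step; nor can you simply discard these edges from $G'_i$ after the fact, since the $k$-nests used to produce $G''_i$ may run through them. The paper closes exactly this point at the outset by normalizing the decomposition (``we can assume that $G'_i$ is a subgraph of $G$,'' i.e.\ edges removed in the clique-sums are removed from the pieces as well); with that one sentence added before your extension step, your proof goes through.
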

\begin{proof}
For $1\le i\le s$, let $A_i$ and $\Sigma_i$ be as in Theorem~\ref{thm-strapex}.
Let $A'_i\subseteq A_i$ be the set of major apices; recall that $|A'_i|\le t-1$.
Let $F$ be the graph obtained from the embedded part $G'_i$ of $G_i-A_i$ by removing the edges between vertices of $X_i$.
As we observed, if $G_i$ is not degenerate, then all the clique-sums in the decomposition of $G$ involving $G_i$ are over
cliques whose vertices are contained in the union of $A_i$ and the vertex sets of the vortices of $G_i$.  Hence,
$F$ is a subgraph of $G$.
Let $F^\star$ be the subgraph of $G$ consisting of $F$, the major apices $A'_i$ and all edges between $G'_i$ and $A'_i$.

Suppose that $v$ is an egg of a $k$-nest in $G'_i$ with respect to $X_i$; then $v$ is an egg of a $(k-1)$-nest in $F$
with respect to $X_i$.
Consider any $L$-coloring $\psi$ of the vertices of $A'_i$, and let $L'$ be the list assignment for $F$ defined by
$L'(w)=L(w)\setminus \{\psi(u):u\in A'_i, uw\in E(G)\}$.  Note that $|L'(w)|\ge (t+4)-(t-1)=5$.
By Lemma~\ref{lemma-rednest}, we have $\Phi(F,L',X_i)=\Phi(F-v,L',X_i)$.
Since this holds for every $\psi$, we conclude that $\Phi(F^\star, L,X_i\cup A'_i)=\Phi(F^\star-v, L,X_i\cup A'_i)$.
Since $X_i\cup A'_i$ separates $F^\star-(X_i\cup A'_i)$ from the rest of $G$, it follows that removing $v$ does not affect the $L$-colorability of $G$.
Repeating this idea for all removed eggs in all the pieces,
we conclude that $G$ is $L$-colorable if and only if $G'$ is $L$-colorable.
\end{proof}

\subsection{The algorithm}

\begin{proof}[Proof of Theorem~\ref{thm-mainalg}]
Let $k=\lceil 58\log_2 |V(G)|\rceil+1$.
For $1\le i\le s$, let $G_i$, $A_i$ and $\Sigma_i$ be as in Theorem~\ref{thm-strapex}.
In polynomial time, we can find a reduction $G'$ of $G$ as in Lemma~\ref{lemma-rwg}.

Let us consider some $i\in\{1,\ldots, s\}$, and let $G'_i=G_i-(V(G)\setminus V(G'))$.
Note that the graph $G'_i-A_i$ is almost embedded with at most $m$ vortices of depth at most $d$ in $\Sigma_i$, and
the embedded part has no $k$-nest with respect to the boundaries of the vortices (this is obvious if $G_i$
is degenerate and follows by the construction of $G'$ otherwise).
Let $G''_i$ be obtained from $G'_i-A_i$ by adding edges that trace the boundaries of all the vortices in $\Sigma_i$.  Note that
the embedded part $G'''_i$ of $G''_i$ has no $(k+1)$-nest with respect to the boundaries of the vortices.
By Lemma~\ref{lemma-redwidth}, we have $\tw(G'''_i)\le 12(g+1)\lceil\sqrt{g+m+1}\rceil(k+3)$, where $g$ is the Euler genus of $H$.
By Lemma~\ref{lemma-vortdepth}, we have $\tw(G''_i)\le d(12(g+1)\lceil\sqrt{g+m+1}\rceil(k+3)+1)-1$.
Note that $\tw(G'_i)\le a+\tw(G''_i)$, and thus $\tw(G'_i)=O(\log |V(G)|)$.

The graph $G'$ is a clique-sum of $G'_1$, $G'_2$, \ldots, $G'_s$, and thus $\tw(G')=O(\log |V(G)|)$.
Since the sizes of the lists are bounded by the constant $b$,
the algorithm of Jansen and Scheffler~\cite{listcoltw} enables us to test $L$-colorablity of $G'$ in time $2^{O(\tw(G'))}|V(G')|$, which
is polynomial in $|V(G)|$.  By Lemma~\ref{lemma-rwg}, $G$ is $L$-colorable if and only if $G'$ is $L$-colorable.
\end{proof}

Let us remark that without the upper bound $b$ on the sizes of the lists, we would only get an algorithm with time complexity
$|V(G)|^{O(\log |V(G)|)}$.

\section{Complexity}\label{sec-complex}

We proceed with the NP-hardness arguments.  Let us start with a simple observation.

\begin{lemma}\label{lemma-adda}
Let $G_1$ and $G_2$ be graphs, let $a\ge 1$ be an integer and let $G'_1$ be the complete join of $G_1$ and $K_a$.
Then
\begin{itemize}
\item $G_2$ is a minor of $G'_1$ if and only if there exists a set $X\subseteq V(G_2)$ of size at most $a$
such that $G_2-X$ is a minor of $G_1$,
\item for every $t\ge 0$, $G_1$ is $t$-apex if and only if $G'_1$ is $(t+a)$-apex, and
\item for every $k\ge 1$, $G_1$ is $k$-connected if and only if $G'_1$ is $(k+a)$-connected.
\end{itemize}
\end{lemma}

Let us now prove the hardness results justifying the choice of the assumptions in Theorem~\ref{thm-mainalg}.

\begin{proof}[Proof of Theorem~\ref{thm-hard1}]
By Lemma~\ref{lemma-adda}, it suffices to consider the case $t=1$, i.e., to show that 4-LC is NP-complete for $K_5$-minor-free $4$-connected
graphs.  Note that a $4$-connected graph is $K_5$-minor-free if and only if it is planar~\cite{wagner},
and that 4-LC is known to be NP-complete for planar graphs~\cite{chooscomplex}, and thus we only need to deal with the connectivity restriction.
We find a reduction from $3$-colorability of connected planar graphs, which is a well-known NP-complete problem~\cite{garey1979computers}.

\begin{figure}
\begin{center}
\includegraphics{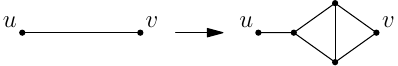}
\end{center}
\caption{A quasiedge replacement.}\label{fig-qe}
\end{figure}

Let $G$ be a connected planar graph.  Let $G_1$ be obtained from $G$ by replacing each edge $uv$ by a subgraph depicted in Figure~\ref{fig-qe}.
Observe that $G$ is $3$-colorable if and only if $G_1$ is $3$-colorable.  Furthermore, $G_1$ does not contain separating triangles,
and every vertex of $G_1$ is incident with a face of length greater than three.

Gutner~\cite{chooscomplex} constructed a plane graph $H$ without separating triangles that is
critical with respect to a $4$-list assignment $L$.  We can assume that $L$ does not use colors $1$, $2$ and $3$.
Let $x$ be an arbitrary vertex incident with the outer face of $H$ and let $L'$ be the list assignment obtained from $L$ by removing any three colors from the list of $x$
and adding colors $1$, $2$ and $3$ instead.  Since $H$ is critical with respect to $L$, it follows that every $L'$-coloring $\psi$ of $H$
satisfies $\psi(x)\in\{1,2,3\}$, and furthermore for every $i\in\{1,2,3\}$, there exists an $L'$-coloring $\psi_i$ of $H$ such that $\psi_i(x)=i$.

Let $G_2$ be the graph obtained from $G_1$ as follows. For each vertex $v\in V(G_1)$, add a copy $H_v$ of $H$ and identify its vertex $x$ with $v$.
The graph $H_v$ is drawn in the face of $G_1$ incident with $v$ of length at least four, so that $G_2$ has no separating triangles.
Let $L_2$ be the list assignment for $G$ obtained as the union of the list assignments $L'$ for the copies of $H$ appearing in $G_2$.
Note that $G_2$ is $L_2$-colorable if and only if $G_1$ is $3$-colorable.

Finally, let $G_3$ be obtained from $G_2$ as follows.  For each face $f$ of $G_2$ of length at least $4$, consider its boundary walk $v_1v_2\ldots v_m$.
Add to $f$ a wheel with rim $w_1w_2\ldots w_m$ and add edges $v_iw_i$ and $v_iw_{i+1}$ for $1\le i\le m$, where $w_{m+1}=w_1$.
Let $L_3$ be the list assignment obtained from $L_2$ by giving each vertex of the newly added wheels a list of size four disjoint
from the lists of all other vertices of $G_3$.  Clearly, $G_3$ is $L_3$-colorable if and only if $G_2$ is $L_2$-colorable.  Furthermore, $G_3$ is
a triangulation without separating triangles, and thus it is $4$-connected.

This gives a polynomial-time algorithm that, given a connected planar graph $G$, constructs a $4$-connected planar graph $G_3$ and a $4$-list assignment $L_3$
such that $G$ is $3$-colorable if and only if $G_3$ is $L_3$-colorable.  Therefore, $4$-list colorability of $4$-connected planar graphs is NP-complete.
\end{proof}

More interestingly, let us argue that the connectivity assumption is necessary, even if we consider ordinary coloring instead
of list coloring.  Let $k\ge 3$ be an integer, let $G$ be a graph, let $X$ be a triple of vertices of $G$ and let $S$ be a set of $k$-colorings of $X$ closed under permutations of colors.
Let $L$ be the list assignment such that $L(v)=\{1,\ldots, k\}$ for every $v\in V(G)$.  If $\Phi(G,L,X)=S$, then we say
that $G$ is an \emph{$S$-gadget on $X$ for $k$-coloring}.  We say that the gadget $G$ is \emph{internally $3$-connected}
if $G$ is $2$-connected and there exists no set $Z\subseteq V(G)$ of size at most $2$ such that $G-Z$ is disconnected and
all vertices of $X\setminus Z$ are contained in one component of $G-Z$.

\begin{figure}
\center{\includegraphics{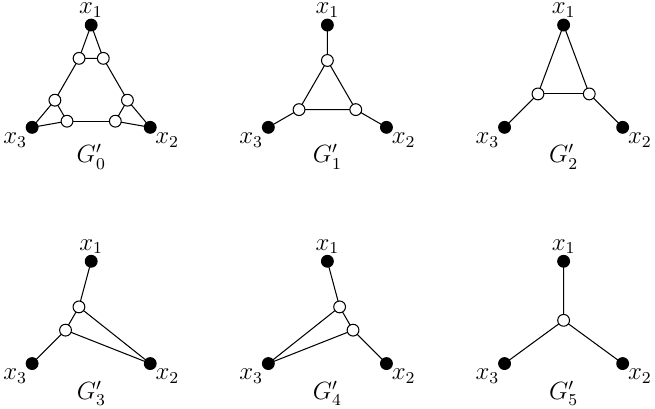}}
\caption{Gadgets from Lemma~\ref{lemma-gadget}}\label{fig-gadget}
\end{figure}

\begin{lemma}\label{lemma-gadget}
For every integer $k\ge 3$ and every set $S\subseteq \{1,\ldots,k\}^3$ closed under permutations of colors,
there exists an internally $3$-connected $S$-gadget for $k$-coloring.
\end{lemma}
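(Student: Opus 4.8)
The plan is to build an $S$-gadget from a few ``atomic'' gadgets, using that gluing graphs along $X$ intersects the sets of extendable colourings. Throughout, $L$ is the assignment with $L(v)=\{1,\dots,k\}$ for every $v$, and write $c=\min(k-2,3)$. Two elementary observations drive everything. First, if $G=G_1\cup G_2$ with $V(G_1)\cap V(G_2)=X$, then a colouring of $X$ extends to $G$ iff it extends to each $G_i$ (the two extensions agree on $X$, and every edge lies in some $G_i$), hence $\Phi(G,L,X)=\Phi(G_1,L,X)\cap\Phi(G_2,L,X)$; this extends to any finite union by induction. Second, a union of $c$-connected graphs that each contain $X$ and are otherwise pairwise disjoint is again $c$-connected, since $|X|=3\ge c$: after deleting fewer than $c$ vertices each piece stays connected and the pieces still share a vertex of $X$. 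Finally, under permutations of the colours applied simultaneously to the three coordinates, $\{1,\dots,k\}^3$ decomposes into exactly five orbits, indexed by the equality pattern of $(c_1,c_2,c_3)$: $O_0$ (all equal), $O_{12},O_{13},O_{23}$ (exactly $c_i=c_j$) and $O_{\ast}$ (all distinct), each nonempty because $k\ge3$. Since $S$ is permutation-closed it is a union of orbits, so $S=\bigcap_{o\not\subseteq S}\overline{o}$, the empty intersection being $\{1,\dots,k\}^3$.

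It therefore suffices to produce, for each orbit $o$, a $c$-connected gadget $\Gamma_{\overline o}$ on $X$ with $\Phi(\Gamma_{\overline o},L,X)=\overline o$, plus one $c$-connected gadget $\Gamma_{\mathrm{all}}$ with $\Phi(\Gamma_{\mathrm{all}},L,X)=\{1,\dots,k\}^3$: given $S$, output $\Gamma_{\mathrm{all}}$ if $S=\{1,\dots,k\}^3$, and otherwise $\bigcup_{o\not\subseteq S}\Gamma_{\overline o}$, identified along $X$ and otherwise disjoint. By the first observation its $\Phi$ equals $\bigcap_{o\not\subseteq S}\overline o=S$, and by the second it is $c$-connected, which finishes the proof.

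The remaining task is the explicit gadgets (drawn in Figure~\ref{fig-gadget}), with $x_1,x_2,x_3$ always pairwise non-adjacent. For $\overline{O_{\ast}}$ take a clique $K_{k-2}$ and join each $x_i$ to all of it: the clique forces $x_1,x_2,x_3$ into the two leftover colours, so $\Phi$ is exactly the non-rainbow colourings and the gadget is $(k-2)$-connected. For $\overline{O_{12}}$ (and $\overline{O_{13}},\overline{O_{23}}$ by relabelling) take a clique $Q=K_{k-1}$, join $x_3$ to all of $Q$, join $x_1$ to all of $Q$ but one vertex $q$, and join $x_2$ to all of $Q$ but another vertex $q'\ne q$; since $x_3$ sees all of $Q$, the colours on $Q$ are exactly $\{1,\dots,k\}\setminus\{\phi(x_3)\}$, so a colour $\ne\phi(x_3)$ used on $x_1$ must sit on $q$ and one used on $x_2$ must sit on $q'$, making $\phi(x_1)=\phi(x_2)\ne\phi(x_3)$ impossible while every other pattern extends; the minimum degree is $k-2\ge c$ and deleting fewer than $c$ vertices cannot disconnect it. For $\overline{O_0}$ take a clique $K_k$ on $u_1,\dots,u_k$ and set $x_1\sim u_1,\dots,u_{k-2}$, $x_2\sim u_{k-1},u_1,\dots,u_{c-1}$, $x_3\sim u_k,u_1,\dots,u_{c-1}$: if $\phi(x_1)=\phi(x_2)=\phi(x_3)$ this common vertex together with $K_k$ forms $K_{k+1}$, impossible, whereas for each of the other patterns the neighbourhoods of the two equally coloured $x_i$ miss a common $u_j$ and a short case check yields a colouring; each $x_i$ has degree $\ge c$ and removing fewer than $c$ vertices leaves the gadget connected. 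Finally $\Gamma_{\mathrm{all}}$ is $K_{3,3}$ with $X$ as one side when $k\ge4$ (valid since then each non-$X$ vertex still has a free colour, and these vertices are independent) and the path on $x_1,a,x_2,b,x_3$ in this order when $k=3$; in both cases every colouring of $X$ extends.

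The main obstacle is exactly the construction and verification of $\Gamma_{\overline{O_0}}$ and $\Gamma_{\overline{O_{12}}}$: one has to forbid precisely one equality pattern, not more, while keeping connectivity at least $c$, and the naive ``force an equality by a $K_{k+1}$ minus an edge'' gadgets overshoot. The asymmetric neighbourhood choices in the $K_{k-1}$- and $K_k$-based gadgets are what make the $\Phi$ come out exactly right; checking the connectivity and the extendability of the allowed patterns are then routine verifications, which I would carry out in the obvious way.
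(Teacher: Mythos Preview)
Your approach is essentially the same as the paper's: both observe that a permutation-closed $S$ is either all of $\{1,\dots,k\}^3$ or an intersection of the five orbit-complements, build a gadget realising each orbit-complement (plus one for the full set), and obtain the $S$-gadget by gluing the relevant pieces along $X$, using that $\Phi$ behaves as intersection under such gluings and that $|X|=3\ge c$ preserves $c$-connectivity. The only real difference is in the explicit gadgets: the paper fixes six small graphs $G'_0,\dots,G'_5$ (shown in a figure) and adds $k-3$ universal vertices to each to get $(k-2)$-connected gadgets uniformly, whereas you build bespoke clique-based gadgets depending on $k$. Your $\overline{O_\ast}$ and $\overline{O_{12}}$ gadgets and $\Gamma_{\mathrm{all}}$ are fine and cleanly verified. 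For $\Gamma_{\overline{O_0}}$, your forbidding argument is correct (the three neighbourhoods cover $K_k$, so the common colour has nowhere to go), but the extension claim ``a short case check yields a colouring'' is doing real work: you should justify it properly, e.g.\ via Hall's theorem on the bipartite graph between the $u_j$ and the colours, checking that for a non-constant colouring of $X$ no colour is excluded from all of $K_k$ and no two colours are jointly excluded from $k-1$ vertices. Once that gap is filled, your proof is complete and matches the paper's in spirit.
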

\begin{proof}
Let $S_0=\{(i,j,m):1\le i,j,m\le k\}$, $S_1=S_0\setminus \{(i,i,i):1\le i\le k\}$, $S_2=S_0\setminus\{(i,j,j):1\le i,j\le k, i\neq j\}$, $S_3=S_0\setminus\{(j,i,j):1\le i,j\le k, i\neq j\}$,
$S_4=S_0\setminus\{(j,j,i):1\le i,j\le k, i\neq j\}$ and $S_5=S_0\setminus\{(i,j,m):1\le i,j,m\le k, i\neq j\neq m\neq i\}$.
For $0\le i\le 5$, let $G_i$ be the complete join of the graph $G'_i$ depicted in Figure~\ref{fig-gadget} with a clique on $k-3$ vertices.
Let $X=(x_1,x_2,x_3)$, and observe that for $0\le i\le 5$, the graph $G_i$ is a connected $S_i$-gadget on $X$ for $k$-coloring.

If $S=S_0$, then let $G=G_0$.
If $S\neq S_0$, then there exists non-empty $I_0\subseteq \{1,\ldots, 5\}$ such that $S=\bigcap_{i\in I_0} S_i$.
If $|I_0|\ge 2$, then let $I=I_0$; otherwise, let $I$ be a multiset obtained from $I_0$ by changing the multiplicity
of its element to $2$.  Let $G$ be the complete join of $\bigcup_{i\in I} G'_i$ with a clique on $k-3$ vertices.
Observe that in both cases, $G$ is an internally $3$-connected $S$-gadget for $k$-coloring.
\end{proof}

\begin{proof}[Proof of Theorem~\ref{thm-hard2}]
By Lemma~\ref{lemma-adda}, it suffices to show this claim for $t=1$, i.e., 
we need to show that $(c+1)$-colorability of $3$-connected $(1,c)$-apex-free graphs is NP-complete.
Recall that a graph is $(1,c)$-apex-free if all its $4$-connected minors with at least $c+8$ vertices are planar.
We give a reduction from planar SAT, which is known to be NP-complete~\cite{plansat}.

Let $X=(x_1,x_2,x_3)$, and let $S_0, S_1$, \ldots, $S_5$ be the set from the proof of Lemma~\ref{lemma-gadget} with $k=c+1$.
Let $A=S_2\cap S_3\cap S_4$ be the set of $(c+1)$-colorings of $X$ such that either all vertices have the same color, or they have three different colors.
Let $B=S_1\cap S_2\cap S_5$ be the set of $(c+1)$-colorings of $X$ such that the color of $x_1$ is different from the color of $x_2$ if and only if $x_1$ and $x_3$ have the same color.
Let $\Delta_A$, $\Delta_B$, and $\Delta_0$ be $3$-connected $A$-, $B$-, and $S_0$-gadgets, respectively, on $X$ for $(c+1)$-coloring,
which exist by Lemma~\ref{lemma-gadget}.  Let $\Delta'_A$, $\Delta'_B$, and $\Delta'_0$ be the graphs obtained from $\Delta_A$, $\Delta_B$, and $\Delta_0$,
respectively, by adding the edges of the triangle $x_1x_2x_3$.

Let $E$ be the graph consisting of vertices $x_1$, $x_2$, $x_3$, $x_4$, $p_1$, $q_1$, $p_2$, $q_2$ and copies of the $B$-gadget $\Delta_B$ on
$(x_1,x_2,q_1)$, $(q_1,p_1,x_1)$, $(p_1,q_1,q_2)$, $(q_2,p_2,p_1)$, $(p_2,q_2,x_4)$, and $(x_4,x_3, p_2)$,
see Figure~\ref{fig-E}(a).  Note that in any $(c+1)$-coloring of $E$, the vertices $x_1$ and $x_2$ have the same color if and only if the vertices
$x_3$ and $x_4$ have the same color.  Furthermore, any $(c+1)$-coloring of $\{x_1,x_2,x_3,x_4\}$ satisfying this condition
extends to a $(c+1)$-coloring of $E$.  We say that $E$ is the \emph{copy gadget on $(x_1,\ldots, x_4)$}.

\begin{figure}
\center{\includegraphics{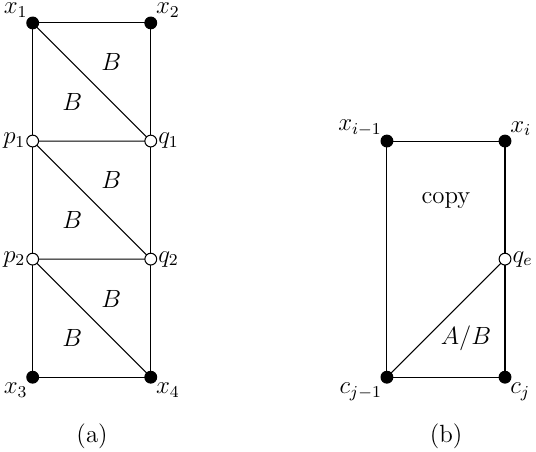}}
\caption{The copy and edge gadgets.}\label{fig-E}
\end{figure}

Given a planar instance $\phi$ of SAT, we construct a $3$-connected graph $G_{\phi}$ which is $(c+1)$-colorable if and only if the instance is satisfiable, in the following
way (see Figure~\ref{fig-exhard} for an illustration).  Let $Z_\phi$ be the incidence graph of $\phi$ drawn in plane.
By modifying the formula $\phi$ if necessary (enlarging clauses by including a single variable several times, so that
$Z_\phi$ contains parallel edges), we can assume that $Z_\phi$ is $2$-edge-connected.

\begin{figure}
\center{\includegraphics[width=130mm]{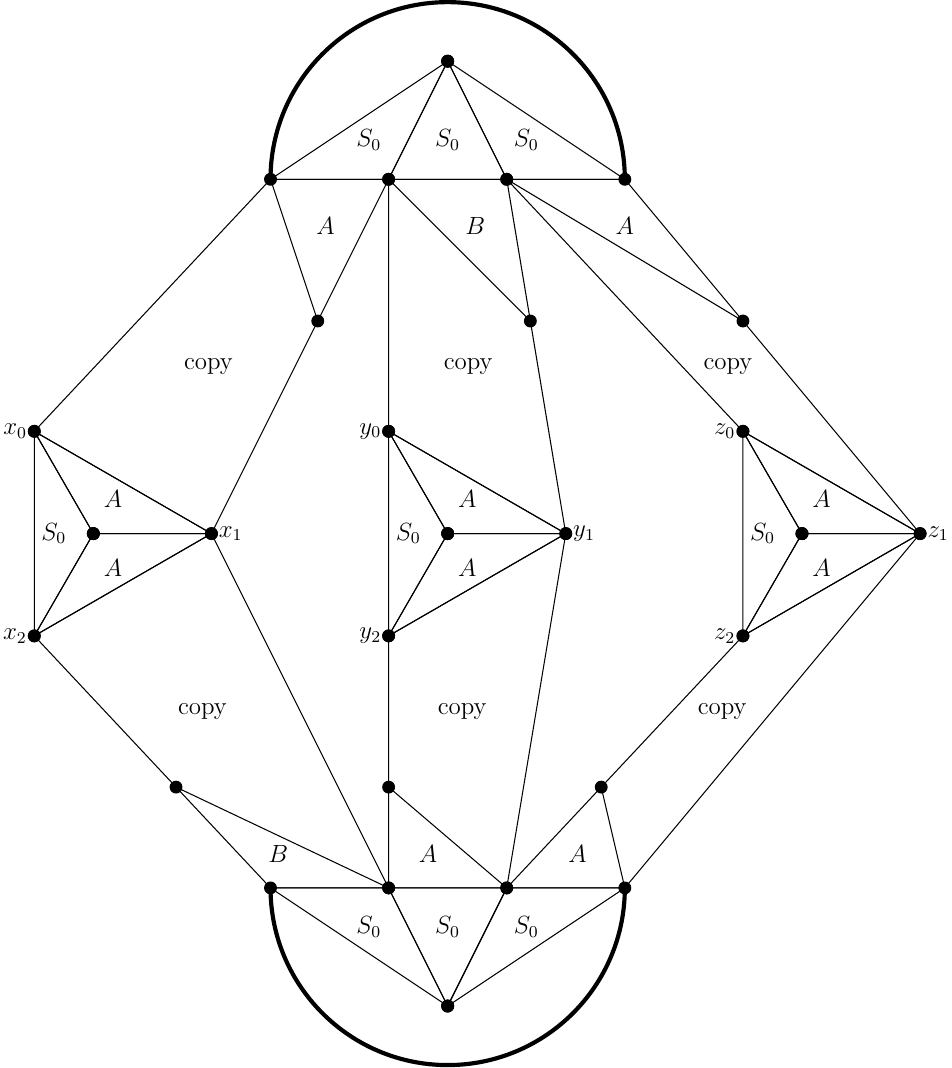}}
\caption{The graph $G_\phi$ for the formula $\phi=(x\lor \lnot y\lor z)\land (\lnot x\lor y\lor z)$.}\label{fig-exhard}
\end{figure}

First, for each variable $x$ that appears in $k$ clauses of $\phi$, we add to $G_{\phi}$ the graph $G_x$ consisting of vertices $c_x$, $x_0$, $x_1$, \ldots, $x_k$, and $k$ copies of the $A$-gadget $\Delta_A$ on
$(c_x,x_0,x_1)$, $(c_x,x_1,x_2)$, \ldots, $(c_x,x_{k-1},x_k)$, and a copy of the $S_0$-gadget $\Delta_0$ on $(c_x,x_k,x_0)$.
Note that in every $(c+1)$-coloring $\psi$ of $G_x$, either $\psi(x_{i-1})=\psi(x_i)$ for $1\le i\le k$, or $\psi(x_{i-1})\neq\psi(x_i)$ for $1\le i\le k$.
Furthermore, if $\psi$ is a $(c+1)$-coloring of $\{x_0, x_1, \ldots, x_k\}$ satisfying one of the conditions and additionally $\psi(x_i)\neq c+1$ for $0\le i\le k$,
then $\psi$ extends to a $(c+1)$-coloring of $G_x$.  Let $e_1$, \ldots, $e_k$ be the edges of $Z_\phi$ incident with $x$ in clockwise order
according to the drawing of $Z_\phi$; for $i=1,\ldots, k$, we define $\text{begin}(e_i)=(x_{i-1},x_i)$.

Next, for each clause $c$ of $\phi$ which is a conjunction of $k$ variables or their negations, we add
vertices $c_0$, $c_1$, \ldots, $c_k$ and the edge $c_0c_k$.  Furthermore, we add a vertex $m_c$
and copies of the $S_0$-gadget $\Delta_0$ on $(c_0,c_1,m_c)$, \ldots, $(c_{k-1},c_k,m_c)$.
Note that in any $(c+1)$-coloring,
at least one of the pairs $(c_0,c_1)$, $(c_1,c_2)$, \ldots, $(c_{k-1},c_k)$ receives two different colors.
Let $e_1$, \ldots, $e_k$ be the edges of $Z_\phi$ incident with $c$ in clockwise order
according to the drawing of $Z_\phi$; for $i=1,\ldots, k$, we define $\text{end}(e_i)=(c_{i-1},c_i)$.

Finally, for each edge $e$ of $Z_\phi$ with $\text{begin}(e)=(x_{i-1},x_i)$ and $\text{end}(e)=(c_{j-1},c_j)$,
add a vertex $q_e$ and the copy gadget on $(x_{i-1},x_i,c_{j-1},q_e)$.  Additionally, if the appearance of $x$ in $c$ is negated,
then add a copy of the $B$-gadget $\Delta_B$ on $(c_{j-1},q_e,c_j)$, otherwise add a copy of the $A$-gadget $\Delta_A$ on $(c_{j-1},q_e,c_j)$.
See Figure~\ref{fig-E}(b) for an illustration.

Let $G_\phi$ be the resulting graph, whose construction can clearly be performed in polynomial time.
Observe that in every $(c+1)$-coloring $\psi$ of $G_\phi$, each variable $x$ with $k$ occurrences satisfies either
$\psi(c_x)=\psi(x_0)=\psi(x_1)=\ldots=\psi(x_k)$, or $\psi(x_0)\neq\psi(x_1)$ and $\psi(x_1)\neq\psi(x_2)$ and $\ldots$ and $\psi(x_{k-1})\neq\psi(x_k)$.
Set $x$ to be false in the former case and to be true in the latter case.
For each clause $c$, the edge $c_0c_k$ implies that there exists $j$ with $\psi(c_{j-1})\neq \psi(c_j)$,
which ensures that the $j$-th literal of the clause is true in the described assignment.
Therefore, if $G_\phi$ is $(c+1)$-colorable, then $\phi$ is satisfiable.

Conversely, given a satisfying assignment to $\phi$, we can find a $(c+1)$-coloring $\psi$ of $G_\phi$.
First, set $\psi(c_x)=\psi(x_j)=1$ for each false variable $x$ with $k$ appearances and $0\le j\le k$, and $\psi(x_j)=1+(j\bmod 2)$ and $\psi(c_x)=3$
for each true variable $x$ and $0\le j\le k$.  For each clause $c$ with $k$ literals, set $\psi(c_0)=1$ and
for $1\le j\le k$ choose $\psi(c_j)\in \{1,2,3\}$ same as or different from $\psi(c_{j-1})$ depending
on whether the corresponding literal is false or true, using only colors $1$ and $2$ when $c$ has an odd
number of true literals and using color $3$ once when $c$ has an even number of true literals so that $\psi(c_0)\neq \psi(c_k)$.
The coloring can be extended to the rest of $G_\phi$ in the obvious way.
Therefore, $\phi$ is satisfiable if and only if $G_\phi$ is $(c+1)$-colorable.

By the planarity of $\phi$, observe that $G_\phi$ is obtained from a plane graph by clique-sums with copies of $\Delta'_A$, $\Delta'_B$, and $\Delta'_0$ on triangles.
Hence, every non-planar $4$-connected minor of $G$ has at most $\max(|V(\Delta_A)|,|V(\Delta_B)|,|V(\Delta_0)|)=c+7$ vertices.
Furthermore, observe that since all the gadgets are internally $3$-connected and the graph $Z_\phi$ is $2$-edge-connected, the
graph $G_\phi$ is $3$-connected.
The claim of Theorem~\ref{thm-hard2} follows.
\end{proof}

\section*{Acknowledgments}

We would like to thank Luke Postle for fruitful discussions regarding the problem.

\section*{Appendix}

Our goal in this section is to prove Theorem~\ref{thm-strapex}.  As we mentioned, the condition (d) is a simple
technicality that appeared before in the literature (with a slightly different formulation already in~\cite{rs17}).
The idea behind the condition (e) is also conceptually simple: suppose that $H-X$ can be embedded in $\Sigma_i$, where $|X|=a(H,\Sigma_i)$.  If at least
$|X|$ of the apices of $A_i$ attached to the embedded part of $G_i-A_i$ ``all over the place'', we could find a minor of $H$ in $G$ with these
apices playing the role of $X$.  So all but at most $a(H,\Sigma_i)-1$ of the apices only attach to a bounded number
of areas of small radius in the surface, and by creating additional vortices from these areas, we ensure that there
are at most $a(H,\Sigma_i)-1$ major apices.

Executing this idea formally is a straightforward, although rather lengthy, application of the tools of graph minors theory.
However, its presentation poses a conundrum: anyone familiar with the details of the graph minors theory as presented
in the series of papers by Robertson and Seymour (\cite{rs1}--\cite{rs23}) can likely devise the proof on their own,
while for anyone else the argument will be hard to follow due to usage of a large number of unfamiliar definitions
and results.

To alleviate this issue, we try to re-introduce and motivate all the important concepts and results used in the argument.
An exception is a short Subsection~\ref{sssec-clear} (marked with ($\star$) below), whose full explanation would require
introducing a number of very technical definitions not used anywhere else in the appendix.
While the rest of the appendix should be understandable to anyone with standard knowledge of graph theory, this subsection
assumes that the reader is familiar with the Robertson-Seymour series of papers on graph minor theory, and in particular
we will in that subsection refer to additional definitions and results from the series,
especially from~\cite{rs10, rs11, rs12, robertson2003graph,rs17}, not repeated here.

\subsection{Tangles}

The statement of Theorem~\ref{thm-strbasic} can be easily strengthened so that each piece $G_i$ of the decomposition of a $H$-minor-free
graph $G$ either has bounded size (and thus it is possible to set $A_i=V(G_i)$), or its embedded part $G'_i$ is ``large'', i.e.,
is embedded in $\Sigma_i$ with high representativity (unless $\Sigma_i$ is the sphere, in which case
the embedded part is guaranteed to contain a large grid minor, instead).  We will call the pieces of the latter kind
\emph{important}. While the pieces of the former kind are
somewhat arbitrary and there is no canonical way of choosing them, there is much less freedom in the choice of the important
pieces.  They are not quite uniquely determined (e.g., our strategy is based on the fact that different parts of the
piece can be included in vortices), but it turns out that in some sense the same important pieces must be represented
in every decomposition with properties described by Theorem~\ref{thm-strbasic}.

In both the proof and the applications of Theorem~\ref{thm-strbasic}, it is useful to be able to focus on just one important
piece of the decomposition in isolation, of course bearing in mind the fuzziness of their choice.  How to specify such a piece?
Consider any small set $S\subseteq V(G)$, and an important piece $G_i$ whose embedded part $G'_i$ is drawn in a surface
different from the sphere.  As $G'_i$ has representativity greater than $|S|$, exactly one component of $G-S$ contains a non-planar part of $G'_i$
(for important pieces whose embedded parts are drawn in the sphere, we can similarly determine such a unique component based on the
grid minor in $G'_i$).  Furthermore, since $G$ is a clique-sum of the pieces of the decomposition and the pieces have bounded clique number
(as can be easily seen), each two important pieces are separated by a small vertex cut in $G$.

Hence, it is natural to specify an important piece by pointing out for each small cut such a unique component that ``contains most of the
piece'' in the sense described in the previous paragraph.  More formally, a \emph{separation} of a graph $G$ is a pair $(A,B)$ such that $A$
and $B$ are edge-disjoint subgraphs of $G$ and $G=A\cup B$, and the \emph{order} of the separation is $|V(A\cap B)|$.  For an
integer $\theta\ge 1$, an \emph{orientation of $(<\!\theta)$-separations} is a set $\OO$ of separations of $G$ of order less than $\theta$
that for each separation $(A,B)$ of $G$ of order less than $\theta$ contains exactly one of $(A,B)$ and $(B,A)$.  For $(A,B)\in\OO$,
we will always interpret $B$ as the ``large'' part of the separation, e.g., in the sense of the previous paragraph.

Of course, not all orientations of $(<\!\theta)$-separations identify important pieces of the decomposition.  A bit surprisingly,
it is possible to fix this by adding two simple restrictions.  A \emph{tangle of order $\theta$} in a graph $G$ is an orientation of $(<\!\theta)$-separations $\TT$
satisfying the following tangle axioms:
\begin{itemize}
\item[(T1)] If $(A_1,B_1),(A_2,B_2),(A_3,B_3)\in\TT$, then $A_1\cup A_2\cup A_3\neq G$.
\item[(T2)] If $(A,B)\in\TT$, then $V(A)\neq V(G)$.
\end{itemize}
Both of these conditions are natural (saying that the whole graph cannot be a union of just a few small parts).
It turns out that tangles uniquely correspond to well-linked parts of $G$ that give rise to important pieces
of the structure theorem decomposition.  This should not be immediately obvious, but it is beyond the scope of
this paper to explain why this is the case; we invite a reader unfamiliar with the concept of tangles to read~\cite{rs10}
or another introductory text at this point, as their solid understanding will make reading the rest of Appendix much easier.
Let us remark that for $\theta\ge 2$, the membership of a separation $(A,B)$ in the tangle depends only on $E(A)$ (or equivalently on $E(B)=E(G)\setminus E(A))$);
in particular, the following stronger version of (T2) holds.
\begin{lemma}[A special case of {Roberson and Seymour~\cite[(2.3)]{rs10}}]\label{lemma-T2e}
Let $\TT$ be a tangle of order at least $2$ in a graph $G$.  If $(A,B)\in \TT$, then $E(A)\neq E(G)$.
\end{lemma}

We will need the following construction of tangles based on \emph{unbreakable sets}---informally, a set is unbreakable if
for every small separation, most of the set is contained in one of the parts of the separation.  More precisely,
let $Y\subseteq V(G)$ be a set of size $3\theta-2$.  We say that $Y$ is \emph{$\theta$-unbreakable}
if there exists no separation $(A,B)$ of $G$ of order less than $\theta$ such
that $|V(A)\cap V(B)|+\max(|Y\setminus V(A)|,|Y\setminus V(B)|)<3\theta-2$.
Note that if $|V(A)\cap Y|\ge \theta$, then
$|V(A)\cap V(B)|+|Y\setminus V(A)|=|V(A)\cap V(B)|+|Y|-|V(A)\cap Y|<|Y|=3\theta-2$.
Symmetrically, if $|V(B)\cap Y|\ge \theta$, then $|V(A)\cap V(B)|+|Y\setminus V(B)|<3\theta-2$.
Since $Y$ is $\theta$-unbreakable, we conclude that either $|V(A)\cap Y|\le\theta-1$ or $|V(B)\cap Y|\le\theta-1$
holds for every separation $(A,B)$ of order less than $\theta$.
Actually, exactly one of the inequalities holds, since $|Y|>2\theta-2$.
Hence the set $\TT$ consisting of the separations $(A,B)$ of $G$ of order less than $\theta$ such that $|V(A)\cap Y|\le\theta-1$
is an orientation of $(<\!\theta)$-separations.  In fact, it is even a tangle of order $\theta$. We include the simple argument,
but this was shown already in~\cite[(11.2)]{rs10}.

\begin{lemma}\label{lemma-unbreaktangle}
Let $\theta\ge 1$ be an integer, let $G$ be a graph, and let $Y\subseteq V(G)$ be a set of size $3\theta-2$.
Let $\TT$ be the set of separations $(A,B)$ of $G$ of order less than $\theta$ such that $|V(A)\cap Y|\le\theta-1$.
If $Y$ is $\theta$-unbreakable, then $\TT$ is a tangle of order $\theta$ in $G$.
\end{lemma}
\begin{proof}
We already argued that $\TT$ is an orientation of $(<\!\theta)$-separations, and thus it suffices to show that it
satisfies (T1) and (T2).  Given $(A_1,B_1)$, $(A_2,B_2)$, $(A_3,B_3)\in\TT$, we have
$|V(A_1\cup A_2\cup A_3)\cap Y|\le \sum_{i=1}^3 |V(A_i)\cap Y|\le 3\theta-3<|Y|$, and thus $A_1\cup A_2\cup A_3$ does not
contain all vertices of $Y$, and $A_1\cup A_2\cup A_3\neq G$.  Hence, (T1) holds.  Similarly, if $(A,B)\in \TT$, then
$|A\cap Y|<|Y|$, and thus $V(A)\neq V(G)$ and (T2) holds.
\end{proof}

We will also need the following construction to obtain new tangles from old ones.
Let $\TT$ be a tangle of order $\theta$ in a graph $G$, and let $Z\subseteq V(G)$ be a set of size less than $\theta$.
We define $\TT-Z$ as the set of separations $(A,B)$ of $G-Z$ of order less than $\theta-|Z|$ such that
$A=A'-Z$ and $B=B'-Z$ for some $(A',B')\in \TT$ with $Z\subseteq V(A\cap B)$.
Observe that $\TT-Z$ is a tangle in $G-Z$ of order $\theta-|Z|$ (see also~\cite[(8.5)]{rs10}).

\subsection{Local form of the structure theorem}\label{subs-loc}

Let $G_i$ be an important piece of the decomposition of an $H$-minor-free graph $G$
from Theorem~\ref{thm-strbasic} and let $\TT$ be a tangle in $G$ that points towards
$G_i$ as explained in the previous subsection.  How does the graph $G$ look like from the point of
view of the piece $G_i$?

For a set $S\subseteq V(G)$, an \emph{$S$-bridge} is a subgraph $C$ of $G$ such that either $C$ consists
of a single edge joining vertices of $S$, or $C-S$ is a connected component of $G-S$ and $C$ consists of
this connected component, all its neighbors in $S$, and all the edges between these neighbors and $V(C)\setminus S$.
The set $V(C)\cap S$ is the \emph{attachment set} of the bridge.  Let $D$ be the subgraph of $G$ with vertex set
$V(G)\setminus(V(C)\setminus S)$ and with edge set $E(G)-E(C)$.  Then $(C,D)$ is the \emph{separation defined by
the bridge $C$}; note that $V(C\cap D)$ is precisely the attachment set of the bridge.

Let $A_i$ be the set of apices of $G_i$, and let $G'_i$ be the embedded part of $G_i-A_i$ drawn in the surface $\Sigma_i$.
Each $V(G_i-A_i)$-bridge $C$ of $G-A_i$ attaches either to a clique in a vortex of $G-A_i$, or to $G'_i$.
Note that since $\TT$ points towards $G_i$, the separation $(C,D)$ of $G-A_i$ defined by the bridge belongs to the tangle $\TT-A_i$.

\subsubsection{Embedded part}
Let us first consider the bridges that attach to the embedded part.  Each of them attaches to a clique, which by the assumption
of high representativity of $G'_i$ must have order at most $4$.  Actually, it is even possible to restrict ourselves to attachments
to cliques of order at most $3$ (if a bridge $C$ attaches to a clique $K$ of size $4$,
then let $T$ be the triangle in $K$ that bounds the largest open disk and let $u$ be the vertex of $K$ not belonging to $T$;
we can modify the decomposition so that the embedded part is $G'_i-u$, by adding a clique-sum with $K$ on $T$),
and similarly the attachments to triangles can be restricted only to facial triangles.
Thus, each $V(G_i-A_i)$-bridge of $G-A_i$ that attaches to $G'_i$ can be represented
by a disk in $\Sigma_i$, chosen so that the disks representing different bridges intersect only in the vertices of $G'_i$
shared by the bridges.  This motivates the following definitions.  

A \emph{society} consists of a graph $s$ and a cyclically ordered subset of its vertices denoted by $\partial s$.
If $|\partial s|\le 3$, we say that the society is a \emph{cell}.
A \emph{segregation} of a graph $G$ is a set $S$ of edge-disjoint societies such that
$G=\bigcup_{s\in S} s$ and for all distinct $s,s'\in S$ we have $V(s\cap s')\subseteq \partial s\cap \partial s'$.
An \emph{arrangement} $\alpha$ of a segregation $S$ in a surface $\Sigma$ is a function that to vertices
of $\bigcup_{s\in S} \partial s$ assigns pairwise distinct points in $\Sigma$ and to societies of $S$ assigns
closed disks in $\Sigma$ with pairwise disjoint interiors, such that
\begin{itemize}
\item for all $s\in S$, if $v_1,\ldots, v_k$ is the cyclic ordering of $\partial s$, then the points $\alpha(v_1),\ldots,\alpha(v_k)$
appear in order in the boundary of the disk $\alpha(s)$, and
\item all distinct $s,s'\in S$ satisfy $\alpha(s)\cap \alpha(s')=\{\alpha(v):v\in \partial(s)\cap \partial(s')\}$.
\end{itemize}
Thus, in the setting of the previous paragraph, if $G''_i$ is the union of the $V(G_i-A_i)$-bridges of $G-A_i$ that attach to $G'_i$,
then $G''_i$ has a segregation into cells (corresponding to the bridges) with an arrangement in $\Sigma$.

\subsubsection{Vortices}
Next, let us consider the $V(G_i-A_i)$-bridges of $G-A_i$ that attach to the vortices of the almost-embedding of $G_i-A_i$.

Let $F$ be a vortex of $G_i-A_i$ with boundary sequence $v_1$, \ldots, $v_t$ and path decomposition of width $p-1$ with bags $B_1$, \ldots, $B_t$
in order.  Let $C_1$, \ldots, $C_m$ be the $V(G_i-A_i)$-bridges of $G-A_i$ that attach to $F$.
Recall that in a tree (or path) decomposition of a graph, each clique is contained in one of the bags.  Hence,
the attachment set of each such bridge $C_j$ is contained in one of the bags $B_1$, \ldots, $B_t$; let us choose
such a bag (arbitrarily if there are more bags with this property) and let its index be denoted by $i_j$.
Let $F'=\bigcup_{j=1}^m C_j$, and for $i=1,\ldots, t$, let $B'_i=\bigcup_{j:i_j=i} C_j$.

For a tree or path decomposition, its \emph{adhesion} is the maximum size of the intersection of its distinct bags.
Note that $B'_1$, \ldots, $B'_t$ are bags of a path decomposition of $F'$ in order.  These bags do not necessarily have
bounded size, but since the bridges $C_1$, \ldots, $C_m$ are disjoint except for the parts contained in $F$,
the decomposition has adhesion at most $p$.  This implies that for $1\le i\le t$, $F'$ contains at most $p$
pairwise vertex-disjoint paths with one end in $\{v_1,\ldots, v_i\}$ and the other end in $\{v_{i+1},\ldots,v_t\}$.

It turns out that the last property is sufficient to ensure that $F'$ has such a path decomposition of bounded adhesion
containing $v_1$, \ldots, $v_t$ in its bags in order, and it is easier to work with this property rather
than keeping track of the path decomposition.
Formally, we say that a society $s$ is a \emph{$p$-vortex} if for every partition of the cyclic ordering of $\partial s$
to contiguous intervals $I$ and $J$, there exist at most $p$ pairwise vertex-disjoint paths in $s$ from $I$ to $J$
(or equivalently by Menger's theorem, there exists a separation $(A,B)$ of $s$ of order at most $p$ with $I\subseteq V(A)$ and $J\subseteq V(B)$);
and we have the following correspondence with path decompositions of bounded adhesion.

\begin{lemma}[{Robertson and Seymour~\cite[(8.1)]{rs9}}]\label{lemma-pvortex}
Let $p\ge 0$ be an integer.
Let $s$ be a $p$-vortex and let $v_1$, \ldots, $v_m$ be the cyclic ordering of $\partial s$.
Then $s$ has a path decomposition of adhesion at most $p$ with bags $X_1$, \ldots, $X_m$ in order,
such that $v_i\in X_i$ for $1\le i\le m$.
\end{lemma}

For each $p$-vortex $s$, we will fix arbitrarily a path decomposition satisfying the conditions of Lemma~\ref{lemma-pvortex},
and call it the \emph{standard path decomposition of the $p$-vortex $s$}, and we let its bags be denoted by $X^s_1$, \ldots, $X^s_m$ in
order.  For a vertex $v_i\in \partial s$, we let $X^s(v_i)=\{v_i\}\cup(X^s_i\cap X^s_{i-1})\cup (X^s_i\cap X^s_{i+1})$,
where $X^s_0=X^s_{m+1}=\emptyset$.  Note that when $s$ is obtained from a vortex in an almost-embedding, $X^s(v_i)$ is the set of vertices in
that the bag $X^s_i$ intersects the rest of the graph.

\subsubsection{Segregations and tangles}

As explained in the previous two subsections, we aim to rephrase the structure theorem from the point
of view of a piece $G_i$ of the decomposition that is pointed to by a tangle $\TT$, 
in terms of (an arrangement of) a segregation of $G-A_i$ consisting of cells and $p$-vortices.
How does the fact that the piece is pointed to by $\TT$ reflect in this reformulation?

Each cell $s$ of the segregation corresponds directly to a $V(G_i-A_i)$-bridge of $G-A_i$ that attaches to the embedded
part of the piece $G_i$, and as we already mentioned at the beginning of Subsection~\ref{subs-loc},
this means that the separation $(s,D)$ of $G-A_i$ defined by $s$ (with $V(s\cap D)=\partial s$) belongs to the tangle $\TT-A_i$.
For a $p$-vortex $s$, the situation is a bit more complicated, as the separation $(s,D)$ of $G-A_i$ defined by $s$ may have
arbitrarily large order.  Thus, we need to be a bit more careful, and only forbid $s$ from containing the large part of any
separation of $\TT-A_i$.

Let $\TT'$ be a tangle in a graph $F$ and let $S$ be a segregation of $F$.  We say that $S$ is \emph{$\TT'$-central} if for all $s\in S$,
no separation $(C,D)\in\TT'$ satisfies $D\subseteq s$.  It turns out that for $p$-vortices, the condition of $\TT$-centrality needs to be verified only
for separations of order at most $2p+1$.
\begin{lemma}[{Robertson and Seymour~\cite[(2.1)]{rs15}}]\label{lemma-central}
Let $p\ge 1$ be an integer, let $\TT'$ be a tangle of order at least $5p+2$ in a graph $F$
and let $S$ be a segregation of $F$ such that all societies of $S$ are $p$-vortices.
If no separation $(C,D)\in \TT'$ of order at most $2p+1$ and society $s\in S$ satisfy $D\subseteq s$,
then $S$ is $\TT'$-central.
\end{lemma}
Let us remark that each cell is a $1$-vortex, and thus Lemma~\ref{lemma-central} applies even if
some societies of $S$ are cells.

\subsubsection{Local form of Theorem~\ref{thm-strbasic}}

Putting together all the ingredients we described, we obtain the following form of Theorem~\ref{thm-strbasic}
relative to a tangle.  A segregation $S$ has \emph{type $(p,k)$} if there exists a set $S_0\subseteq S$ of size at most $k$
such that all societies of $S_0$ are $p$-vortices and all societies of $S\setminus S_0$ are cells.

\begin{theorem}[{Roberson and Seymour~\cite[(3.1)]{robertson2003graph}}]\label{thm-strbasic-local}
For any graph $H$, there exist integers $k, p, a, \theta\ge 0$ with the following property.
Let $\TT$ be a tangle of order at least $\theta$ in a graph $G$.
If $H$ is not a minor of $G$, then there exists $A\subseteq V(G)$ with $|A|\le a$ and a $(\TT-A)$-central segregation $S$ of $G-A$
of type $(p,k)$ with an arrangement in a surface $\Sigma$ in which $H$ cannot be embedded.
\end{theorem}

It is relatively easy to derive Theorem~\ref{thm-strbasic} from Theorem~\ref{thm-strbasic-local}; we will explain the procedure
in the following subsection, where we use it to derive Theorem~\ref{thm-strapex} from its local form.

\subsection{Local form of Theorem~\ref{thm-strapex}}

We will now state a variant of Theorem~\ref{thm-strapex} relative to a tangle, extending Theorem~\ref{thm-strbasic-local}.
Given a segregation $S$ consisting only of cells with an arrangement $\alpha$ in a surface $\Sigma$, let $T(S)$ be the multigraph
whose vertex set is  $\bigcup_{s\in S}\partial s$, with edge set consisting of cliques on $\partial s$ for all $s\in S$;
if two vertices belong to several cells, they are joined by the corresponding number of edges.  The graph $T(S)$ is embedded in
$\Sigma$ in the natural way, with the placement of vertices given by $\alpha$ and the edges of the clique on $\partial s$ being
drawn inside the disk $\alpha(s)$ for each $s\in S$.

\begin{theorem}\label{thm-strapex-local}
For any graph $H$, there exist integers $k, p, a, \theta\ge 0$ with the following property.
Let $\TT$ be a tangle of order at least $\theta$ in a graph $G$.
If $H$ is not a minor of $G$, then there exists $A\subseteq V(G)$ with $|A|\le a$ and a $(\TT-A)$-central segregation $S$ of $G-A$
of type $(p,k)$ with an arrangement in a surface $\Sigma$ in which $H$ cannot be embedded, and there exists a set $S_0\subseteq S$
of size at most $k$ such that all societies of $S\setminus S_0$ are cells and
\begin{itemize}
\item[\textrm{(i)}] all but at most $a(H,\Sigma)-1$ vertices of $A$ only have neighbors in $A\cup \bigcup S_0$, and
\item[\textrm{(ii)}] $\bigcup_{s\in S_0} \partial s\subseteq V(T(S\setminus S_0))$ and every triangle in
$T(S\setminus S_0)$ bounds a disk in $\Sigma$ whose interior contains no vertices (only possibly edges parallel
to the edges of the triangle).
\end{itemize}
\end{theorem}

We aim to prove Theorem~\ref{thm-strapex-local}; but first, let us show that it implies Theorem~\ref{thm-strapex}.
Our strategy is to decompose an $H$-minor-free graph $G$ into pieces recursively: We obtain a structure
as in Theorem~\ref{thm-strapex-local} with respect to some tangle and use $T(S\setminus S_0)$ and the standard decompositions
of the $p$-vortices in $S_0$ to form the root piece of the decomposition of $G$.  Then, we process the cells of $S\setminus S_0$
and parts of the $p$-vortices in $S_0$ recursively, obtaining their decompositions which attach to cliques in the root piece.
There is a minor technicality in that e.g. for $s\in S\setminus S_0$, we need to find a decomposition
not of just $s$, but the graph obtained from $s$ by adding a clique on $\partial s$, so that the clique-sum operation assumptions
are satisfied.

The first step of the decomposition is accomplished using the following lemma (where the $\theta$-unbreakable set $Y$ gives the tangle, by Lemma~\ref{lemma-unbreaktangle}).
We say that graphs $G_1, \ldots, G_r\subset G$ and a graph $G_0$ form a \emph{star split of $G$ of adhesion $\kappa$} if
\begin{itemize}
\item $G\subseteq G_0\cup G_1\cup\ldots\cup G_r$,
\item $V(G_i)\cap V(G_j)\subseteq V(G_0)$ for $1\le i\le j\le r$, and
\item $V(G_i)\cap V(G_0)$ induces a clique in $G_0$ of size at most $\kappa$, for $1\le i\le r$.
\end{itemize}

\begin{lemma}\label{lemma-localform}
For any graph $H$, there exist integers $m, d, a, \theta\ge 0$ with the following property.
Let $G$ be a graph and let $Y$ be a $\theta$-unbreakable set in $G$ of size $3\theta-2$.
If $H$ is not a minor of $G$, then there exist graphs $G_1, \ldots, G_r\subseteq G$ and a graph $G_0$ with $Y\subseteq V(G_0)$
forming a star split of $G$ of adhesion $2\theta-1$, such that $Y$ induces a clique in $G_0$
and there exists a surface $\Sigma_0$ and a set $A_0\subseteq V(G_0)$ such that $G_0$, $A_0$ and $\Sigma_0$ satisfy the conditions
\textrm{(a)}--\textrm{(e)} of Theorem~\ref{thm-strapex}.
\end{lemma}
\begin{proof}
Let $k_1$, $p_1$, $a_1$, $\theta_1$ be the constants of Theorem~\ref{thm-strapex-local} applied to $H$.
Let $d=2p_1+1$, $\theta=\max(\theta_1,a_1+d+1)$, $m=k_1+3\theta-2$ and $a=a_1+3\theta-2$.

Let $\TT$ be the set of separations $(C,D)$ of $G$ of order less than $\theta$ such that $|V(C)\cap Y|\le\theta-1$.
By Lemma~\ref{lemma-unbreaktangle}, $\TT$ is a tangle of order $\theta$ in $G$.
By Theorem~\ref{thm-strapex-local} applied with the tangle $\TT$,
there exists a set $A_1\subseteq V(G)$ of size at most $a_1$ and a $(\TT-A_1)$-central segregation $S$ of $G-A_1$ of type $(p_1,k_1)$
with an arrangement in a surface $\Sigma_0$ in which $H$ cannot be embedded, and a set $S_0\subseteq S$ of size at most $k_1$
containing all non-cell elements of $S$ and satisfying the conditions (i) and (ii) from the statement of Theorem~\ref{thm-strapex-local}.
Let $\{G_1, \ldots, G_r\}$ be the set consisting of the following induced subgraphs of $G$:
\begin{itemize}
\item for each cell $s\in S\setminus S_0$, the subgraph of $G$ induced by $V(s)\cup A_s$, where $A_s$ is the set of vertices
of $A_1$ that have a neighbor in $s-\partial s$.
\item for each $p_1$-vortex $s\in S_0$ and its standard path decomposition (see Lemma~\ref{lemma-pvortex}) with bags
$X^s_1$, \ldots, $X^s_{|\partial s|}$ in order, and for $1\le i\le |\partial s|$, the subgraph of $G$ induced by $X^s_i\cup A_1$.
\end{itemize}
Let $T$ be a simple graph obtained from $T(S\setminus S_0)$ by suppressing parallel edges (i.e., removing all but one edge between
each two vertices).
Let $G''_0$ be the graph equal to the union of $T$ and the cliques with vertex sets $X^s(v)$ for each
$p_1$-vortex $s\in S_0$ and each $v\in \partial s$ (let us recall that $X^s(v)$ was defined after Lemma~\ref{lemma-pvortex}
in such a way that $|X^s(v)|\le 2p_1+1=d$).
Observe that $G''_0$ is almost embedded in $\Sigma_0$ with at most $k_1$ vortices of depth at most $d$, and that $V(G''_0)\cap V(G_i)$
induces a clique in $G''_0$ for $1\le i\le r$.
By the condition (ii) of Theorem~\ref{thm-strapex-local}, every triangle in the embedded part $T$ of $G''_0$ bounds a $2$-cell face.

Let $G'_0$ be the graph obtained from $G''_0$ by
\begin{itemize}
\item adding a clique with vertex set $A_1$,
\item for each cell $s\in S\setminus S_0$ adding all edges between the vertices of $A_s$ and the vertices of $\partial s$, and
\item for each $p_1$-vortex $s\in S_0$ and each vertex $v\in \partial s$, adding all edges between $X^s(v)$ and $A_1$.
\end{itemize}
Since $S$ is a segregation of $G-A_1$, it follows that $G\subseteq G'_0\cup G_1\cup\ldots\cup G_r$.
Furthermore, note that $G''_0=G'_0-A_1$ and that for $1\le i<j\le r$, the set $V(G'_0)\cap V(G_i)$ induces a clique in $G'_0$,
and $V(G_i)\cap V(G_j)\subseteq V(G'_0)$.

Let $G_0$ be the graph obtained from $G'_0$ by adding vertices of $Y\setminus V(G'_0)$ and the edges of the clique with vertex set $Y$,
and for $1\le i\le r$ and each vertex $y\in Y\cap V(G_i)$, adding all edges between $y$ and $V(G_i)\cap V(G'_0)$.
This ensures that $V(G_i)\cap V(G_0)$ induces a clique in $G_0$.  We now provide a bound on the size of this clique.  The vertex set of the clique
is $(V(G_i)\cap (V(G'_0)\cup Y))$, and thus its order is at most $|V(G_i)\cap V(G''_0)|+|A_1|+|V(G_i)\cap Y|\le d+a_1+|V(G_i)\cap Y|$.
Hence, we need to bound the last term.
Recall that $G_i-A_1$ is either equal to $\partial s$ for some $s\in S\setminus S_0$, or to $X^s_v$ for some $s\in S_0$ and $v\in\partial s$.
Let $(G_i-A_1,D_i)$ be the separation of $G-A_1$ such that $V((G_i-A_1)\cap D_i)=\partial s$ in the former case, and $V((G_i-A_1)\cap D_i)=X^s(v)$ in
the latter case.  We have $|V((G_i-A_1)\cap D_i)|\le\max(d,3)=d$, and since the order of the tangle $\TT-A_1$ is at least $\theta-a_1>d$,
we have either $(G_i-A_1,D_i)\in \TT-A_1$ or $(D_i,G_i-A_1)\in \TT-A_1$.  The latter would contradict the fact that the segregation $S$ is $(\TT-A_1)$-central,
and thus the former holds.  Since $(G_i-A_1,D_i)\in \TT-A_1$, there exists a separation $(C'_i,D'_i)$ of $G$ such that
$(C'_i,D'_i)\in \TT$, $A_1\subseteq V(C'_i\cap D'_i)$, $C'_i-A_1=G_i-A_1$, and $D'_i-A_1=D_i$.  By the choice of $\TT$,
we have $|V(C'_i)\cap Y|<\theta$, and thus $|V(G_i)\cap Y|<\theta$.  We conclude that the clique $V(G_i)\cap V(G_0)$
has order at most
$$|V(G_i)\cap V(G''_0)|+|A_1|+|V(G_i)\cap Y|<d+a_1+\theta<2\theta.$$
Consequently, $G_1, \ldots, G_r$ together with $G_0$ form a star split of $G$ of adhesion $2\theta-1$.

By the construction, $Y$ induces a clique in $G_0$.  Let $A_0=A_1\cup Y$ and let us argue that $G_0$, $A_0$, and $\Sigma_0$ satisfy the
conditions (a)--(e).  The condition (a) follows by the choice of $\Sigma_0$.  We have $|A_0|\le |A_1|+|Y|\le a_1+3\theta-2=a$, giving (b).

For the condition (c), we start with the almost embedding of $G''_0$ in $\Sigma_0$ with at most $k_1$ vortices of depth at most $d$.
For a vertex $y\in Y\setminus A_1$, let $Z'_y$ be the set of non-boundary vertices of the embedded part $T$ of $G''_0$ that are adjacent to $y$
and do not belong to $Y$. Note that $Z'_y$ can be non-empty only if $y$ is contained either in the embedded part of $G''_0$,
or in $V(s)$ for some $s\in S\setminus S_0$, and in the latter case, $Z'_y\subseteq \partial s$.  In either case, the vertices
of $Z'_y$ are incident with the same face of $T$.
For each $y\in Y\setminus A_1$, select $Z_y\subseteq Z'_y$ such that $\bigcup_{y\in Y\setminus A_1} Z_y=\bigcup_{y\in Y\setminus A_1} Z'_y$
and the sets $Z_y$ for $y\in Y\setminus A_1$ are pairwise disjoint (i.e., remove repeated v from all but one of the sets $Z'_y$).  For each $y\in Y\setminus A_1$ such that $Z_y$ is non-empty,
we can select a closed disk $\Delta_y\subset \Sigma_0$ intersecting $T-Y$ exactly in the vertices of $Z_y$,
so that the disks for distinct vertices of $Y\setminus A_1$ are disjoint, and furthermore they are disjoint from the disks
representing the vortices of $G''_0$.  For each $y\in Y\setminus A_1$ such that $Z_y$ is non-empty, we create a new vortex
of depth $1$ consisting only of the vertices of $Z_y$ (and no edges).  In this way, we obtain an almost-embedding of
$G''_0-Y=G_0-A_0$ in $\Sigma_0$ with at most $k_1+|Y|\le k_1+3\theta-2=m$ vortices of depth at most $d$, as required in (c).

By the condition (ii) of Theorem~\ref{thm-strapex-local}, each triangle in $T$ bounds a $2$-cell face,
and thus each triangle in the embedded part of $G_0-A_0$ (which is equal to $T-Y$) bounds a $2$-cell face as well.
Hence, the condition (d) is satisfied.

Finally, consider the condition (e).  By the condition (i) of Theorem~\ref{thm-strapex-local},
there exists a set $A'_1\subseteq A_1$ of size at most $a(H,\Sigma_0)-1$ such that
vertices of $A_1-A'_1$ are in $G$ only adjacent to vertices of $A_1\cup \bigcup S_0$.
In the construction of $G'_0$, the only new adjacencies between $A_1$ and the embedded part $T$
are formed by edges between $A_s$ and $\partial s$ for cells $s\in S\setminus S_0$, and the definition of $A_s$
implies that $A_s\subseteq A'_1$.  Finally, in the construction of almost-embedding of $G_0-A_0$, we
introduced new vortices to contain the neighbors of the vertices of $Y\setminus A_1$.  Consequently,
among the apices of $A_0$, only those belonging to $A'_1$ can be major, showing that the condition (e) is satisfied.
\end{proof}

Proving Theorem~\ref{thm-strapex} is now just a matter of applying Lemma~\ref{lemma-localform} recursively,
with a minor technicality of dealing with breakable sets.

\begin{proof}[Proof of Theorem~\ref{thm-strapex}, assuming Theorem~\ref{thm-strapex-local}]
Let $m$, $d$, $a_0$ and $\theta$ be the constants of Lemma~\ref{lemma-localform} applied to $H$, and let $a=\max(a_0,4\theta-3)$.
We will prove a stronger claim: for any $H$-minor-free graph $G$ and a set $Y\subseteq V(G)$ of size at most $3\theta-2$,
the graph $G+K_Y$ obtained from $G$ by adding edges of the clique on $Y$ is a clique-sum of graphs $G_1$, \ldots, $G_s$ 
satisfying (with appropriate subsets of their vertices and surfaces) the conditions (a)--(e).
We will prove the claim by induction on the number of vertices of $G$, and thus we assume that the claim
holds for all graphs with less than $|V(G)|$ vertices.

If $|V(G)|<3\theta-2$, then we set $s=1$, $G_1=G+K_Y$, $A_1=V(G)$, and we let $G_1$ be the null surface.
Therefore, suppose that $|V(G)|\ge 3\theta-2$.  Note that if $Y'\supseteq Y$ and $G+K_{Y'}$ has a decomposition
as described in the first paragraph, then $G+K_Y$ has such a decomposition as well (obtained by removing
from the pieces of the decomposition all edges of $E(G+K_Y')\setminus E(G+K_Y)$
not used in clique-sums).  Hence, we can without loss of generality add vertices to $Y$, so that $|Y|=3\theta-2$.

If $Y$ is $\theta$-unbreakable, then let $G'_1,\ldots, G'_r$ and $G'_0$ be the star split of $G$ of adhesion $2\theta-1$
obtained by Lemma~\ref{lemma-localform}.  For $1\le i\le r$, let $Y_i=V(G'_i\cap G'_0)$.  We have $|Y_i|\le 2\theta-1$,
and since $Y\cap V(G'_i)\subseteq Y_i$, it follows that not all vertices of $Y$ belong to $G'_i$.  Consequently, $|V(G'_i)|<|V(G)|$.
By the induction hypothesis, $G'_i+K_{Y_i}$ can be expressed as a clique-sum of pieces satisfying the conditions (a)--(e).
Since $Y_i$ induces a clique in $G'_0$ by the definition of a star split, and since $G'_0$ satisfies the conditions (a)--(e)
and $Y$ induces a clique in $G'_0$ by Lemma~\ref{lemma-localform}, it follows that $G+K_Y$ can be expressed as a clique-sum of
pieces satisfying the conditions (a)--(e).

Finally, consider the case that $Y$ is not $\theta$-unbreakable, and thus there exists a separation $(C_1,C_2)$ of $G$ of order less than $\theta$
such that for $i=1,2$, we have $|V(C_1\cap C_2)|+|Y\setminus V(C_i)|<3\theta-2$.
In particular, $|V(C_i)\cap Y|<|Y|$, and thus $|V(C_i)|<|V(G)|$.  Letting $Y_i=(V(C_1\cap C_2))\cup (Y\setminus V(C_{3-i}))$,
we conclude by the induction hypothesis that $C_i+K_{Y_i}$ can be expressed as a clique-sum of pieces satisfying the conditions (a)--(e).
Let $G_1=K_{Y\cup V(C_1\cap C_2)}$, note that $|V(G_1)|\le |Y|+|V(C_1\cap C_2)|\le 4\theta-3\le a$, and thus
$G_1$ satisfies the conditions (a)--(e) with $A_1=V(G_1)$ and $\Sigma_1$ being the null surface.  Clearly, $Y_1$ and $Y_2$
induce cliques in $G_1$, and thus $G+K_Y$ can be expressed as a clique-sum of
pieces satisfying the conditions (a)--(e).
\end{proof}

\subsection{Distance and minors in embedded graphs}\label{ss-emb}

We would now like to derive Theorem~\ref{thm-strapex-local} from Theorem~\ref{thm-strbasic-local}, the main idea being
that if there were many apex vertices each attaching to many distant parts of the arrangement, we would obtain $H$ as a minor of $G$
(and otherwise we can create new $p$-vortices to cover the points where the apex vertices attach).  However, in addition to the need
to state these nebulous ideas formally, there are two significant issues to deal with.  The first (and easier to fix) is that
the apex vertices may attach to the interiors of the cells in $S\setminus S_0$, and the cells do not necessarily have to be connected,
making such attachments worthless in obtaining a minor; we will deal with this issue in the next subsection simply by requiring
that the cells are connected, and splitting disconnected cells into several subcells.
Here, we will consider the second obstacle: the arrangement of the segregation in the surface does not need to imply existence of any
substantial minors.

To see what the problem is, let us first consider a simpler case of minors of graphs embedded in surfaces.  A \emph{simple closed curve}
in a surface $\Sigma$ is a subset of $\Sigma$ homeomorphic to the circle (while a \emph{simple curve} is a subset homeomorphic to a
line segment).  A simple closed curve $c$ in $\Sigma$ is \emph{contractible}
if there exists a closed disk in $\Sigma$ whose boundary is equal to $c$, and \emph{non-contractible} otherwise.  Given a graph $G$ embedded
in $\Sigma$, a curve or a closed curve $c$ is \emph{$G$-normal} if $c$ intersects $G$ only in the vertices of $G$.
If $\Sigma$ is not the sphere, the \emph{representativity}
of $G$ is the minimum possible number of intersections between $G$ and a $G$-normal non-contractible simple closed curve.
Note that every minor of $G$ has at most as large representativity as $G$.  Hence, in order to allow a graph $H$ drawn in $\Sigma$
to be a minor of $G$, we need to ensure that the representativity of $G$ is at least as large as the representativity of $H$.

Importantly for us, Robertson and Seymour~\cite{rs7} proved a rough converse to this observation.

\begin{theorem}\label{thm-repr}
Let $H$ be a graph and let $\Sigma$ be a surface other than the sphere.
If $H$ can be drawn in $\Sigma$, then there exists an integer $r_0$ such that every graph $G$
drawn in $\Sigma$ with representativity at least $r_0$ contains $H$ as a minor.
\end{theorem}

Thus, it would be nice to enhance Theorem~\ref{thm-strbasic-local} to guarantee that the arrangement of $S$ (or more precisely,
the embedding of the graph $T(S\setminus S_0)$) has large representativity.  This is indeed possible, and we will do
so in the following subsection, but let us first discuss two caveats: the case that $\Sigma$ is the sphere, and the fact that we
actually need a claim stronger than Theorem~\ref{thm-repr}.

The case that $\Sigma$ is the sphere is excluded from the statement of Theorem~\ref{thm-repr}, since the definition of representativity
is meaningless in that case.  Instead, in that case we need to assume that $G$ contains a large grid, which by Lemma~\ref{lemma-dem}
is equivalent with $G$ having large tree-width, and that in turn is equivalent with the existence of a tangle of large order in $G$
(Robertson and Seymour~\cite[(5.2)]{rs10}).  It is convenient to be able to handle both the sphere and the non-sphere case uniformly,
and fortunately large representativity is also equivalent with the existence of a certain tangle of large order.

Suppose that $G$ is embedded in a surface $\Sigma$ different from the sphere with representativity $\theta$, and consider any $G$-normal simple closed curve $c$
that intersects $G$ in less than $\theta$ points.  Then $c$ is contractible, and thus it bounds a closed disk $\Delta\subset\Sigma$.
Consequently, $c$ defines a separation $(A_c,B_c)$ of $G$ of order less than $\theta$ such that $A_c$ is the subgraph of $G$ drawn in $\Delta$ and $B_c$ is the
subgraph of $G$ drawn in the closure of $\Sigma\setminus\Delta$.  Thus, the embedding assigns a ``large part'' ($B_c$---the one still using
substantial part of $\Sigma$, while $A_c$ is planar) to each such separation.  Of course, not all separations of $G$ are of this form,
but it turns out that this still uniquely determines a tangle.
\begin{theorem}[{Robertson and Seymour~\cite[(4.1)]{rs11}}]\label{thm-reprtangle}
Let $G$ be a graph embedded in a surface $\Sigma$ different from the sphere with representativity $\theta$.
There exists a unique tangle $\TT$ of order $\theta$ in $G$ such that $(A_c,B_c)\in \TT$ for all $G$-normal simple closed curves $c$
that intersect $G$ in less than $\theta$ points.
\end{theorem}

This motivates the following definition: Let $G$ be a graph embedded in a surface $\Sigma$ (possibly the sphere)
and let $\TT$ be a tangle in $G$ of order $\theta$.  We say that $\TT$ is \emph{respectful} if for every $G$-normal simple closed curve $c$
in $\Sigma$ that intersects $G$ in less than $\theta$ points, there exists a closed disk $\Delta\subseteq \Sigma$ bounded by $c$
such that $(G\cap \Delta,G\cap \overline{\Sigma\setminus\Delta})\in \TT$;
let this disk be denoted by $\ins(c)$ (or by $\ins_\TT(c)$, when the tangle is not uniquely determined by the context).
Observe that if $\Sigma$ is not the sphere, then the fact that $\TT$ is respectful implies
that every non-contractible $G$-normal simple closed curve intersects $G$ in at least $\theta$ points, and thus the representativity of $G$
is at least $\theta$.  On the other hand, if $\Sigma$ is the sphere, then the condition that $\TT$ is respectful is trivially true.
Hence, Theorem~\ref{thm-repr} can be generalized to include the sphere case as follows.
\begin{theorem}\label{thm-resp}
Let $H$ be a graph and let $\Sigma$ be a surface.
If $H$ can be drawn in $\Sigma$, then there exists an integer $\theta_0$ such that every graph $G$
drawn in $\Sigma$ that has a respectful tangle of order at least $\theta_0$ contains $H$ as a minor.
\end{theorem}

The second caveat we mentioned is that we actually need more than just this result.  In our setting, we have apex vertices
that attach to various points in the embedded graph, and we want to use them to obtain the minor we seek.  This requires
us to find a \emph{rooted minor} in the embedded graph, with roots corresponding to the attachments of the apex vertices.
Let $H$ and $G$ be graphs and let $f:V(H)\to V(G)$ be a partial injective function.  We say that $H$ is an \emph{$f$-rooted minor} of $G$
if we can assign to vertices $v\in V(H)$ pairwise vertex-disjoint connected subgraphs $\mu(v)$ of $G$, and assign to edges $e\in E(H)$
distinct edges $\mu(e)\in E(G)$, such that
\begin{itemize}
\item if $e=uv\in E(H)$, then $\mu(e)$ has one end in $\mu(u)$ and the other end in $\mu(v)$, and
\item if $v\in \dom(f)$, then $f(v)\in V(\mu(v))$.
\end{itemize}
The assignment $\mu$ is a \emph{model} of the $f$-rooted minor $H$ in $G$ (for unrooted minors, a model
is defined in the same way, excluding the last condition).

We would like to strengthen Theorem~\ref{thm-resp} to state that if $f$ maps vertices in $\dom(f)$ to sufficiently ``distant'' vertices
of $G$, then $H$ is an $f$-rooted minor of $G$.  What does ``distant'' mean in this context?  Robertson and Seymour~\cite{rs11} showed that
respectful tangles naturally give rise to a metric in the embedded graph.  Before we define this metric, it is convenient to introduce
the concept of a radial graph.

An embedding of a graph is \emph{2-cell} if all its faces are $2$-cell.
Let $G$ be a graph with a $2$-cell embedding in a surface $\Sigma$, and let $\TT$ be a respectful tangle in $G$ of order $\theta\ge 3$.
A \emph{radial graph} $\R(G)$ is a bipartite graph drawn in $\Sigma$ obtained as follows: The vertex set of $\R(G)$
consists of $V(G)$, and for each face $f$ of $G$, a vertex $v_f$ drawn inside $f$.  If $v_1v_2\ldots v_k$ is the facial
walk of $f$, then $\R(G)$ contains edges $v_fv_1$, \ldots, $v_fv_k$ drawn inside $f$ in the natural way.  For a vertex $v\in V(G)$,
let $\rad(v)=v$ be the corresponding vertex of $\R(G)$.  For a face $f$ of $G$, let $\rad(f)=v_f$.  Finally, note that
each edge $e\in E(G)$ is contained in a unique face of $\R(G)$ of length $4$, and let $\rad(e)$ denote this face.
By an \emph{atom} of $G$, we mean a vertex, an edge, or a face of $G$, and let $\atom(G)$ denote the set of atoms of $G$.

Each cycle $K$ in $\R(G)$ traces a simple closed curve $c$ that intersects $G$ in $|K|/2$ vertices; thus, if $|K|<2\theta$,
we can define $\ins(K)=\ins(c)$.  Next, we extend $\ins$ to other subgraphs of $\R(G)$: An \emph{$\ell$-restraint} is a connected subgraph $W$ of $\R(G)$
containing a closed walk of length less than $2\ell$ that traverses all edges of $W$.  If $\ell<\theta$ and $W$ is an $\ell$-restraint (so all cycles
in $W$ have length less than $2\theta$), we define $\ins(W)$ as the union of $W$ and of $\ins(K)$ for all cycles $K$ of $W$.

Finally, we are ready to define the metric: for any atoms $x$ and $y$ of $G$, let $d_\TT(x,y)=0$ if $x=y$, let $d_\TT(x,y)=\ell$
if $x\neq y$ and $\ell<\theta$ is the smallest integer such that $\R(G)$ contains an $\ell$-restraint $W$ with $\rad(x)\cup\rad(y)\subseteq\ins(W)$,
and let $d_\TT(x,y)=\theta$ if no such $\ell$-restraint exists.  Let us remark that it need not be immediately obvious to the reader that $d$ is indeed a metric;
this was proved in~\cite[(9.1)]{rs11}.  Note that the metric has some connection to the usual graph distance: if vertices $x,y\in V(G)$ are connected
by a path $P$ of length $n$ in $G$, then the union of the boundaries of the faces $\rad(e)$ for $e\in E(P)$ forms a $2n$-restraint showing
that $d_\TT(x,y)\le 2n$.  However, there are other reasons why two vertices of $G$ could be near, e.g., if they are both contained in $\ins(K)$
for a short cycle $K\subseteq \R(G)$.  Nevertheless, in a metric derived from a respectful tangle of large order, some atoms
are far apart.

\begin{lemma}[{Roberson and Seymour~\cite[(8.12)]{rs11}}]\label{lemma-far}
Let $G$ be a graph with a $2$-cell embedding in a surface and let $\TT$ be a respectful tangle in $G$ of order $\theta\ge 1$.
For every atom $a$ of $G$, there exists an edge $e$ of $G$ such that $d_\TT(a,e)=\theta$.
\end{lemma}

With the issue of the metric out of the way, we can state the result concerning the existence of rooted minors in embedded graphs,
which appears implicitly in~\cite{rs12} (see (3.2) and its application in the proofs of (4.3), (4.4), and (4.5)---Theorem~\ref{thm-gmembed}
can be proved in the same way).

\begin{theorem}\label{thm-gmembed}
For every surface $\Sigma$ and a graph $H$ that can be drawn in $\Sigma$, there exists an integer $\theta_0$ such that the following holds.
Let $G$ be graph $2$-cell embedded in $\Sigma$ with a respectful tangle $\TT$ of order at least $\theta_0$.
Let $f:V(H)\to V(G)$ be a partial injective function. If all distinct vertices $x,y\in \dom(f)$ satisfy
$d_\TT(f(x),f(y))\ge\theta_0$, then $H$ is an $f$-rooted minor of $G$.
\end{theorem}

\subsection{Cleaning up the structure}

We would now like to enhance Theorem~\ref{thm-strbasic-local} so that the arrangement in the surface (or more precisely, the graph $T(S\setminus S_0)$)
is guaranteed to have a respectful tangle of large order, so that Theorem~\ref{thm-gmembed} can be applied.  It is natural to ask for the
tangle $\TT-A$ from the statement of Theorem~\ref{thm-strbasic-local} to have this property.  Indeed, this is possible, with a minor caveat:
as we already mentioned, we will be working in the graph $T(S\setminus S_0)$ different from $G-A$, and thus we must must actually consider
a different (but related) tangle in $T(S\setminus S_0)$.  Another point is that using Theorem~\ref{thm-gmembed}, we will obtain a minor in $T(S\setminus S_0)$,
which we would like to be also a minor of $G-A$.  Hence, it is natural to ask that $T(S\setminus S_0)$ is a minor of $G-A$, which requires adding further
assumptions on the cells of $S\setminus S_0$.

This motivates the following definitions.  Let $F$ be a minor of a graph $G$, and let $\mu$ be a model of $F$ in $G$.
Let $\TT_F$ be a tangle of order $\theta\ge 2$ in $F$.  Let us define $\TT_G$ as the set of separations $(A,B)$ of $G$ of
order less than $\theta$ such that there exists a separation $(A',B')$ of $F$ belonging to $\TT_F$ satisfying
$\mu(E(F))\cap E(A)=\mu(E(A'))$ (i.e., the edges of the minor of $F$ in $G$ that are contained in $A$ are
precisely the edges of $A'$).  Then $\TT_G$ is a tangle in $G$ of order $\theta$ (see~\cite[(6.1)]{rs10}).
If $\TT$ is a tangle in $G$ of order at least $\theta$ such that $\TT_G\subseteq \TT$, we say that the tangle $\TT$ is
\emph{conformal} with $\TT_F$.  Note that the conformality is defined with respect to a specific model $\mu$ of $F$ in $G$
(and indeed, different models may give rise to different tangles in $G$); however, the model will usually be clear from the context
and we will omit mentioning it.

Let $s$ be a cell.  We say that $s$ is \emph{linked} if $s$ is connected, $\partial s\neq\emptyset$, and
letting $K_s$ denote the clique with vertex set $\partial s$ and $\id$ the identity function from $V(K_s)$ to $V(S)$,
there exists an $\id$-rooted minor of $K_s$ in $s$.  We say that a segregation $S$ consisting only of cells is \emph{linked}
if all its cells are linked.  Note that in that case, $T(S)$ is a minor of $\bigcup S$.

If $\rho$ is an arrangement of a segregation $S$ in some surface and $S_0$ is a subset of $S$ such that all elements of $S\setminus S_0$
are cells, then note that for any $s\in S_0$, the graph $T(S\setminus S_0)$ has a unique face containing the interior of the disk $\rho(s)$.
We will call this face \emph{the face of $T(S\setminus S_0)$ containing $s$}.

We can now state the enhanced form of Theorem~\ref{thm-strbasic-local}.

\begin{theorem}\label{thm-strbasic-local-better}
For any graph $H$, there exist integers $k,p\ge 0$ such that for any non-decreasing positive function $\phi$ of one variable,
there exist integers $\theta>\alpha\ge 0$ with the following property.
Let $\TT$ be a tangle of order at least $\theta$ in a graph $G$.  If $H$ is not a minor of $G$, then
there exists $A\subseteq V(G)$ with $|A|\le \alpha$ and a $(\TT-A)$-central segregation $S$ of $G-A$
of type $(p,k)$, which has an arrangement in a surface $\Sigma$ in which $H$ cannot be embedded, and there exists $S_0\subseteq S$ 
of size at most $k$ such that all societies of $S\setminus S_0$ are cells and
\begin{itemize}
\item[\textrm{(i)}] the segregation $S\setminus S_0$ is linked,
\item[\textrm{(ii)}] the embedding of $T(S\setminus S_0)$ is $2$-cell and contains a respectful tangle $\TT'$ of order at least $\phi(|A|)$ conformal with $\TT-A$, and
\item[\textrm{(iii)}] if $f_1$ and $f_2$ are faces of $T(S\setminus S_0)$ containing distinct $p$-vortices of $S_0$, then $d_{\TT'}(f_1,f_2)\ge\phi(|A|)$.
\end{itemize}
\end{theorem}

The enhancements of Theorem~\ref{thm-strbasic-local-better} are basically just restatements of some of the results of~\cite{rs17}, and we will provide more
details shortly; however, it does not seem possible to state the derivation without many further technical definitions with no further use in this paper,
and thus this will be an ``expert-only'' explanation not accessible to anyone not already familiar with the Robertson-Seymour series of papers.
Here, let us provide an intuition of how Theorem~\ref{thm-strbasic-local-better} can be derived from Theorem~\ref{thm-strbasic-local}.

For the condition (i), we can without loss of generality assume that the cells are connected, since otherwise we can split disconnected cells
into several subcells and arrange them in the surface in a natural way inside the disk corresponding to the original cell.
Without loss of generality, the set $S_0$ is non-empty (since every cell is also a $1$-vortex).  Hence, if some cell $s\in S\setminus S_0$ has
empty boundary, then we can make it a part of one of the $p$-vortices of $S_0$.  So, if a cell $s\in S\setminus S_0$ is not linked,
it has $|\partial s|=3$ and it does not contain triangle on $\partial s$ as a rooted minor; but then it is easy to see that $s$ contains a $1$-cut
separating a vertex of $\partial s$ from the other two, and we can split $s$ into two subcells on this $1$-cut.  Repeating these reductions,
we ensure that (i) holds.

For the condition (ii), it is easy to derive a conformal tangle $\TT'$ in $T(S\setminus S_0)$ of large order from the tangle $\TT-A$ in $G-A$,
using the fact that the segregation $S$ is $(\TT-A)$-central and that the elements of $S_0$ are $p$-vortices.  If $\Sigma$ is the sphere,
then $\TT'$ is automatically respectful.  Suppose that $\Sigma$ is not the sphere.  If the embedding of $T(S\setminus S_0)$ in $\Sigma$ has
large representativity, then by Theorem~\ref{thm-reprtangle} it contains a unique respectful tangle, and using the fact that the segregation $S$ is $(\TT-A)$-central,
we can see that $\TT'$ is this tangle\footnote{This is actually a bit of an oversimplification, as $\TT'$ could also be a non-respectful tangle
derived from a planar grid embedded in the surface and separated from the part of the embedding with large representativity
by a small cut; but let us ignore this case, which can be dealt with similarly to the described one.}.  Hence, the only problem would be if there existed a $T(S\setminus S_0)$-normal non-contractible simple closed curve $c$
intersecting $T(S\setminus S_0)$ in only a few vertices; let $A_1$ denote the set of these vertices.  But then $G-(A\cup A_1)$ has a segregation of type $(p,k)$
with an arrangement in the simpler surface obtained from $\Sigma$ by deleting $c$ and capping the resulting hole(s) by disk(s) (with minor technical complications if $\Sigma-c$ is disconnected),
and we repeat the reasoning.  As with each repetition the genus of the surface decreases (and the genus of $\Sigma$ is bounded since $H$ cannot be drawn in it),
the number of repetitions is bounded, and in each repetition we only add a bounded number $|A_1|$ of new apex vertices, and thus in the end the number of apex vertices will still be bounded,
as required.  The condition (iii) is argued about similarly.

Let us remark at this point about an additional feature of Theorem~\ref{thm-strbasic-local-better} we did not mention so far: the possibility to prescribe
the dependence $\phi(|A|)$ of the order of the tangle $\TT'$ on the number $|A|$ of apex vertices.  Why is this needed?  We could of course in the
process described in the previous paragraph aim just for some fixed bound $\phi$ on the representativity of the embedding of $T(S\setminus S_0)$ and
the corresponding respectful tangle $\TT'$.  But then in each repetition of the argument we may be introducing up to $\phi-1$ new apex vertices,
and in the end we could have many more apex vertices than $\phi$, which would complicate our further arguments (it is much easier to deal with the case
that we have a very small number of apex vertices attaching to a huge graph in the surface).  Fortunately, we can easily amend the procedure from the previous paragraph
to increase our requirement for the representativity of the embedding depending on how many apex vertices we already got.

\subsubsection{($\star$) Cleaning up the structure: the details}\label{sssec-clear}

For the formal argument, we use the following result, which is essentially (13.4) of~\cite{rs17}. 

\begin{theorem}\label{thm-gm17}
For any graph $H$, there exist integers $p$ and $q$ such that for any non-decreasing positive function $\sigma$ of one variable,
there exist integers $\theta>z\ge 0$ with the following property.
Let $\TT$ be a tangle of order at least $\theta$ in a graph $G$ controlling no $H$-minor
of $G$. Then there exists $A\subseteq V(G)$ with $|A|\le \alpha$ and a
true $\sigma(|A|)$-redundant $(\TT-A)$-central portrayal of $G-A$ with warp $\le\!p$ and at most $q$ cuffs in a surface in which $H$ cannot be embedded.
\end{theorem}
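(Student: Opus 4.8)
The plan is to obtain this as a restatement of~(13.4) of~\cite{rs17}; essentially all of the work is in matching up terminology and the order of the quantifiers. First I would quote~(13.4) in the form in which it is proved there: for every graph $H$ there are integers $p$ and $q$, depending only on $H$, such that for every integer $\ell\ge 0$ there is a threshold $\theta_0=\theta_0(H,\ell)$ with the property that every tangle $\TT$ of order at least $\theta_0$ in a graph $G$ controlling no $H$-minor admits a set $A\subseteq V(G)$ with $|A|\le z_0$, where $z_0=z_0(H)$ depends only on $H$, together with a true $\ell$-redundant $(\TT-A)$-central portrayal of $G-A$ with warp $\le p$ and at most $q$ cuffs in a surface in which $H$ cannot be drawn. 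If~(13.4) is phrased in terms of a $\TT$-central segregation with an arrangement rather than a portrayal, one first passes to a portrayal using the standard equivalence between the two formalisms; the notions of \emph{warp}, \emph{cuff}, \emph{redundant} and \emph{central} are then exactly those recalled at the start of this section.

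Given this, the deduction is short, and the key observation is that the apex bound $z_0$ depends only on $H$, not on the redundancy parameter $\ell$. So, given a non-decreasing positive function $\sigma$, I would set $z=z_0$ and apply~(13.4) with redundancy parameter $\ell=\sigma(z)$, and then take $\theta=\max\bigl(\theta_0(H,\sigma(z)),\,z+1\bigr)$, so that $\theta>z$ as required. For any tangle of order at least $\theta$ in a graph controlling no $H$-minor we then obtain $A$ with $|A|\le z$ and a true $\sigma(z)$-redundant $(\TT-A)$-central portrayal of the stated kind; since $\sigma$ is non-decreasing and $|A|\le z$, we have $\sigma(|A|)\le\sigma(z)$, so this portrayal is in particular true $\sigma(|A|)$-redundant. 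The integers $p$ and $q$ are those furnished by~(13.4) and depend only on $H$, matching the quantifier structure of the statement.

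The one place that needs care is pinning down the exact form of~(13.4): I would check that the hypothesis used there is indeed ``$\TT$ controls no $H$-minor'' in the present sense (invoking, if needed, the correspondence between a tangle in a minor of $G$ and its induced tangle in $G$, (6.1) of~\cite{rs10}, together with conformality, so that after deleting $A$ the tangle $\TT-A$ still controls no $H$-minor of $G-A$), that every parameter of~(13.4) other than the threshold $\theta_0$ depends on $H$ alone, and that~(13.4) applies to an arbitrary $H$; if a connectivity assumption is present there, it is removed in the usual way by passing to a suitable block of $H$ or by padding $H$ with a few apex vertices, at the cost only of enlarging $p$, $q$ and $z_0$. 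I expect this bookkeeping to be the main obstacle --- once the statements are aligned, Theorem~\ref{thm-gm17} is immediate.
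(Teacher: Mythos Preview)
Your overall plan---treat the theorem as a restatement of (13.4) of \cite{rs17} and sort out the quantifiers---is exactly what the paper does; there is no independent proof in the paper either, only a remark explaining the discrepancies with (13.4). However, the specific bookkeeping you propose does not match the paper's, and your version rests on a claim the paper does not make.

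The paper's two observations are: (i) in (13.4) as \emph{stated}, the quantifier order lets $p$ and $q$ depend on $\sigma$, and one must inspect the first paragraph of the proof of (13.4) to see that in fact they depend only on $H$; (ii) in (13.4), $\sigma$ is allowed to take $p$ and the surface as additional arguments, and one collapses this by maximising over the finitely many surfaces in which $H$ cannot embed. Your sketch instead asserts that (13.4) already delivers $p,q$ depending only on $H$ and, more importantly, that the apex bound $z_0$ depends only on $H$ and not on the redundancy $\ell$. You then replace the function $\sigma$ by the single constant $\ell=\sigma(z_0)$. That reduction is valid only if $z_0$ really is independent of the redundancy; but (13.4) as stated gives no such guarantee (all of $p,q,\theta,z$ sit after the universal quantifier on $\sigma$), and the paper never claims it---indeed the paper deliberately allows $z$ to depend on $\sigma$ in the conclusion. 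If in (13.4) the apex bound does grow with the redundancy, your choice $\ell=\sigma(z_0)$ becomes circular. You also do not address point (ii) about the extra arguments of $\sigma$.

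So: right idea, but the ``one place that needs care'' is not where you put it. The load-bearing inspection of (13.4) is that $p$ and $q$ are chosen independently of $\sigma$, not that $z_0$ is; once you have that, you can keep $\sigma$ as a function throughout and read off the statement directly, without needing your reduction to a constant redundancy.
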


Let us remark that there are several differences between the statements of Theorem~\ref{thm-gm17} and of (13.4) in~\cite{rs17}.
\begin{itemize}
\item In (13.4), there is a different order of the quantifiers meaning that $p$ and $q$ depend also on $\sigma$ and not only on $H$.
However, an inspection of their choices in the first paragraph of the proof of (13.4) shows that they are independent on $\sigma$,
and thus the order of quantifiers in Theorem~\ref{thm-gm17} is correct.
\item Furthermore, the function $\sigma$ in (13.4) may have two additional parame\-ters---$p$ and the surface of the portrayal.
Here, we chose a simpler formulation without this dependence. However, since $p$ only depends on $H$ and there are only
finitely many choices of the surface (also depending only on $H$), we can maximize $\sigma$ over the possible choices of the parameters,
showing that our formulation is not significantly weaker.
\end{itemize}

\begin{proof}[Proof of Theorem~\ref{thm-strbasic-local-better}]
Choose integers $p_0$ and $q$ so that Theorem~\ref{thm-gm17} for $H$ is satisfied with $p,q$ replaced by $p_0,q$.  Let $k=\max(q,1)$ and $p=2p_0+2$. 
Let $\sigma(x)=\max(x,\phi(x))+4p+5$ and let $\theta$ and $\alpha$ be as in Theorem~\ref{thm-gm17}.

We apply Theorem~\ref{thm-gm17} to $G$, obtaining a set $A\subseteq V(G)$ and a true $\sigma(|A|)$-redundant $(\TT-A)$-central portrayal $\pi$ of $G-A$
in a surface $\Sigma'$ in which $H$ cannot be embedded.
Let $\Sigma$ be the surface without boundary obtained from $\Sigma'$ by, for each cuff $\Theta$ of $\Sigma'$, adding an open disk with boundary $\Theta$ disjoint
with $\Sigma'$.  We turn the portrayal $\pi$ into
an arrangement of a segregation $S$ of $G-A$ of type $(p,k)$ in $\Sigma$ by replacing each cuff $\Theta$ by a $p$-vortex consisting of
the union of all the graphs in the border cells of $\Theta$.
Let $S_0$ be the set of non-cell $p$-vortices of $S$ if it is non-empty, and the set consisting of one cell of $S$ otherwise.
Let us note that $S$ is $(\TT-A)$-central by (2.1) of~\cite{rs15} and (4.3) of~\cite{rs17}, since $\TT-A$ has order at least $4p+2$.
Note that by (9.1) and (9.2) of~\cite{rs17}, $S\setminus S_0$ can be made linked by, for each cell $s$, moving all components of $s$ that
whose intersection with $\partial s$ is empty to one of the societies of $S_0$.

The embedding of $T(S\setminus S_0)$ in $\Sigma$ is $2$-cell by (8.1) of~\cite{rs17} (the faces of $T(S\setminus S_0)$ containing the societies of $S_0$ are $2$-cell by (8.3)).
Let us define a tangle $\TT'$ in $T(S\setminus S_0)$ of order $\phi(|A|)$ as follows.
By (6.1) and (6.5) of~\cite{rs11}, it suffices to define an even slope $\ins$ of order $\phi(|A|)$ in the radial drawing of $T(S\setminus S_0)$.
Let $c$ be a simple closed $T(S\setminus S_0)$-normal curve intersecting $T(S\setminus S_0)$ in less than $\phi(|A|)$ vertices which corresponds to a cycle in the radial drawing of $T(S\setminus S_0)$.
Note that since $\pi$ is $\sigma(|A|)$-redundant, (6.3) of~\cite{rs17} implies that $c$ intersects at most one face containing a society of $S_0$.
If $c$ intersects such face, then by (6.4) of~\cite{rs17}, there exists a disk $\Delta\subseteq \Sigma'$ bounded by $c$ such that the part of the portrayal $\pi$
inside $c$ is small in the tangle $\TT-A$.  In this case, we set $\ins(c)=\Delta$.   Suppose now that $c$ intersects a face $f$ containing a $p$-vortex $s\in S_0$, and let $v_1$ and $v_2$ be the vertices
in the intersection of $c$ and the boundary of $f$.  Let $\Theta$ be the corresponding cuff of $\Sigma'$.
For $i\in \{1,2\}$, if $v_i\in V(s)$, then let $\Delta_i=\emptyset$.
If $v_i\not\in V(s)$, then let $\Delta_i$ be a closed disk in $\Sigma'\cap \overline{f}$ intersecting the boundary of $f$ in two vertices---$v_i$ and another
vertex $v'_i\in V(s)$---such that $c$ and $\Delta_i$ intersect in a simple curve contained in the boundary of $\Delta_i$.  Furthermore, choose $\Delta_1$ and $\Delta_2$ so that they are disjoint.
Let $c'$ be the closed curve given as the symmetric difference of $c$ and the boundary of $\Delta_1\cup \Delta_2$.  By (6.3) of~\cite{rs17} applied to the I-arc
$c'\cap \Sigma'$, there exists a disk $\Delta'$ whose boundary is contained in $\Theta\cup (c'\cap \Sigma')$ such that the part of the portrayal $\pi$
inside $c$ is small in the tangle $\TT-A$.  Let $\Delta''$ be the closure of the symmetric difference of $\Delta'$ and $\Delta_1\cup \Delta_2$.
Let $\Delta$ be the closed disk contained in $\Delta''\cup (\Sigma\setminus\Sigma')$ bounded by $c$.  We set $\ins(c)=\Delta$.
Note that the choice of $\Delta_i$ (and even $v'_i$) is not necessarily unique, but it is easy to see that all possible choices give the same value of $\ins(c)$.
Observe that $\ins$ is an even slope, and the corresponding tangle $\TT'$ in $T(S\setminus S_0)$ is conformal with $\TT-A$.
Furthermore, (6.3) and (6.4) of~\cite{rs17} imply that if $f_1$ and $f_2$ are faces of $T(S\setminus S_0)$ containing distinct $p$-vortices of $S_0$,
then $d_{\TT'}(f_1,f_2)\ge\phi(|A|)$.
\end{proof}

\subsection{Localizing the apices}

Let $G$ be a graph, $A$ a subset of its vertices, and let $S$ be a linked segregation of a subgraph $G'\subseteq G-A$ into
cells with an arrangement $\rho$ in surface $\Sigma$.  Consider any cell $s\in S$ and let $\repr(s)$ be the atom of $T(S)$ chosen as follows:
\begin{itemize}
\item if $|\partial s|=3$, then let $\repr(s)$ be the face of $T(S)$ bounded by the triangle on $\partial s$ drawn in $\rho(s)$,
\item if $|\partial s|=2$, then let $\repr(s)$ be the edge of $T(S)$ joining the vertices of $\partial s$ drawn in $\rho(s)$, and
\item if $|\partial s|=1$, then let $\repr(s)$ be the vertex of $\partial s$.
\end{itemize}
We say that $\repr(s)$ is the \emph{atom representing $s$}.
If $S$ is a subset of larger segregation $S_1$ with an arrangement in $\Sigma$ extending $\rho$, then for
$s\in S_1\setminus S$ we define $\repr(s)$ to be the face of $T(S)$ containing $s$.

Let $\TT'$ be a respectful tangle in $T(S)$ of order at least $\phi$.
For an integer $n$,
we say that a vertex $v\in A$ is a \emph{$(\phi,n)$-major apex} if there exist cells $s_1,\ldots, s_n\in S$
such that $v$ has a neighbor in each of $s_1$, \ldots, $s_n$ and $d_{\TT'}(\repr(s_i),\repr(s_j))\ge \phi$ for $1\le i<j\le n$.

Let us now prove the previously advertised claim that if at least $a(H,\Sigma)$ of the apices of $A$ are
sufficiently major, then a graph $H$ is a minor of $G$.

\begin{lemma}\label{lemma-majorminor}
For every graph $H$ and a surface $\Sigma$, there exists an integer $\phi$ as follows.
Let $G$ be a graph, $A$ a subset of its vertices, and let $S$ be a linked segregation of a subgraph $G'\subseteq G-A$ into
cells with an arrangement in $\Sigma$.  Let $\TT'$ be a respectful tangle in $T(S)$ of order at least $2\phi$.
If at least $a(H,\Sigma)$ vertices of $A$ are $(2\phi,2|E(H)|)$-major apices, then $H$ is a minor of $G$.
\end{lemma}
\begin{proof}
Let $B$ be a set of $a(H,\Sigma)$ vertices of $H$ such that $B-H$ can be drawn in $\Sigma$.
Let $H'$ be the graph obtained from $H$ subdividing each edge twice.  We interpret $B$ in natural
way as a subset of vertices of $H'$, and observe that $B$ is an independent set in $H'$ and the vertices
of $B$ have no common neighbors.
Let $\theta_0$ be the constant of Theorem~\ref{thm-gmembed} applied for $\Sigma$ and $H'-B$,
and let $\phi=\theta_0+4$.

Let $A_1\subseteq A$ be a set of $(2\phi,2|E(H)|)$-major apices of size $a(H,\Sigma)$, and let $g_0:B\to A_1$ be an arbitrary bijection.
Let $N$ be the set of neighbors of vertices of $B$ in $H'$, and note that $|N|\le 2|E(H)|$.  
We can greedily find an injective function $g_1:N\to S$ with the following properties:
\begin{itemize}
\item For every vertex $v\in N$ adjacent to a vertex $w\in B$ in $H'$, the apex $g_0(w)$ has a
neighbor in $g_1(v)$, and
\item distinct $u,v\in N$ satisfy $d_{\TT'}(\repr(g_1(u)),\repr(g_1(v)))\ge \phi$.
\end{itemize}
Indeed, suppose we already chose $g_1\restriction N'$ for some $N'\subsetneq N$,
and consider a vertex $v\in N\setminus N'$ adjacent to a vertex $w\in B$ in $H'$.  Since $g_1(w)$ is
a $(2\phi,2|E(H)|)$-major apex, it has neighbors in cells $s_1,\ldots, s_{|N|}\in S$
such that $d_{\TT'}(\repr(s_i),\repr(s_j))\ge 2\phi$ for $1\le i<j\le |N|$.  For every atom $x\in \atom(T(S))$
there exists at most one $i\in\{1,\ldots,|N|\}$ such that $d_{\TT'}(x,\repr(s_i))<\phi$.
Since $|N|>|N'|$, there exists $i\in\{1,\ldots,|N|\}$ such that $d_{\TT'}(\repr(g_1(z)),\repr(s_i))\ge\phi$
for all $z\in N'$, and we can set $g_1(v)\colonequals s_i$.

Consider now for some vertex $v\in N$ adjacent to $w\in B$ in $H'$ the cell $s=g_1(v)$.
Let $w'$ be a neighbor of $g_0(w)$ in $s$.
Since $s$ is linked, there exists a rooted model $\mu$ of a clique on
$\partial s$ as a minor of $s$.  Furthermore, $s$ is connected, and thus we can assume that
$\{V(\mu(r)):r\in\partial s\}$ is a partition of $V(s)$; let $r_v\in\partial s$ denote the vertex
such that $w'\in\mu(r_v)$.  Let $T_1$ be the graph obtained from $T(S)$ by adding vertices
of $A$ and edges $r_vg_0(w)$ for all adjacent $w\in B$ and $v\in N$, and observe that $T_1$
is a minor of $G$.

Let $g:N\to V(T(S))$ be defined by setting $g(v)=r_v$ for all $v\in N$.
Since $g(v)$ is incident with $\repr(g_1(v))$, we have $d_{\TT'}(\repr(g_1(v)),g(v))\le 2$,
and thus $d_{\TT'}(g(u),g(v))\ge \phi-4\ge \theta_0$ for distinct $u,v\in N$.
By Theorem~\ref{thm-gmembed}, $H'-B$ is a $g$-rooted minor of $T(S)$.
Consequently, $H'$ is a minor of $T_1$, and thus also a minor of $G$.
Since $H$ is a minor of $H'$, the claim of the lemma follows.
\end{proof}

Lemma~\ref{lemma-majorminor} bounds the number of major apices, and the other apices
only attach to a bounded number of areas of bounded radius.  We can even select the
areas far apart from one another by possibly combining the nearby ones to a single
larger area.  To show this, we use the following claim (which is standard, but we include its proof for completeness).

\begin{lemma}\label{lemma-mkdist}
For all integers $\theta_0,n>0$ and every non-decreasing positive function $f$, there exists an integer $\theta$ such that the following holds.  Let $Z$ and $U$ be sets of points of a metric space with metric $d$,
such that $|Z|\le n$ and for every $u\in U$ there exists $z\in Z$ with $d(u,z)<\theta_0$.  Then, there exists a subset $Z'\subseteq Z$ and an integer $t\le \theta$ such that
\begin{itemize}
\item for every $u\in U$, there exists $z\in Z'$ with $d(u,z)<t$, and
\item for distinct $z_1,z_2\in Z'$, $d(z_1,z_2)\ge f(t)$.
\end{itemize}
Furthermore, if the elements of a set $Z''\subseteq Z$ are at distance at least $\theta$ from each other, then we can choose $Z'$ so that $Z''\subseteq Z'$.
\end{lemma}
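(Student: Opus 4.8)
The plan is to build $Z'$ by a greedy contraction process: repeatedly merge pairs of points of $Z$ that are ``too close'' for the current scale, absorbing the neighbourhood obligations of the removed point into its partner, until the surviving set is suitably spread out. Concretely, I would define a sequence of thresholds and radii $(t_0, f(t_0)), (t_1, f(t_1)), \ldots$ where $t_0 = t$ and $t_{i+1}$ is obtained from $t_i$ by the rule below, and maintain an auxiliary ``covering radius'' $r_i$, starting from $r_0 = t$, with the invariant that every $u \in U$ lies within distance $r_i$ of some point of the current set $Z_i$. Initially $Z_0 = Z$ and the invariant holds by hypothesis. At step $i$, if no two points of $Z_i$ are within $f(r_i)$ of each other, we stop and output $Z' = Z_i$, $t' = r_i$. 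Otherwise pick $z_1, z_2 \in Z_i$ with $d(z_1,z_2) < f(r_i)$, delete $z_2$, and set $Z_{i+1} = Z_i \setminus \{z_2\}$; every $u$ previously covered by $z_2$ (within $r_i$) is now within $r_i + f(r_i)$ of $z_1$ by the triangle inequality, so setting $r_{i+1} = r_i + f(r_i)$ restores the invariant.

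Since $|Z_0| \le n$ and each step removes one point, the process halts after at most $n-1$ steps, so the final radius $t'$ is at most the $(n-1)$st iterate of the map $r \mapsto r + f(r)$ applied to $t$; call this bound $T = T(t,n,f)$, which depends only on $t$, $n$, and $f$ as required. When the process halts, $Z' = Z_i$ satisfies: every $u \in U$ is within $r_i = t' \le T$ of some point of $Z'$ (the invariant), and no two points of $Z'$ are closer than $f(t')$ (the stopping condition). This is exactly the conclusion. For the ``furthermore'' clause, note that if the points of $Z'' \subseteq Z$ are pairwise at distance at least $T \ge f(t')$ (using that $f$ is non-decreasing and $t' \le T$, provided we also take $T$ large enough that $T \ge f(T)$, e.g. by enlarging $T$ if necessary — or simply observe $f(t') \le f(T) \le T$ when $T$ is chosen past the relevant iterate), then during the greedy process we never need to delete a point of $Z''$ in favour of another point of $Z''$; and whenever a merge would delete a point of $Z''$ in favour of a point outside $Z''$, we simply swap the roles and delete the outside point instead. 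Thus we can run the process so that $Z'' \subseteq Z_i$ throughout, giving $Z'' \subseteq Z'$.

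The only delicate point is bookkeeping the dependence of $T$ on $f$: because $f$ is an arbitrary non-decreasing positive function, the iterates of $r \mapsto r + f(r)$ can grow arbitrarily fast, but after at most $n-1$ iterations they are still bounded by a single number determined by $t$, $n$, and $f$, which is all that is claimed. I expect the main (minor) obstacle to be stating the swap argument for the ``furthermore'' clause cleanly, i.e.\ verifying that prioritising deletion of points outside $Z''$ never forces us to delete two $Z''$-points against each other; this follows because such a forced deletion would require two points of $Z''$ within $f(r_i) \le f(T) \le T$ of each other, contradicting the hypothesis on $Z''$.
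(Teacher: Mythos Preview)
Your approach is essentially identical to the paper's: both set $T$ to be the $(n-1)$st iterate of $r\mapsto r+f(r)$ starting from $t$, and greedily delete one of any two points that are closer than $f(r_i)$ at the current scale, maintaining the covering invariant via the triangle inequality.

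The only slip is in your handling of the ``furthermore'' clause. You cannot in general enlarge $T$ so that $T\ge f(T)$ (take $f(x)=x+1$), and the chain $f(r_i)\le f(T)\le T$ is not valid. The clean fix is already implicit in your recursion: deletions occur only at steps $i\le n-2$ (there are at most $n-1$ deletions before the process must halt), and for such $i$ one has $f(r_i)=r_{i+1}-r_i<r_{i+1}\le r_{n-1}=T$. Hence any pair selected for deletion is at distance less than $T$, so it cannot consist of two points of $Z''$; you may therefore always delete the non-$Z''$ point, exactly as you say. The paper's proof does precisely this (and is equally terse about it).
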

\begin{proof}
For $1\le i\le n-1$, let $\theta_i=\theta_{i-1}+f(\theta_{i-1})$; and set $\theta=\theta_{n-1}$.  We construct a sequence of sets $Z=Z_0\supset Z_1\supset \ldots \supset Z_{n'}$ with $n'<n$
such that for every $u\in U$ and $i\le n'$, there exists $z\in Z_i$ with $d(u,z)<\theta_i$, as follows: suppose that we already found $Z_i$.  If $d(z_1,z_2)\ge f(\theta_i)$
for all distinct $z_1,z_2\in Z_i$, then set $n'=i$ and stop.  Otherwise, there exist distinct $z_1,z_2\in Z_i$ such that $d(z_1,z_2)<f(\theta_i)$ and $z_2\not\in Z''$;
in this case, set $Z_{i+1}=Z_i\setminus \{z_2\}$.  Clearly, the set $Z'=Z_{n'}$ has the required properties for $t=\theta_{n'}$.
\end{proof}

Let us now apply this lemma to neighborhoods of non-major apices.

\begin{lemma}\label{lemma-localapi}
For every graph $H$, surface $\Sigma$, integers $a,k\ge 0$ and a non-decreasing positive function $f$,
there exists an integer $\theta\ge 0$ as follows.
Let $G$ be a graph, $A$ a subset of its vertices of size at most $a$, and let $S$ be a segregation of $G-A$
with an arrangement in $\Sigma$.  Let $S_0$ be a subset of $S$ of size at most $k$ such that all societies
of $S\setminus S_0$ are linked cells.  Let $\TT'$ be a respectful tangle in $T(S\setminus S_0)$ of order at least $\theta$
such that all faces $f_1$ and $f_2$ containing distinct societies of $S_0$ satisfy $d_{\TT'}(f_1,f_2)\ge \theta$.
If $H$ is not a minor of $G$, then there exists an integer $t\le \theta$ and
sets $A_0\subseteq A$ of size at most $a(H,\Sigma)-1$ and $S_1\subseteq S$ of size at most
$k+2a|E(H)|$ such that $S_0\subseteq S_1$ and
\begin{itemize}
\item the vertices of $A\setminus A_0$ only have neighbors in societies $s\in S$
such that there exists $s_1\in S_1$ with $d_{\TT'}(\repr(s),\repr(s_1))<t$, and
\item $d_{\TT'}(\repr(s_1),\repr(s_2))\ge f(t)$ for all distinct $s_1,s_2\in S_1$.
\end{itemize}
\end{lemma}
\begin{proof}
Let $\phi$ be the constant of Lemma~\ref{lemma-majorminor} for $H$ and $\Sigma$.
Let $\theta$ be the constant of Lemma~\ref{lemma-mkdist} applied with $\theta_0=2\phi$, $n=k+2a|E(H)|$
and the function $f$.

Let $A_0\subseteq A$ consist of the apices that are $(2\phi,2|E(H)|)$-major (with $G'=\bigcup (S\setminus S_0)$
and its segregation $S\setminus S_0$).  By Lemma~\ref{lemma-majorminor}, since $H$ is not a minor of $G$ we have $|A_0|\le a(H,\Sigma)-1$.
For each $v\in A\setminus A_0$, let $S_v\subseteq S\setminus S_0$ be a maximal set such that
$v$ has a neighbor in each cell of $S_v$ and $d_{\TT'}(\repr(s_1),\repr(s_2))\ge 2\phi$ for all distinct $s_1,s_2\in S_v$.
Since $v$ is not $(2\phi,2|E(H)|)$-major, we have $|S_v|<2|E(H)|$.  Furthermore, by the maximality of $S_v$,
if $s\in S\setminus S_0$ is a cell containing a neighbor of $v$, then $d_{\TT'}(\repr(s),\repr(s_1))<2\phi$ for some $s_1\in S_v$.

Let $Z=S_0\cup \bigcup_{v\in A\setminus A_0} S_v$ and let $U$ be the set of societies $s\in S$ containing a neighbor of some
vertex of $A\setminus A_0$.  Then we can apply Lemma~\ref{lemma-mkdist} with $Z''=S_0$ and let $S_1$ be the resulting set $Z'$.
\end{proof}

\subsection{Extending the $p$-vortices}

Our plan now is to turn each of the areas of bounded radius found in Lemma~\ref{lemma-localapi}
into a $p'$-vortex for some $p'$, and in this subsection we discuss the technicalities of the process.

Firstly, we claim that each such area of small radius is contained
in an open disk in the surface whose radius is not much larger and whose boundary forms a cycle in the graph.
Let $T$ be a graph with a $2$-cell embedding in a surface $\Sigma$ and let $\TT'$ be a respectful tangle in $T$.
If $a$ is an atom of $T$ and $t$ is a positive integer, then a \emph{$t$-zone around $a$} is an open disk $\Lambda\subset \Sigma$
bounded by a cycle $C\subseteq T$ such that $a\subseteq\Lambda$ and $d_{\TT'}(a,a')\le t$ for all atoms $a'$ of $T$
contained in $\overline{\Lambda}$.

\begin{lemma}[{Robertson and Seymour~\cite[(9.2)]{rs14}}]\label{lemma-zone}
Let $T$ be a graph with a $2$-cell embedding in $\Sigma$ and let $\TT'$ be a respectful tangle in $T$ of order $\theta$.
Let $a$ be an atom of $T$.  For every integer $t$ with $2\le t\le \theta-3$, there exists a $(t+2)$-zone $\Lambda$ around $a$
such that every $a'\in\atom(T)$ with $d_{\TT'}(a,a')<t$ satisfies $a'\subseteq \Lambda$.
\end{lemma}

For a zone $\Lambda$, the subgraph of $T$ obtained by removing all vertices and edges of $T$ drawn in $\Lambda$ is said to be
obtained by \emph{clearing $\Lambda$}.  Note that since the zone $\Lambda$ is bounded by a cycle which is clearly contractible,
if the embedding of $T$ is $2$-cell, then the embedding of the graph obtained by clearing $\Lambda$ is also $2$-cell.
Quite naturally, clearing a zone of bounded radius does not
alter the distances much (the metric in the subgraph obtained by clearing is defined as based
on a conformal tangle in the subgraph).

\begin{lemma}[{Robertson and Seymour~\cite[(7.10)]{rs12}}]\label{lemma-clearing}
Let $T$ be a graph with a $2$-cell embedding in a surface and let $\TT'$ be a respectful tangle in $T$ of order $\theta\ge 4t+3$.
Let $\Lambda$ be a $t$-zone around some atom of $T$ and let $T'$ be the graph obtained from $T$ by clearing $\Lambda$.
Then, there exists a unique respectful tangle $\TT''$ in $T'$ of order $\theta-4t-2$ such that
whenever $a',b'$ are atoms of $T'$ and $a,b$ are atoms of $T$ with $a\subseteq a'$ and $b\subseteq b'$, we have
$d_{\TT'}(a,b)-4t-2\le d_{\TT''}(a',b')\le d_{\TT'}(a,b)$.  Furthermore, $\TT''$ is conformal with $\TT'$.
\end{lemma}

Let $\TT$ be a tangle in a graph $G$ and let $S$ be a $\TT$-central segregation of $G$,
let $S_0\subseteq S$ be a set of size at most $k$ such that all elements of $S\setminus S_0$ are
linked cells and all elements of $S_0$ are $p$-vortices,
and suppose that $S$ has an arrangement $\rho$ in a surface $\Sigma$ such that the embedding of $T=T(S\setminus S_0)$ is $2$-cell.
Let $\TT'$ be a respectful tangle of order $\theta$ in $T$ conformal with $\TT$.

Consider an arbitrary society $s\in S$ and let $f=\repr(s)$ be the atom of $T$ representing it.
Let $t\ge 2$ be an integer such that $\theta>12t+2$, and $d_{\TT'}(f_1,f_2)>12t+2$ for any faces $f_1$ and $f_2$ of $T$
containing different societies of $S_0$, and $d_{\TT'}(f,f_1)>12t+2$ for any face $f_1$ containing a society of $S_0$ distinct from $s$.
Recall that regardless of whether $s$ is a cell or whether $s\in S_0$, the society $s$ is a $p$-vortex.
In this setting, we are now going to show how to create a $(3t+4p)$-vortex containing $s$ and all cells of $S\setminus S_0$ such that the atoms
of $T$ that represent them are at distance at most $t$ from $f$.

First, if there exists a simple closed $T$-normal curve $c$ intersecting $T$ in at most $t$ vertices such that $f\subseteq \ins_{\TT'}(c)$, then
choose such a closed curve $c$ with $\ins_{\TT'}(c)\cap T$ maximal and let $\Delta_0=\ins_{\TT'}(c)$; otherwise, let $\Delta_0$ be the closure of $f$.
Note that $d_{\TT'}(f, a)\le t$ for every atom $a\in \atom(T)$ contained in $\Delta_0$, since in the case that $\Delta_0=\ins_{\TT'}(c)$, the curve
$c$ can be shifted slightly inside the faces of $T$ to trace a closed walk of length at most $2t$ in $\R(T)$.

Next, let $R$ be the set of vertices $v\in T$ drawn in $\overline{\Sigma\setminus\Delta_0}$ such that there exists a simple $T$-normal curve $c$ intersecting $T$ in less than $t$ points
starting in $v$ and ending in a vertex $w$ of $T$ contained in the boundary of $\Delta_0$.
Observe that the distance in the metric $d_{\TT'}$ from $v$ to $w$ is less than $2(t-1)$, and
thus $d_{\TT'}(f,v)<3t-2$.  By Lemma~\ref{lemma-zone}, there exists a $3t$-zone $\Lambda'\subset \Sigma$ around $f$ in $T$ such that $R\subset \Lambda'$.
Note that all vertices of $R$ are contained in a single face $g$ of $T-R$, and since $\Lambda'$ is bounded by a cycle $C'$ in $G$, the face $g$ is a subset of $\Lambda'$.

\begin{figure}
\center{\includegraphics{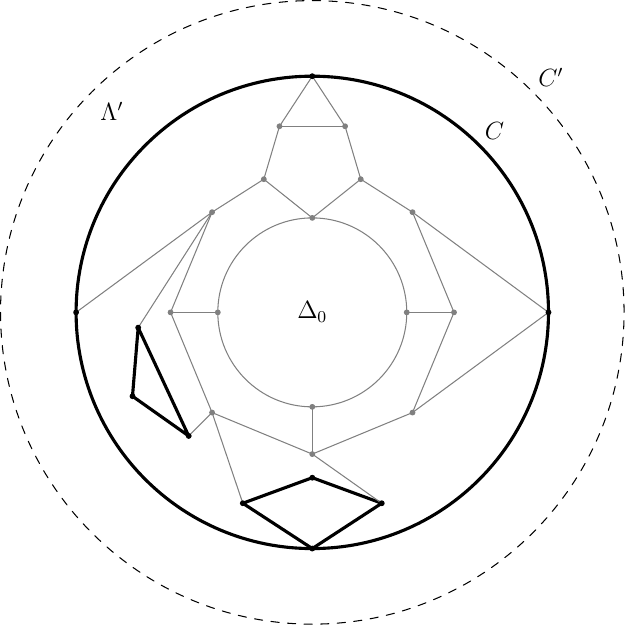}}
\caption{Face obtained by deleting vertices of $R$ (drawn in grey), with $t=3$.}\label{fig-rface}
\end{figure}

The face $g$ is not necessarily $2$-cell, but it either contains $f$ as a subset or separates $f$ from the boundary cycle $C'$ of $\Lambda'$.
Let $W$ be the boundary walk of $g$ contained in the component of $T-R$ that contains $C'$; note that $W$ is not necessarily a cycle, but it separates
$f$ from $C'$, and thus there exists a cycle $C\subseteq W$ that also separates $f$ from $C'$, see Figure~\ref{fig-rface}.
The open disk $\Lambda\subseteq\Lambda'$ bounded by $C$ contains $g$.
Let $S_1\subseteq S\setminus S_0$ consist of the cells $s_1$ such that the atom $\repr(s_1)$ is a subset of $\Lambda$.
Let $s'$ be the society with graph $s\cup\bigcup S_1$ and boundary $\partial s'$ consisting of the vertices of $C$ in order.
Let $S'=\{s'\}\cup S\setminus (\{s\}\cup S_1)$, and let $S'_0=\{s'\}\cup S_0\setminus \{s\}$.
Note that $S'$ is a segregation of $G$ with an arrangement $\rho'$ in $\Sigma$ that matches the arrangement $\rho$ on $S'\setminus\{s'\}$,
with the disk $\rho'(s')$ contained in the union of $\Lambda$ with the points representing $V(C)=\partial s'$.   We say that
$S'$, $S'_0$, $s'$, and $\rho'$ are obtained from $S$, $S_0$, $s$, and $\rho$ by a \emph{$t$-extension at $s$}.
Observe that $T(S'\setminus S'_0)$ is a subgraph of $T(S\setminus S_0)$, and since its face containing $s'$ is bounded by the cycle $C$, it follows that its embedding in $\Sigma$ is $2$-cell.

\begin{lemma}\label{lemma-vortext}
Let $p\ge 1$, $t\ge 2$ and $\phi\ge 18t+20p+2$ be integers.
Let $\TT$ be a tangle of order at least $\phi$ in a graph $G$ and let $S$ be a $\TT$-central segregation of $G$,
let $S_0\subseteq S$ be a set such that all elements of $S\setminus S_0$ are linked cells and all elements of $S_0$ are $p$-vortices,
and suppose that $S$ has an arrangement $\rho$ in a surface $\Sigma$ such that the embedding of $T=T(S\setminus S_0)$ is $2$-cell.
Let $\TT'$ be a respectful tangle of order $\phi$ in $T$ conformal with $\TT$.
Consider a society $s\in S$ and let $f=\repr(s)$ be the atom of $T$ representing it.
Suppose that $d_{\TT'}(f_1,f_2)>12t+2$ for any faces $f_1$ and $f_2$ of $T$ containing distinct $p$-vortices of $S_0$,
and $d_{\TT'}(f,f_1)>12t+2$ for any face $f_1$ containing a $p$-vortex of $S_0$ distinct from $s$.
Let $S'$, $S'_0$, $s'$, and $\rho'$ be obtained from $S$, $S_0$, $s$, and $\rho$ by a $t$-extension at $s$,
and let $T'=T(S'\setminus S'_0)$.
The following claims hold.
\begin{enumerate}
\item The society $s'$ is a $(3t+4p)$-vortex.
\item The embedding of $T'$ has a respectful tangle $\TT''$ of order $\phi-12t-2$ conformal with $\TT'$ (and thus also with $\TT$).
\item All $s_1,s_2\in S'\setminus\{s'\}$ satisfy $$d_{\TT'}(\repr(s_1),\repr(s_2)-12t-2\le d_{\TT''}(\repr(s_1),\repr(s_2))\le d_{\TT'}(\repr(s_1),\repr(s_2),$$
and $$d_{\TT'}(\repr(s_1),\repr(s)-12t-2\le d_{\TT''}(\repr(s_1),\repr(s'))\le d_{\TT'}(\repr(s_1),\repr(s).$$
\item The segregation $S'$ of $G$ is $\TT$-central.
\item If $s_1\in S\setminus S_0$ is a cell such that the atom $f_1=\repr(s_1)$ of $T$ representing it satisfies $d_{\TT'}(f,f_1)\le t$, then $s_1\subseteq s'$.
\end{enumerate}
\end{lemma}
\begin{proof}
Let $\Delta_0$, $\Lambda'$, and $\Lambda$ denote the disks, $R$ the set, and $g$ the face from the construction of the $t$-extension.

For the first claim, consider any partition of $\partial s'$ to two contiguous intervals
$I$ and $J$.  Let $v_1$ and $v_2$ be the endpoints of $I$.  By the choice of $s'$ and $\rho(s')$, for $i\in\{1,2\}$
there exists a simple $T$-normal curve drawn in $\rho(s')$ connecting $v_i$ with a vertex $w_i$ contained in the
boundary of $\Delta_0$ and intersecting $T$ in at most $t$ vertices.  Let $Z_0=(c_1\cup c_2)\cap T$.
If $\Delta_0=\rho(s)$, then let $Z=Z_0\cup X^{s}(w_1)\cup X^{s}(w_2)$, otherwise let $Z$ consist of $Z_0$ and the vertices
of $T$ drawn in the boundary of $\Delta_0$.
Observe that $Z$ separates $I$ from $J$ in $s'$ and that $|Z|\le 2t+\max(4p,t)\le 3t+4p$.  Consequently, $s'$ contains
at most $3t+4p$ pairwise vertex-disjoint paths from $I$ to $J$.  Since the choice of $I$ and $J$ was arbitrary,
this implies that $s'$ is a $(3t+4p)$-vortex.

Recall now that $\Lambda'$ is a $3t$-zone around $f$, and thus $\Lambda\subseteq \Lambda'$ is
a $3t$-zone around $f$ as well.  Since $T'$ is obtained from $T$ by clearing the zone $\Lambda$,
the second and third claims follow from Lemma~\ref{lemma-clearing}.

We use Lemma~\ref{lemma-central} to prove the fourth claim.  Since all societies in $S'$ are $(3t+4p)$-vortices
by the first claim, it suffices to consider a separation $(C,D)\in\TT$ of order at most $6t+8p+1$.
For $s_1\in S'\setminus\{s'\}$, we have $D\not\subseteq s_1$ since $s_1$ is contained
in the segregation $S$ which is $\TT$-central.  Let us now consider the society $s'$.
Note that $T'$ is a minor of $G$ with a model $\mu$ that uses no edges of $s'$, and $\TT$ is conformal
with the tangle $\TT''$ of $T'$.  By the second claim, the order of $\TT''$ is $\phi-12t-2\ge 6t+8p+2$
which is larger than the order of $(C,D)$.  By the definition of conformality, there exists a separation
$(C',D')$ of $T'$ belonging to $\TT''$ of such that $\mu(E(T'))\cap E(C)=\mu(E(C'))$ and $\mu(E(T'))\cap E(D)=\mu(E(D'))$.
Since $(C',D')\in\TT''$, Lemma~\ref{lemma-T2e} implies $E(D')\neq \emptyset$.
Since the model $\mu$ uses no edges of $s'$, the edges $\mu(E(D'))\subseteq E(D)$ do not belong to $s'$,
and thus $D\not\subseteq s'$.  As the choice of $(C,D)$ was arbitrary, Lemma~\ref{lemma-central} implies that $S'$
is $\TT$-central.

Let us now consider the final claim.  By the construction of $s'$, it suffices to show that $f_1$ is a subset of $\Lambda$.
Since $d_{\TT'}(f,f_1)\le t<\phi$, there exists a $t$-restraint $W\subseteq \R(T)$ with $\rad(f)\cup\rad(f_1)\subseteq \ins(W)$.
Observe that either $W$ intersects $f$, or $W$ contains a cycle $W'$ of length at most $2t$
(corresponding to a simple closed $T$-normal curve intersecting $T$ in at most $t$ vertices) with $f\subseteq\ins(W')$.
In either case, the choice of $\Delta_0$ implies that $W$ intersects $\Delta_0$.  Each two vertices of $W$ are joined by
a path of length less than $2t$, corresponding to a $T$-normal simple curve intersecting $T$ in less than $t$ vertices;
and thus $V(W)\subseteq R$.

If $W$ does not intersect $f_1$, then $W$ contains a cycle $W''$ of length at most $2t$
with $f_1\subseteq\ins(W'')$; the simple closed curve tracing $W''$ is drawn in the face $g$.  Note that
$\ins(W'')\subseteq\Lambda$, as otherwise $\Lambda\cup \ins(W)=\Sigma$, giving $d_{\TT'}(f,e)\le 3t<\phi$ for all $e\in E(T)$ in contradiction to
Lemma~\ref{lemma-far}.  Consequently, $f_1\subseteq \Lambda$.  
If $W$ intersects $f_1$, then either $f_1$ is a vertex and $f_1\in R\subseteq \Lambda$, or $f_1$ is a face.
In the latter case, $W$ passes through a vertex of $R$ incident with $f_1$, implying that $f_1\subseteq g\subseteq \Lambda$.
Hence, in all the cases, we have $f_1\subseteq\Lambda$ as required.
\end{proof}

\subsection{Putting the pieces together}

We now apply the vortex extension operation to the result of Lemma~\ref{lemma-localapi}, restricting the
attachments of non-major apices to vortices.

\begin{lemma}\label{lemma-restrapex}
For every graph $H$, surface $\Sigma$ in that $H$ cannot be embedded and for all integers $p_0\ge 1$, $k_0,a\ge 0$,
there exist integers $p$, $k$ and $\phi$ as follows.
Let $G$ be a graph with a tangle $\TT$ and let $A$ be a subset of $V(G)$ with $|A|\le a$.  Suppose that $\TT$
has order at least $\phi+|A|$ and let $S'$ be a $(\TT-A)$-central segregation
of $G-A$ with an arrangement $\rho$ in $\Sigma$.  Let $S'_0\subseteq S'$ be a set of size at most $k_0$
such that all societies of $S'_0$ are $p_0$-vortices and all elements of $S'\setminus S'_0$ are linked cells.
Let $\TT'$ be a respectful tangle in $T'=T(S'\setminus S'_0)$ of order at least $\phi$ conformal with $\TT-A$,
such that all faces $f_1$ and $f_2$ containing distinct societies of $S'_0$ satisfy $d_{\TT'}(f_1,f_2)\ge \phi$.
If $H$ is not a minor of $G$, then there exists a $(\TT-A)$-central segregation $S$ of $G-A$ of
type $(p,k)$ with an arrangement in $\Sigma$, and a set $S_0\subseteq S$ of size at most $k$
such that all societies of $S\setminus S_0$ are cells, satisfying the conditions (i) and (ii)
of Theorem~\ref{thm-strapex-local}.
\end{lemma}
\begin{proof}
Let $k=k_0+2a|E(H)|$ and $f(t)=(12t+2)(k+1)+1$.  Let $\theta$ be the constant of Lemma~\ref{lemma-localapi} applied for $H$, $\Sigma$, $a$, $k_0$ and $f$.
Let $p=4p_0+3\theta$ and $\phi=18\theta+20p+2+(12\theta+2)k$.

Let $t\le\theta$, $A_0\subseteq A$, and $S_1\subseteq S'$ be obtained by Lemma~\ref{lemma-localapi} applied to $S'$ and $S'_0$.
Let $S_1=\{s_1,\ldots,s_n\}$, where $n\le k$.  Let $\TT^0=\TT'$.
Let $(S^0,S^0_0,\rho^0)=(S',S'_0,\rho)$, and for $1\le i\le n$, let $S^i$, $S^i_0$, $s^i$, and $\rho^i$ be obtained
from $S^{i-1}$, $S^{i-1}_0$, $s_i$, and $\rho^{i-1}$ by a $t$-extension at $s_i$.  Note that by Lemma~\ref{lemma-vortext},
$s^i$ is a $p$-vortex, $T^i=T(S^i\setminus S^i_0)$ has a respectful tangle $\TT^i$ of order $\phi-(12t+2)i\ge 18t+20p+2$ conformal
with $\TT-A$, and the segregation $S^i$ of $G-A$ is $(\TT-A)$-central.  Letting $s^{i,j}=s_j$ if $j>i$ and $s^{i,j}=s^j$
if $j\le i$, Lemma~\ref{lemma-vortext} also implies that $d_{\TT^i}(\repr(s^{i,j_1}),\repr(s^{i,j_2}))\ge f(t)-(12t+2)i\ge 12t+3$
for all distinct $j_1,j_2\in\{1,\ldots, n\}$.  Finally, all cells $s\in S'\setminus S'_0$ such that
$d_{\TT'}(\repr(s),\repr(s_i))\le t$ (and thus also $d_{\TT^{i-1}}(\repr(s),\repr(s_i))\le t$) satisfy $s\subseteq s^i$.

According to the outcome of Lemma~\ref{lemma-localapi}, we conclude that all neighbors of vertices of $A\setminus A_0$
in $G-A$ belong to $\bigcup_{i=1}^n s^i$.  Hence, $S^n$ with its subset $S^n_0=\{s^1,\ldots,s^n\}$ satisfy the condition (i) of Theorem~\ref{thm-strapex-local}.
Since $\TT^n$ is a respectful tangle of order greater than three, every triangle $C$ in $T^n$ bounds a disk $\ins(C)\subset \Sigma$.
Let $L$ be the set of interiors of all inclusionwise-maximal disks $\ins(C)$ for triangles $C\in T^n$ such that the interior of $\ins(C)$ contains
a vertex of $T^n$.
Let $\{S_\Lambda:\Lambda\in L\}$ be a partition of $S_L=\{s\in S^n:\repr(s)\subseteq \bigcup L\}$ such that $\repr(s)\subseteq \Lambda$
for all $\Lambda\in L$ and $s\in S_\Lambda$.  For $\Lambda\in L$, let $s_\Lambda=\bigcup S_\Lambda$ be a cell with $\partial s_\Lambda$ consisting
of the vertices of the triangle bounding $\Lambda$.  Let $S=(S^n\setminus S_L)\cup \{s_\Lambda:\Lambda\in L\}$
and let $S_0=(S^n_0\setminus S_L)\cup \{s_\Lambda:\Lambda\in L, S_\Lambda\cap S^n_0\neq\emptyset\}$.
This ensures that $S$ and $S_0$ satisfy both conditions (i) and (ii) of Theorem~\ref{thm-strapex-local}.
\end{proof}

We are now ready to combine Theorem~\ref{thm-strbasic-local-better} and Lemma~\ref{lemma-restrapex} to
finish the argument by proving the local structure result.

\begin{proof}[Proof of Theorem~\ref{thm-strapex-local}]
Let $p_0$ and $k_0$ be the constants of Theorem~\ref{thm-strbasic-local-better} for $H$.
Let $p_1$, $k_1$ and $\phi_1$ be positive non-decreasing functions in variable $a_0$ such that $p_1(a_0)$,
$k_1(a_0)$ and $\phi_1(a_0)$ are greater or equal to the corresponding constants given by Lemma~\ref{lemma-restrapex}
applied to $H$, $p_0$, $k_0$, $a_0$ and for each surface $\Sigma$ in that $H$ cannot be embedded.
Let $\alpha$ and $\theta_1$ be the constants given by Theorem~\ref{thm-strbasic-local-better} for $H$ and the function $\phi_1$.
Let $k=k_1(\alpha)$, $p=p_1(\alpha)$, $a=\alpha$, and $\theta=\theta_1+\alpha$.  Theorem~\ref{thm-strapex-local} then follows by applying
Lemma~\ref{lemma-restrapex} to the segregation obtained by Theorem~\ref{thm-strbasic-local-better}.
\end{proof}

\bibliographystyle{siam}
\bibliography{apexalg}

\begin{thebibliography}{10}

\bibitem{AppHak1}
{\sc K.~Appel and W.~Haken}, {\em Every planar map is four colorable, {P}art
  {I}: discharging}, Illinois J. of Math., 21 (1977), pp.~429--490.

\bibitem{AppHakKoc}
{\sc K.~Appel, W.~Haken, and J.~Koch}, {\em Every planar map is four colorable,
  {P}art {II}: reducibility}, Illinois J. of Math., 21 (1977), pp.~491--567.

\bibitem{demainesurf}
{\sc E.~Demaine, F.~Fomin, M.~Hajiaghayi, and D.~Thilikos}, {\em Subexponential
  parameterized algorithms on bounded-genus graphs and h-minor-free graphs}, J.
  ACM, 52 (2005), pp.~866--893.

\bibitem{Demaine2008}
{\sc E.~D. Demaine and M.~Hajiaghayi}, {\em Linearity of grid minors in
  treewidth with applications through bidimensionality}, Combinatorica, 28
  (2008), pp.~19--36.

\bibitem{dvokawalg}
{\sc Z.~Dvorak and K.~Kawarabayashi}, {\em List-coloring embedded graphs}, in
  Proceedings of the Twenty-Fourth Annual {ACM-SIAM} {S}ymposium on {D}iscrete
  {A}lgorithms, {SODA} 2013, {N}ew {O}rleans, {L}ouisiana, {USA}, {J}anuary
  6-8, 2013, SIAM, 2013, pp.~1004--1012.

\bibitem{erdosrubintaylor1979}
{\sc P.~Erd\H{o}s, A.~L. Rubin, and H.~Taylor}, {\em Choosability in graphs},
  Congr. Numer., 26 (1980), pp.~125--157.

\bibitem{garey1979computers}
{\sc M.~Garey and D.~Johnson}, {\em {Computers and Intractability: A Guide to
  the Theory of NP-completeness}}, WH Freeman \& Co. New York, NY, USA, 1979.

\bibitem{GRS}
{\sc J.~F. Geelen, R.~B. Richter, and G.~Salazar}, {\em Embedding grids in
  surfaces}, European J. Combin., 25 (2004), pp.~785--792.

\bibitem{minalg2}
{\sc M.~Grohe, K.~Kawarabayashi, and B.~A. Reed}, {\em A simple algorithm for
  the graph minor decomposition - logic meets structural graph theory}, in
  Proceedings of the Twenty-Fourth Annual {ACM-SIAM} {S}ymposium on {D}iscrete
  {A}lgorithms, {SODA} 2013, {N}ew {O}rleans, {L}ouisiana, {USA}, {J}anuary
  6-8, 2013, SIAM, 2013, pp.~414--431.

\bibitem{chooscomplex}
{\sc S.~Gutner}, {\em The complexity of planar graph choosability}, Discrete
  Math., 159 (1996), pp.~119--130.

\bibitem{listcoltw}
{\sc K.~Jansen and P.~Scheffler}, {\em Generalized coloring for tree-like
  graphs}, Discrete Appl. Math., 75 (1997), pp.~135--155.

\bibitem{minalg1}
{\sc K.~Kawarabayashi and P.~Wollan}, {\em A simpler algorithm and shorter
  proof for the graph minor decomposition}, in STOC, L.~Fortnow and S.~P.
  Vadhan, eds., ACM, 2011, pp.~451--458.

\bibitem{plansat}
{\sc D.~Lichtenstein}, {\em Planar formulae and their uses}, SIAM J. Comput.,
  11 (1982), pp.~329--343.

\bibitem{lukethe}
{\sc L.~Postle}, {\em 5-List-Coloring Graphs on Surfaces}, PhD thesis, Georgia
  Institute of Technology, 2012.

\bibitem{rsst}
{\sc N.~Robertson, D.~P. Sanders, P.~Seymour, and R.~Thomas}, {\em The four
  colour theorem}, J. Combin. Theory, Ser.~B, 70 (1997), pp.~2--44.

\bibitem{rs1}
{\sc N.~Robertson and P.~Seymour}, {\em Graph minors. {I}. {E}xcluding a
  forest}, Journal of Combinatorial Theory, Series B, 35 (1983), pp.~39--61.

\bibitem{RSey}
{\sc N.~Robertson and P.~D. Seymour}, {\em Graph {M}inors. {V}. {E}xcluding a
  planar graph}, J. Combin. Theory, Ser.~B, 41 (1986), pp.~92--114.

\bibitem{rs7}
\leavevmode\vrule height 2pt depth -1.6pt width 23pt, {\em Graph {M}inors
  {VII}. {D}isjoint paths on a surface}, J. Combin. Theory, Ser.~B, 45 (1988),
  pp.~212--254.

\bibitem{rs9}
\leavevmode\vrule height 2pt depth -1.6pt width 23pt, {\em Graph {M}inors.
  {IX}. {D}isjoint crossed paths}, J. Combin. Theory, Ser.~B, 49 (1990),
  pp.~40--77.

\bibitem{rs10}
\leavevmode\vrule height 2pt depth -1.6pt width 23pt, {\em Graph {M}inors. {X}.
  {O}bstructions to tree-decomposition}, J. Combin. Theory, Ser.~B, 52 (1991),
  pp.~153--190.

\bibitem{rs11}
\leavevmode\vrule height 2pt depth -1.6pt width 23pt, {\em Graph {M}inors.
  {XI}. {C}ircuits on a surface}, J. Combin. Theory, Ser.~B, 60 (1994),
  pp.~72--106.

\bibitem{rs12}
\leavevmode\vrule height 2pt depth -1.6pt width 23pt, {\em Graph {M}inors.
  {XII}. {D}istance on a surface}, J. Combin. Theory, Ser.~B, 64 (1995),
  pp.~240--272.

\bibitem{rs14}
\leavevmode\vrule height 2pt depth -1.6pt width 23pt, {\em Graph {M}inors.
  {XIV}. {E}xtending an embedding}, J. Combin. Theory, Ser.~B, 65 (1995),
  pp.~23--50.

\bibitem{rs15}
\leavevmode\vrule height 2pt depth -1.6pt width 23pt, {\em Graph {M}inors.
  {XV}. {G}iant steps}, J. Combin. Theory, Ser.~B, 68 (1996), pp.~112--148.

\bibitem{rs17}
\leavevmode\vrule height 2pt depth -1.6pt width 23pt, {\em Graph {M}inors:
  {XVII}. {T}aming a vortex}, J. Combin. Theory, Ser.~B, 77 (1999),
  pp.~162--210.

\bibitem{robertson2003graph}
\leavevmode\vrule height 2pt depth -1.6pt width 23pt, {\em {Graph {M}inors.
  {XVI}. {E}xcluding a non-planar graph}}, J. Combin. Theory, Ser.~B, 89
  (2003), pp.~43--76.

\bibitem{rs20}
\leavevmode\vrule height 2pt depth -1.6pt width 23pt, {\em Graph {M}inors.
  {XX}. {W}agner's conjecture}, J. Combin. Theory, Ser.~B, 92 (2004),
  pp.~325--357.

\bibitem{rs23}
\leavevmode\vrule height 2pt depth -1.6pt width 23pt, {\em Graph {M}inors.
  {XXIII}. {N}ash-{W}illiams' immersion conjecture}, J. Combin. Theory, Ser.~B,
  100 (2010), pp.~181--205.

\bibitem{thomassen1994}
{\sc C.~Thomassen}, {\em Every planar graph is 5-choosable}, J. Combin. Theory,
  Ser.~B, 62 (1994), pp.~180--181.

\bibitem{vizing1976}
{\sc V.~G. Vizing}, {\em Vertex colorings with given colors (in russian)},
  Metody Diskret. Analiz, Novosibirsk, 29 (1976), pp.~3--10.

\bibitem{wagner}
{\sc K.~Wagner}, {\em {\"U}ber eine {E}igenschaft der ebenen {K}omplexe},
  Mathematische Annalen, 114 (1937), pp.~570--590.

\end{thebibliography}

\baselineskip 11pt
\vfill
\noindent
This material is based upon work supported by the National Science Foundation.
Any opinions, findings, and conclusions or
recommendations expressed in this material are those of the authors and do
not necessarily reflect the views of the National Science Foundation.
\end{document}